\newtheorem{theorem}{Theorem}[section]
\newtheorem{proposition}[theorem]{Proposition}
\newtheorem{lemma}[theorem]{Lemma}
\newtheorem{corollary}[theorem]{Corollary}
\theoremstyle{remark}
\newtheorem*{remark}{{\bf Remark}}
\newcommand{\al}{\alpha}
\newcommand{\be}{\begin{equation}}
\newcommand{\ee}{\end{equation}}
\newcommand{\tr}{{\rm tr}}
\newcommand{\non}{{\nonumber}}
\newcommand{\vol}{\rm vol}
\newcommand{\bea}{\begin{eqnarray}}
\newcommand{\eea}{\end{eqnarray}}
\numberwithin{equation}{section}
\begin{document}

\title{Asymptotic Gap Probability Distributions of the Gaussian Unitary Ensembles and Jacobi Unitary Ensembles}
\author{Shulin Lyu$^{a,}$\thanks{lvshulin1989@163.com} ,
Yang Chen$^{b,}$\thanks{yangbrookchen@yahoo.co.uk}\, and Engui Fan$^{c,}$\thanks{ Corresponding author and faneg@fudan.edu.cn}\\
\footnotesize{$^{a}$ School of Mathematics (Zhuhai Campus), Sun Yat-Sen University, Guangzhou, China}\\
\footnotesize{$^{b}$ Department of Mathematics, University of Macau, Avenida da Universidade, Taipa, Macau, China}\\
\footnotesize{$^{c}$ School of Mathematical Sciences and Key Laboratory of Mathematics for Nonlinear Science,}\\
\footnotesize{Fudan University, Shanghai 200433, China}}


\date{\today}
\maketitle
\begin{abstract}
In this paper, we address a class of problems in unitary ensembles.
Specifically, we study the probability that a gap symmetric about 0, i.e. $(-a,a)$  is found in the Gaussian unitary ensembles (GUE)
and the Jacobi unitary ensembles (JUE) (where in the JUE, we take the parameters $\alpha=\beta$).
By exploiting the even parity of the weight, a doubling of the interval to $(a^2,\infty)$ for the GUE, and $(a^2,1)$, for the (symmetric) JUE, shows
that the gap probabilities maybe determined as the product of the smallest eigenvalue distributions of the LUE with parameter $\alpha=-1/2,$
and $\alpha=1/2$ and the (shifted) JUE with weights $x^{1/2}(1-x)^{\beta}$ and $x^{-1/2}(1-x)^{\beta}$
\\
The $\sigma$ function, namely, the derivative of the log of the smallest eigenvalue distributions of the finite-$n$ LUE or the JUE,
satisfies the Jimbo-Miwa-Okamoto $\sigma$
form of $P_{V}$ and $P_{VI}$,  although in the shift Jacobi case, with the weight $x^{\alpha}(1-x)^{\beta},$ the $\beta$ parameter does not show up in the equation.
We also obtain the asymptotic expansions for the smallest eigenvalue distributions of the Laguerre unitary and Jacobi unitary ensembles
after appropriate double scalings, and obtained the constants in the asymptotic expansion of the gap probablities, expressed in term of the Barnes $G-$ function valuated at special point.
\end{abstract}

\section{Introduction}
In the work of Adler and Van Moerbeke \cite{Adler}, the largest eigenvalue distribution of ensembles of $n\times n$ random matrices generated by Gaussian, Laguerre and Jacobi weights for general values of the symmetry parameter $\beta,$ (not to be confused with the $\beta$ parameter in the Jacobi weight), has been systematically studied, from the perspective of differential operators involving multiple time variables.

The gap probabilities that are studied in this paper, the unitary case, denoted by $\mathbb{P}(a,n)$, are represented as  Fredholm determinant of an integral operator, from the early papers of Mehta, Gaudin and Dyson \cite{Dyson1962JMP}, \cite{DysonMehta1963}, \cite{Gaudin1961}, \cite{MehtaDyson1963JMP}, and \cite{MehtaGaudin1960}.

In \cite{JimboMiwaMoriSato1980}, the gap probability, where a union of disjoint intervals is free of eigenvalues, the integral operator has the sine kernel $K(x,y):=\frac{\sin(x-y)}{\pi (x-y)}$. The (multiple-) gap probability itself was obtained in an expansion in terms of the resolvent of the integral equation. In a tour de force computations, JMMS showed that in the single interval case where $(-b,b)$
is free of eigenvalues, the quantity $\sigma(\tau):=\tau\frac{d}{d\tau}{\rm log}{\rm det}(I-K_{(-b,b)}),~\tau=2b$, satisfies a second order non-linear differential equation.

Tracy and Widom \cite{TracyWidom1994} studied, the finite $n$ version of such problems, namely, the distribution of the smallest
eigenvalue in the Laguerre unitary ensembles, and the largest eigenvalue distribution of the Gaussian unitary ensembles starting from the
Christoffel-Darboux or Reproducing  Kernel, constructed out of the ``natural" orthogonal polynomials, namely the Laguerre and Hermite polynomials, respectively. Through a series of differentiation formulas, Tracy and Widom found the finite $n$ version of the differential equations satisfied by the resolvent and the related $\sigma$ form in the Laguerre and Hermite cases.

An elementary method to deal with $\mathbb{P}(a,n)$ is to write it as a Hankel determinant, or determinant of moment matrices where the moments are generated by the classical weight function, such as
${\rm e}^{-x^2}$, $x^{\alpha}{\rm e}^{-x},$ ${\it multiplied}$  by one minus the characteristic function of an interval $J$.
It is clear from such determinant representations that one could also study the ${\it non-standard}$ polynomials orthogonal with respect to
the classical weight multiplied by $(1-\chi_{(-a,a)}(x))$ instead of writing such quantities as a Fredholm determinant $\det(I-K_{(-a,a)})$. Here $\chi_{(-a,a)}(x)$ is characteristic function of the interval $(a,a)$.

In our approach, the Hankel determinants is expressed as the product of the square of the $L^2$ norms, $h_k(a),$ of the ${\it non-standard}$ orthogonal polynomials namely, $\prod_{k=0}^{n-1}h_k(a).$

Based on the ladder operators adapted to the ${\it non-standard}$ orthogonal polynomials, and from the associated supplementary conditions and a sum-rule, satisfied by certain rational functions (depending on the degree),  a series of difference and differential equations can be derived to give a description of $\mathbb{P}(a,n)$.
For detailed descriptions and applications of such formalism, see for example, \cite{BasorChenEhrhardt2010}, \cite{BasorChenZhang2012}, \cite{ChenChenJMP2015}, \cite{ChenChenFanJMP2016}, \cite{ChenFeigin2006}, \cite{ChenIts2010}, \cite{ChenMcKay2012}, and \cite{ChenPruessner2005}.

In \cite{CaoChenGriffin2014}, such an approach was taken to study the gap probability problem for the Gaussian unitary ensembles (the symmetric situation), namely, the probability that the interval $J:=(-a,a)$ is free of eigenvalues.
Unfortunately the authors have made a mistake: One term was missed in an equation obtained from the sum-rule. We  present the correct version here, but not the derivation and refer the readers to algebraically
more complicated case where the back-ground weight is the symmetric Beta density $(1-x^2)^{\beta}$,$\;\;x\in[-1,1]$ \cite{MinChen2017}. That is,  we study polynomials orthogonal
with respect to the deformed weight $(1-x^2)^{\beta}(1-\chi_{(-a,a)}(x)),\;\;a<1,$ where for convenience we take $\beta>0.$
Here and what follows, $\chi_{(-a,a)}(x)$ is the characteristic (or the indicator)function of the interval $(-a,a)$, namely $\chi_{(-a,a)}(x)=1$ if $x\in(-a,a)$
and $\chi_{(-a,a)}(x)=0,\;$ if $x\notin(-a,a)$.
We note that our approach is entirely elementary, up to some distributional objects. If $\theta(x)$ is the step function, which takes value 1 if $x>0$ and 0 if $x\leq 0,$ then $\frac{d}{dx}\theta(x)=\delta(x),$ the Dirac Delta.

In the case of the ${\it symmetric}$ Jacobi unitary ensembles generated by the weight $(1-x^2)^{\beta},~|x|<1$. The
probability that a gap $(-a,a),~|a|<1$, is formed, can also be found via an ``interval doubling", exploiting the parity of
the orthogonal polynomials, to the arrive at the Hankel determinants generated by the shifted Jacobi weight $x^{\alpha}(1-x)^{\beta},$ over the
interval $[t,1].$
Although the asymptotic expansion of the large gap probability can be relatively easily obtained, the determination of the constant of integration is  not that straightforward.

In our derivation, we will make use of Dyson's Coulomb fluid approximation \cite{Dyson1962JMP}.
We give a brief description of it. A unitary ensemble of Hermitian matrices $M=(M_{ij})_{n\times n}$ has probability density
\begin{equation*}
\begin{aligned}
&p(M)dM\propto {\rm e}^{-\tr\;v(M)}\vol(dM),\\
&{\vol(dM)}=\prod\limits_{i=1}^{n}dM_{ii}\prod\limits_{1\leq j<k\leq n}d({\rm Re} M_{jk}) d ({\rm Im} M_{jk}).
\end{aligned}
\end{equation*}
Here $v(M)$ is a matrix function \cite{Higham2008} defined via Jordan canonical form and $\vol(dM)$ is called the volume element \cite{Hua1963}.
Under an eigenvalue-eigenvector decomposition, the joint probability density function of the eigenvalues $\{x_k\}_{k=1}^n$ of this ensemble is given by \cite{Mehta2006}
\begin{gather}\label{jpdf}
\frac{1}{D_n}\;\frac{1}{n!}{\prod\limits_{1\leq i<j\leq n}\left(x_{j}-x_{i}\right)}^{2}\prod\limits^{n}_{k=1}w(x_k),
\end{gather}
where $w(x):={\rm e}^{-v(x)}$
is a positive weight function supported on $[A,B]$ with finite moments
$$\mu_k:=\int_A^B x^{k}w(x)dx,\qquad k=0,1,2,\cdots.$$
The normalization constant $D_n$ can be evaluated as the determinant of the Hankel (or moment) matrix (see \cite{Szego1939}), i.e.
\begin{align*}
D_n:=&\frac{1}{n!}\int_{[A,B]^{n}}{\prod\limits_{1\leq i<j\leq n}\left(x_{j}-x_{i}\right)}^{2}\prod\limits^{n}_{k=1}w(x_k)dx_{k},\\
=&\det\left(\mu_{i+j}\right)_{i,j=0}^{n-1}.
\end{align*}


If we interpret $\{x_k\}_{k=1}^n$ as the positions of $n$ identically charged particles,
then, for sufficiently large $n$, the particles can be approximated as a continuous fluid with a density $\rho(x)$.
We assume $v(x):=\log w(x)$ is convex, so that $\rho(x)$ is supported on a single interval $(a,b)$. Note that $(a,b)$ has nothing to do with $(A,B)$.
See \cite{ChenIsmail1997JPA:MG} for a detailed analysis.
Such $\rho(x)$ is determined by minimizing the functional
\begin{gather*}
F[\rho]:=\int_a^b \rho(x)v(x)dx-\int_a^b\int_a^b\rho(x)\log|x-y|\rho(y)dxdy,
\end{gather*}
subject to
\begin{gather*}
\int_a^b \rho(x)dx=n.
\end{gather*}
See Dyson's works \cite{Dyson1962JMP}.
According to Frostman's lemma \cite{Tsuji1959}, the equilibrium density $\rho(x)$ satisfies the condition
\begin{gather*}
v(x)-2\int_a^b \log|x-y|\rho(y)dy=\textbf{A},{\qquad} x\in[a,b],
\end{gather*}
where $\textbf{A}$ is the Lagrange multiplier that fixes the condition $\int_{a}^{b}\rho(x)dx=n.$
The derivative of this equation with respect to $x$ gives rise to the singular integral equation
\begin{gather*}
2\:P\int_a^b \frac{\rho(y)\,dy}{x-y}=v'(x),\qquad x\in[a,b],
\end{gather*}
where $P$ denotes the Cauchy principal value.
According to the theory of singular integral equations \cite{Mikhlin1957}, we find
\begin{gather}\label{densitycon.}
\rho(x)=\frac{1}{2\pi^2}\sqrt{\frac{b-x}{x-a}}\,\mathcal{P}\int_a^b \frac{ v'(y)}{y-x}\sqrt{\frac{y-a}{b-y}}dy.
\end{gather}
Hence, the normalization condition $\int_a^b \rho(x) dx=n$ becomes
\begin{gather*}
\frac{1}{2\pi}\int_a^b \sqrt{\frac{y-a}{b-y}}v'(y)dy= n.
\end{gather*}

Based on the above Coulomb fluid interpretation and the notion of linear statistics, it is proved in \cite{ChenLawrence1998}, for sufficiently large $n$, that the monic polynomials orthogonal with respect to $w(x)={\rm e}^{-v(x)}$ can be approximated as follows:
\begin{subequations}\label{PnS1S2general}
\begin{equation}
P_n(z)\sim {\rm e}^{-S_1(z)-S_2(z)},\qquad z\in\mathbb{C}\backslash[a,b],
\end{equation}
where
\begin{equation}
S_1(z)=\frac{1}{4}\log\left[\frac{16(z-a)(z-b)}{(b-a)^2}\left(\frac{\sqrt{z-a}-\sqrt{z-b}}{\sqrt{z-a}+\sqrt{z-b}}\right)^2\right],
\end{equation}
\begin{equation}
\begin{aligned}
S_2(z)=&-n\log\left(\frac{\sqrt{z-a}+\sqrt{z-b}}{2}\right)^2\\
&+\frac{1}{2\pi}\int_a^b \frac{v(x)}{\sqrt{(b-x)(x-a)}}\left[\frac{\sqrt{(z-a)(z-b)}}{x-z}+1\right]dx.
\end{aligned}
\end{equation}
This is a simpler representation for ${\rm e}^{-S_1}$:
\begin{equation}
{\rm e}^{-S_1(z)}=\frac{1}{2}\left[\left(\frac{z-b}{z-a}\right)^{\frac{1}{4}}+\left(\frac{z-a}{z-b}\right)^{\frac{1}{4}}\right].
\end{equation}
\end{subequations}

This paper is organized as follows. In section 2, we give a quick summery of what was know regarding the gap probability of GUE, and finish with
an elementary identity expressing the desired Hankel determinant as the products of Hankel determinants generated by $x^{\pm 1/2}\;{\rm {exp}}(-x),$
over $a^2<x<\infty.$
Section 3 is devoted to the computation of the smallest eigenvalue distribution of the LUE and we obtain an asymptotic expansion of the large gap probability, including the hard-to-come-by constant term. In section 4,
we note the elementary fact that for any polynomials orthogonal with respect to an even weight, a doubling process which folds interval, for example, from $(-\infty,\infty)$ to $(0,\infty)$, and $(-1,1)$ to $(0,1)$, transforms the problem with two discontinuities, (due to $\chi_{(-a,a)}(x)$)  to problems with one gap. This simplifies things considerably. From these ${\it one~gap}$ problems, combined with the large $n$ asymptotic expansion of the ${\it deformed}$ orthogonal polynomials, we compute the constants, that appear in the asymptotic expansion. We investigate the smallest eigenvalue distribution
of JUE in section 5, and found the asymptotic expansion for large gap probability, together with the constant term. We present in section 6, the asymptotic gap probability of the (symmetric) Jacobi ensembles where the back ground weight reads $(1-x^2)^{\beta},~|x|<1. $
\section{Gap Probability of the Gaussian Unitary Ensembles}
The weight function reads
\begin{gather*}
w_{0}(x)=\mathrm{e}^{-x^2},\qquad x\in\mathbb{R}.
\end{gather*}
Define
\begin{gather*}
w(x,a):=w_{0}(x)(1-\chi_{(-a,a)}(x)).
\end{gather*}
Hence $w(x,a)$ is ${\rm e}^{-x^2}$ for $x\in(-\infty,-a)\bigcup(\infty,a).$

According to \eqref{jpdf} and the theory of orthogonal polynomials \cite{Mehta2006},
we know that the probability that $(-a,a)$ is free of eigenvalues in the Gaussian unitary ensembles is given by
\begin{align*}
\mathbb{P}(a,n)=&\frac{\int_{\mathbb{R}^n}\frac{1}{n!}{\prod\limits_{1\leq i<j\leq n}\left(x_{j}-x_{i}\right)}^{2}\prod\limits^{n}_{k=1}w(x_k,a)dx_k
}{\int_{\mathbb{R}^n}\frac{1}{n!}{\prod\limits_{1\leq i<j\leq n}\left(x_{j}-x_{i}\right)}^{2}\prod\limits^{n}_{k=1}w_0(x_k)dx_k
}\\
=&
\frac{\det\left(\int_{\mathbb{R}}x^{i+j}w(x,a)dx\right)_{i,j=0}^{n-1}}
{\det\left(\int_{\mathbb{R}}x^{i+j}w_{0}(x)dx\right)_{i,j=0}^{n-1}}\\
=:&\frac{D_n(a)}{D_n(0)}=\frac{\prod_{j=0}^{n-1}h_j(a)}{\prod_{j=0}^{n-1}h_j(0)}.
\end{align*}
Here $h_j(a)$ is the square of the $L^2$ norm of the monic polynomials orthogonal with respect to $w(x,a)$:
\begin{gather}\label{orth.poly.}
\int_{\mathbb{R}}P_{j}(x;a)P_{k}(x;a)w(x,a)dx=h_{j}(a)\delta_{jk},\;\;j,k=0,1,2,\ldots.
\end{gather}
where $P_{j}(x;a)$ can be normalized (since the weight is even) as \cite{Chihara}
\begin{gather*}
P_{j}(x;a)=x^{j}+\mathrm{p}(j,a)x^{j-2}+\cdots+P_{j}(0;a).
\end{gather*}
In addition, note that $D_n(0)$ has the following explicit expression (see \cite{Mehta2006}, p. 321)
\begin{gather*}
D_n(0)=(2\pi)^{n/2}2^{-n^2/2}G(n+1),
\end{gather*}
with $G(\cdot)$ denoting the Barnes-G function, defined by the functional equation\\
$G(z+1)=\Gamma(z)G(z),$ with $\;G(1)=1.$ See \cite{Voros1987} for a detail description.
%
We shall see that $\mathrm{p}(n,a)$ plays an important role.

\begin{remark}
The dependence of $P_n(x;a)$ on $a$ is seen from its determinant representation in terms of the moments,or the Heine Formula, see \cite{Szego(1939)} eq.(2.2.10).
\begin{align*}
P_{n}(z;a)=&\frac{1}{D_n(a)}\frac{1}{n!}\int_{\mathbb{R}^{n}}
\prod\limits_{1\leq i<j\leq n}(x_{j}-x_{i})^2\prod\limits^{n}_{k=1}(z-x_k)w(x_k,a)dx_k\\
=&\frac{\det \left(\int_\mathbb{R}x^{i+j}(z-x)w(x,a)dx\right)_{i,j=1}^{n-1}}{\det \left(\int_\mathbb{R}x^{i+j}w(x,a)dx\right)_{i,j=1}^{n-1}}.
\end{align*}
\end{remark}
\begin{remark}
It should also be pointed out that $P_n(z;a) $ contains only the terms $x^{n-j}$, $j\leq n$ and even, since the weight function $w(x,a)$ defined on $\mathbb{R}$ is even. This implies that
\begin{gather*}
P_n(-x;a)=(-1)^n P_n(x;a),\qquad P_n(0;a)P_{n-1}(0;a)=0.
\end{gather*}
\end{remark}

\begin{remark} On could have written the gap probability as $\det(\mathbb{I}-\mathbb{K}_(a,a))$ as in per \cite{JimboMiwaMoriSato1980}, where $K$ is the Christoffel-Darboux kernel which acts on functions as follows:
$$\int_{-a}^{a}\mathbb{K}_n(x,y)f(y)dy.$$ In this setting the kernel is the re-producing kernel
constructed out of the ``free" or ``unperturbed" orthogonal polynomials. For the finite $n$ problem it is the Hermite polynomials. Under ``double scaling", see Theorem 2.2, this becomes becomes the sine kernel.
\end{remark}

\subsection{The $\sigma$ form of Painlev\'{e} V}
We state here
the difference and differential equations satisfied by
\begin{gather*}
\sigma_n(a):=a\frac{d}{da}\log \mathbb{P}(a,n)=a\frac{d}{da}\log D_n(a).
\end{gather*}
Note that $\mathbb{P}(a,n)=\tfrac{D_n(a)}{D_n(0)}$.

\begin{theorem} 
For a fixed $a$, $\sigma_n$ satisfies the second order difference equation
\begin{equation}\label{sigmandiff}
\begin{aligned}
&\left[(\sigma_{n-1}-\sigma_{n})\left(\sigma_{n}-\sigma_{n+1}-2 a^2\right)\left((\sigma_{n-1}-\sigma_{n}-2a^2) (\sigma_{n}-\sigma_{n+1})+4n a^2\right)\right.\\
&\left.\qquad-8 a^4 (\sigma_{n}+n (\sigma_{n}-\sigma_{n+1}))\right]^2\\
   &=(\sigma_{n-1}-\sigma_n) (\sigma_{n}-\sigma_{n+1}) \left(\sigma_{n-1}-\sigma_{n}-2 a^2\right)^2
   \left(\sigma_{n}-\sigma_{n+1}-2 a^2\right)^2\\
   &\qquad\cdot\left((\sigma_{n-1}-\sigma_{n}) (\sigma_{n}-\sigma_{n+1})+8n a^2\right).
\end{aligned}
\end{equation}
and the following second order fourth degree differential equation:
\begin{equation}\label{sigmanODE}
\begin{aligned}
&16\left[a^2\sigma_n ''+4(\sigma_n+2na^2)(a \sigma_n '-\sigma_n-a^4)-4a^4\right]\\
&\qquad\cdot\left[a^4(\sigma_n '')^2-4 a^2 \sigma_n''\left(a \sigma_n '-\sigma_n \right)+4 \left(a \sigma_n '-\sigma_n-a^4\right) \left(a
   \sigma_n '-2 \sigma_n \right)^2\right]\\
   &=\left[a^2(\sigma_n '')^2+4\left((\sigma_n')^2+8n(a\sigma_n '-\sigma_n)-4a^2\right)(a \sigma_n '-\sigma_n-a^4)-16a^6\right]^2.
\end{aligned}
\end{equation}
\end{theorem}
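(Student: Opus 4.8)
The plan is to treat the non-standard polynomials $P_n(x;a)$ of \eqref{orth.poly.} by the ladder-operator (Chen--Ismail) formalism, exactly as in the Jacobi analogue \cite{MinChen2017}, and to read off the two equations from the resulting algebraic and differential relations. Since $w(x,a)={\rm e}^{-x^2}(1-\chi_{(-a,a)}(x))$ is even, the three-term recurrence is $xP_n=P_{n+1}+\beta_nP_{n-1}$ with vanishing diagonal coefficient, and this is the only structural simplification I need.

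First I would compute the coefficient functions $A_n(z),B_n(z)$ of the lowering relation $P_n'(z)=\beta_nA_n(z)P_{n-1}(z)-B_n(z)P_n(z)$. With $v_0(x)=x^2$ the quotient $\frac{v_0'(z)-v_0'(y)}{z-y}$ is the constant $2$, so the smooth parts are trivial, while the two jumps of $1-\chi_{(-a,a)}$ at $\pm a$ enter through the distributional identity $\frac{d}{dy}(1-\chi_{(-a,a)}(y))=\delta(y-a)-\delta(y+a)$ and, using $P_n(-a)=(-1)^nP_n(a)$, contribute even rational functions. One obtains
\begin{equation*}
A_n(z)=2+\frac{R_n}{z^2-a^2},\qquad B_n(z)=\frac{r_nz}{z^2-a^2},
\end{equation*}
with $R_n=\frac{2a\,{\rm e}^{-a^2}}{h_n}P_n^2(a;a)$ and $r_n=\frac{2\,{\rm e}^{-a^2}}{h_{n-1}}P_n(a;a)P_{n-1}(a;a)$. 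Feeding these into the supplementary conditions $(S_1)$ and $(S_2')$ and matching the pole at $z^2=a^2$ against the regular part yields the algebraic system $r_{n+1}+r_n=R_n$, $\beta_n=\tfrac12(r_n+n)$, $2a^2r_n^2=(r_n+n)R_nR_{n-1}$ and $r_n^2+2a^2r_n-\sigma_n=(r_n+n)(R_n+R_{n-1})$; here $\sigma_n$ appears because $\sum_{j=0}^{n-1}A_j$ generates $\sum_{j<n}R_j$. The bridge to $\sigma_n$ is the endpoint identity $\frac{d}{da}\log h_n=-R_n/a$, obtained by differentiating $h_n(a)=\int P_n^2\,w(x,a)\,dx$ and using $\partial_a(1-\chi_{(-a,a)})=-\delta(x-a)-\delta(x+a)$; summing over $j<n$ gives $\sigma_n=-\sum_{j=0}^{n-1}R_j$, hence $R_n=\sigma_n-\sigma_{n+1}$ and $R_{n-1}=\sigma_{n-1}-\sigma_n$.

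For the difference equation \eqref{sigmandiff} I would substitute $R_n=\sigma_n-\sigma_{n+1}$ and $R_{n-1}=\sigma_{n-1}-\sigma_n$ into the last two algebraic relations and view them as two quadratics in the single unknown $r_n$; their resultant is a relation in $\sigma_{n-1},\sigma_n,\sigma_{n+1}$ and $a$, which after collecting the symmetric combinations $R_nR_{n-1}$ and $R_n+R_{n-1}$ is \eqref{sigmandiff}. For the differential equation \eqref{sigmanODE} I would first produce the $a$-evolution of the auxiliary quantities: from $\beta_n=h_n/h_{n-1}$ and $\frac{d}{da}\log h_n=-R_n/a$ one gets $\frac{d}{da}\log\beta_n=-(R_n-R_{n-1})/a$, and with $\beta_n=\tfrac12(r_n+n)$ this becomes $a\,r_n'=-(r_n+n)(R_n-R_{n-1})$. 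Combining this with $2a^2r_n^2=(r_n+n)R_nR_{n-1}$ expresses $R_n-R_{n-1}$ and $R_nR_{n-1}$, hence $R_n+R_{n-1}$, in terms of $r_n$ and $r_n'$, and substitution into $r_n^2+2a^2r_n-\sigma_n=(r_n+n)(R_n+R_{n-1})$ followed by squaring gives the first-order relation $(r_n^2+2a^2r_n-\sigma_n)^2=a^2(r_n')^2+8a^2r_n^2(r_n+n)$. A further differential relation, coming from the total derivative $\frac{d}{da}P_n(a;a)=\partial_xP_n(a;a)+\partial_aP_n(a;a)$ with $\partial_xP_n(a;a)$ extracted from the regular part of the lowering relation at $x=a$, provides the second connection between $r_n$ and $\sigma_n,\sigma_n'$ needed to eliminate $r_n$; clearing the square root then closes the system into the second-order, fourth-degree equation \eqref{sigmanODE}.

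I expect the differential equation to be the main obstacle. The algebraic relations and the resultant that yields \eqref{sigmandiff} are conceptually routine (if bulky), but \eqref{sigmanODE} requires a complete and consistent set of local $a$-evolution relations, and the only delicate input is the careful handling of the moving endpoints $\pm a$, where the distributional derivatives live --- this is precisely the computation in which \cite{CaoChenGriffin2014} dropped a term. The final elimination of $r_n$, and of the neighbours $\sigma_{n\pm1}$, in favour of $\sigma_n,\sigma_n',\sigma_n''$ is large, and organising it so that the answer collapses to the stated fourth-degree form is the part that most needs care.
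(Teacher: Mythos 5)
Your proposal is correct and is essentially the argument the paper itself relies on: the paper states this theorem without proof, deferring to the ladder-operator (Chen--Ismail) computation of Cao--Chen--Griffin and its corrected Jacobi analogue, and your reconstruction of that route --- the coefficients $A_n(z)=2+\tfrac{R_n}{z^2-a^2}$, $B_n(z)=\tfrac{r_nz}{z^2-a^2}$, the relations $r_{n+1}+r_n=R_n$, $\beta_n=\tfrac12(r_n+n)$, $2a^2r_n^2=(r_n+n)R_nR_{n-1}$, $r_n^2+2a^2r_n-\sigma_n=(r_n+n)(R_n+R_{n-1})$, and the identification $R_n=\sigma_n-\sigma_{n+1}$ via $a\,\frac{d}{da}\log h_n=-R_n$ --- checks out and does eliminate $r_n$ to give \eqref{sigmandiff}. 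The only caution is that the elimination must be organised as in your square-root substitution rather than as a bare Sylvester resultant (which produces a different, lower-degree valid identity), and the second differential input for \eqref{sigmanODE} requires the careful endpoint/distributional bookkeeping you already flag.
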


\begin{theorem} Assume
\begin{gather*}
\tau:=2\sqrt{2n}\,a,
\end{gather*}
is fixed as $a\rightarrow0, n\rightarrow\infty$, and let
\begin{gather*}
\sigma(\tau):=\lim_{n\rightarrow\infty}\sigma_n\left(\frac{\tau}{2\sqrt {2n}}\right).
\end{gather*}
Then we obtain the differential equation satisfied by $\sigma(\tau)$
\begin{gather}\label{sigmaODE}
(\tau\sigma '')^2=-4 \left(\sigma-\tau\sigma'-(\sigma')^2\right)\left(\sigma-\tau \sigma '\right),
\end{gather}
which is the celebrated equation (7.104) of JMMS. Moreover, by changing variable $\tau\rightarrow\frac{i}{2}\tau$, $i$ denoting the imaginary unit, \eqref{sigmaODE} converts to the $\sigma$ form of Painlev\'{e} V \cite{JimboMiwa1981} with $\nu_0=\nu_1=\nu_2=\nu_3=0$.
\end{theorem}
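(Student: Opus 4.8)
The plan is to derive \eqref{sigmaODE} as the leading-order balance of the finite-$n$ equation \eqref{sigmanODE} under the double scaling $a=\tau/(2\sqrt{2n})$, assuming an asymptotic expansion $\sigma_n(a)=\sigma(\tau)+O(1/n)$ whose leading term is the sought limit. First I would record the chain rule: since $\tau=2\sqrt{2n}\,a$ one has $\frac{d}{da}=2\sqrt{2n}\,\frac{d}{d\tau}$, so that $\sigma_n'=2\sqrt{2n}\,\sigma'+O(n^{-1/2})$ and $\sigma_n''=8n\,\sigma''+O(1)$. The exponent in $\tau$ is chosen precisely so that the scale-invariant combinations stay $O(1)$: $a\sigma_n'=\tau\sigma'+o(1)$, $a^2\sigma_n''=\tau^2\sigma''+o(1)$, and $na^2=\tau^2/8$, whereas $a^2=O(n^{-1})$, $a^4=O(n^{-2})$, $a^6=O(n^{-3})$ are all negligible.

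Substituting these estimates into the two bracketed factors on the left of \eqref{sigmanODE}, I would check that both are $O(1)$: the first factor tends to $\tau^2\sigma''+4(\sigma+\tfrac{\tau^2}{4})(\tau\sigma'-\sigma)$ and the second to $\tau^4(\sigma'')^2-4\tau^2\sigma''(\tau\sigma'-\sigma)+4(\tau\sigma'-\sigma)(\tau\sigma'-2\sigma)^2$, so the entire left-hand side $16[\,\cdot\,][\,\cdot\,]$ is $O(1)$.

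The decisive point is that the factor squared on the right scales differently. In $a^2(\sigma_n'')^2+4\big((\sigma_n')^2+8n(a\sigma_n'-\sigma_n)-4a^2\big)(a\sigma_n'-\sigma_n-a^4)-16a^6$, both $a^2(\sigma_n'')^2=8n\,\tau^2(\sigma'')^2$ and $(\sigma_n')^2+8n(a\sigma_n'-\sigma_n)=8n\big[(\sigma')^2+\tau\sigma'-\sigma\big]$ are $O(n)$, so this factor equals $8nG+O(1)$ with $G:=\tau^2(\sigma'')^2+4\big((\sigma')^2+\tau\sigma'-\sigma\big)(\tau\sigma'-\sigma)$. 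Hence the right-hand side of \eqref{sigmanODE} is $64n^2G^2+O(n)$, while the left-hand side is only $O(1)$. Matching the coefficient of the top power $n^2$ forces $G=0$; using $\tau\sigma'-\sigma=-(\sigma-\tau\sigma')$ and $(\sigma')^2+\tau\sigma'-\sigma=-(\sigma-\tau\sigma'-(\sigma')^2)$ this is exactly \eqref{sigmaODE}. The one genuine subtlety, which I would emphasize, is that the limit is not a naive termwise passage in the equation (that would set an $O(1)$ quantity equal to an $O(n^2)$ one); rather it is the enforced vanishing of the dominant $O(n^2)$ contribution that produces the nonlinear ODE, and one should separately confirm that the ansatz $\sigma_n=\sigma+O(1/n)$ is self-consistent at the next order so that $\sigma(\tau)$ is well defined.

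Finally, to identify \eqref{sigmaODE} with the $\sigma$-form of $P_V$, I would introduce the new variable $s$ through $\tau=\tfrac{i}{2}s$ and set $\Sigma(s):=\sigma(\tau)$. The chain rule gives $\sigma'=-2i\,\Sigma'$ and $\sigma''=-4\,\Sigma''$, hence $\tau\sigma'=s\Sigma'$, $(\tau\sigma'')^2=-4s^2(\Sigma'')^2$, and $(\sigma')^2=-4(\Sigma')^2$. Inserting these into \eqref{sigmaODE} and dividing by $-4$ yields $(s\Sigma'')^2=(\Sigma-s\Sigma')\big(\Sigma-s\Sigma'+4(\Sigma')^2\big)$, which is the Jimbo-Miwa-Okamoto $\sigma$-form of $P_V$ with $\nu_0=\nu_1=\nu_2=\nu_3=0$ (the product $\prod_{k=0}^3(\Sigma'+\nu_k)$ collapsing to $(\Sigma')^4$), completing the statement.
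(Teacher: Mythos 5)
Your derivation is correct and is exactly the route the paper intends: the paper states this theorem without proof, but supplies the finite-$n$ equation \eqref{sigmanODE} precisely so that the double-scaling limit can be read off, and your observation that the right-hand side is $64n^2G^2+O(n)$ against an $O(1)$ left-hand side, forcing $G=0$, is the correct (and necessary) way to extract \eqref{sigmaODE}. The identification with the $\sigma$-form of $P_V$ via $\tau=\tfrac{i}{2}s$ and the factorization $[\Sigma-s\Sigma'+2(\Sigma')^2]^2-4(\Sigma')^4=(\Sigma-s\Sigma')(\Sigma-s\Sigma'+4(\Sigma')^2)$ also checks out.
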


We shall be concerned with the behavior of the gap probability for large variable $\tau.$
We can of course, make use of \eqref{sigmaODE} to investigate the asymptotic behavior of the gap probability. However, we found a convenient way motivated by the relation between Hermite and Laguerre polynomials (with special values of the parameter $\alpha$) given by Szeg\"{o} \cite{Szego1939} (formula (5.6.1)).

To begin with, we introduce a change of variable $x^2=t$ in the normalization relation for our orthogonal polynomials
\begin{align*}
h_j(a)=&\int_{-\infty}^{\infty}P_j^2(x;a)w(x,a)dx,\qquad w(x,a)=\theta (x^2-a^2){\rm e}^{-x^2},
\end{align*}
to find
\begin{align*}
h_{2n}(a)=&\int_{-\infty}^{\infty}P_{2n}^2(x;a)\theta (x^2-a^2){\rm e}^{-x^2}dx\\
=&2\int_{0}^{\infty}P_{2n}^2(\sqrt{t};a)\theta (t-a^2)\frac{{\rm e}^{-t}}{2\sqrt{t}}dt\\
=&\int_{a^2}^{\infty}\widetilde{P}_n^2(t;a)t^{-\frac{1}{2}}{\rm e}^{-t}dt=:\widetilde{h}_n(a),\\
\intertext{and}
h_{2n+1}(a)=&\int_{-\infty}^{\infty}P_{2n+1}^2(x;a)\theta (x^2-a^2){\rm e}^{-x^2}dx\\
=&2\int_{0}^{\infty}P_{2n+1}^2(\sqrt{t};a)\theta (t-a^2)\frac{{\rm e}^{-t}}{2\sqrt{t}}dt\\
=&\int_{a^2}^{\infty}\widehat{P}_n^2(t;a)t^{\frac{1}{2}}{\rm e}^{-t}dt=:\widehat{h}_n(a).
\end{align*}
Here $\widetilde{P}_n(t;a)$ and $\widehat{P}_n(t;a)$ are monic polynomials of degree $n$ in the variable $t$, orthogonal with respect to $t^{-\frac{1}{2}}{\rm e}^{-t}$ and $t^{\frac{1}{2}}{\rm e}^{-t}$ over $[a^2,\infty)$ respectively. Note that
\begin{gather*}
P_{2n}(x;a)=x^{2n}+p(2n,a)x^{2n-2}+\cdots+P_{2n}(0;a),
\end{gather*}
and
\begin{align*}
P_{2n+1}(x;a)=&x^{2n+1}+p(2n+1,a)x^{2n-1}+\cdots+{\rm const.}\,x\\
=&x\left(x^{2n}+p(2n+1,a)x^{2n-2}+\cdots+{\rm const.}\right).
\end{align*}

Define the Hankel determinants generated by $t^{-\frac{1}{2}}{\rm e}^{-t}$ and $t^{\frac{1}{2}}{\rm e}^{-t}$, $a^2\leq t<\infty$, by
\begin{align*}
\widetilde{D}_m(a):=&\det \left(\int_{a^2}^{\infty}t^{i+j}t^{-\frac{1}{2}}{\rm e}^{-t}dt\right)_{i,j=0}^{m-1}=\prod_{l=0}^{m-1}\widetilde{h}_l(a),\\
\widehat{D}_m(a):=&\det \left(\int_{a^2}^{\infty}t^{i+j}t^{\frac{1}{2}}{\rm e}^{-t}dt\right)_{i,j=0}^{m-1}=\prod_{l=0}^{m-1}\widehat{h}_l(a),
\end{align*}
respectively, we readily see that
\begin{gather*}
D_n(a)=\prod_{j=0}^{n-1}h_j(a)=
\begin{cases}
\widetilde{D}_{k+1}\,\widehat{D}_{k}, & n=2k+1,\\
\widetilde{D}_{k}\,\widehat{D}_{k}, & n=2k.
\end{cases}
\end{gather*}
Before proceeding any further we describe in the next section the smallest eigenvalue distribution of the Laguerre unitary ensembles.
\\
\section{The Smallest Eigenvalue Distribution of the Laguerre Unitary Ensembles}
In this section we shall be concerned with the Laguerre weight
\begin{gather*}
w(x,\alpha)=x^{\alpha}{\rm e}^{-x},\qquad x\in[0,\infty),\quad\alpha>-1.
\end{gather*}
The probability that all the eigenvalues are greater than $t$ in the finite $n$ Laguerre unitary ensembles, is given by
\begin{align}
\mathbb{P}(t,\alpha,n)=&\frac{\frac{1}{n!}\int_{(t,\infty)^n}{\prod\limits_{1\leq i<j\leq n}(x_{j}-x_{i})}^{2}\prod\limits^{n}_{k=1}w(x_k,\alpha)dx_k}{\frac{1}{n!}\int_{(0,\infty)^n}{\prod\limits_{1\leq i<j\leq n}(x_{j}-x_{i})}^{2}\prod\limits^{n}_{k=1}w(x_k,\alpha)dx_k}\non\\
=&
\frac{\det\left(\int_t^{\infty}x^{i+j}w(x,\alpha)dx\right)_{i,j=0}^{n-1}}
{\det\left(\int_0^{\infty}x^{i+j}w(x,\alpha)dx\right)_{i,j=0}^{n-1}}\non\\
=:&\frac{D_n(t,\alpha)}{D_n(0,\alpha)}.\label{LUEgapHankel}
\end{align}
Note that $D_n(0,\alpha)$ has a closed-form expression and reads (see \cite{Mehta2006}, p. 321)
\begin{gather}\label{Dn0}
D_n(0,\alpha)=\prod_{j=0}^{n-1}\Gamma(j+1)\Gamma(j+\alpha+1)=G(n+1)\cdot\frac{G(n+\alpha+1)}{G(\alpha+1)},
\end{gather}
where $G(\cdot)$ denotes the Barnes-G function, that satisfies the functional relation $G(z+1)=\Gamma(z)G(z),$ with the `initial' condition,
$G(1)=1.$ For a comprehensive exposition on the $G$ and other related functions see \cite{Voros1987}.

Let $h_j(t,\alpha)$ be the square of the $L^2$ norm of monic polynomial $P_j(.;t,\alpha)$ orthogonal with respect to $w(x,\alpha)$ over $[t,\infty]$:
\begin{gather*}
h_j(t,\alpha)\delta_{jk}:=\int_t^{\infty}P_j(x;t,\alpha)P_k(x;t,\alpha)w(x,\alpha)dx,
\end{gather*}
with the monic $P_j(x;t,\alpha)$ reads,
\begin{gather*}
P_j(x;t,\alpha)=x^j+p(j,t,\alpha) x^{j-1}+\cdots+P_j(0;t,\alpha).
\end{gather*}
It is a well-known fact that the Hankel determinant $D_n(t,\alpha)$ can be evaluated as the product of $h_j(t,\alpha)$, i.e.
\begin{gather*}
D_n(t,\alpha)=\prod_{j=0}^{n-1}h_j(t,\alpha).
\end{gather*}

We list here a number of facts about orthogonal polynomials.
\\
From orthogonality, there follows the three-term recurrence relation
\begin{gather*}
z P_{n}(z;t,\alpha)= {P_{n+1}(z;t,\alpha)}+\alpha_{n}(t,\alpha)P_{n}(z;t,\alpha)+\beta_{n}(t,\alpha)P_{n-1}(z;t,\alpha),\qquad n\geq0,
\end{gather*}
subject to the initial conditions $P_0(z;t,\alpha):=1,~\beta_0(t,\alpha)P_{-1}(z;t,\alpha):=0$, where
\begin{gather*}
\alpha_{n}(t,\alpha)=p(n,t,\alpha)-p(n+1,t,\alpha),\qquad\beta_{n}(t,\alpha)=\frac{h_n(t,\alpha)}{h_{n-1}(t,\alpha)}.
\end{gather*}
Moreover, $P_n(z;t,\alpha)$ has the following integral and determinant representations \cite{Szego1939}:
\begin{align*}
P_{n}(z;t,\alpha)=&\frac{1}{D_n(t,\alpha)}
\frac{1}{n!}\int_{(t,\infty)^{n}}\prod\limits_{1\leq i<j\leq n}(x_{j}-x_{i})^2\prod\limits^{n}_{k=1}(z-x_k)w(x_k,\alpha)dx_k\\
=&\frac{\det\left(\int_t^{\infty}x^{i+j}(z-x)x^{\alpha}{\rm e}^{-x}dx\right)_{i,j=0}^{n-1}}
{\det\left(\int_t^{\infty}x^{i+j}x^{\alpha}{\rm e}^{-x}dx\right)_{i,j=0}^{n-1}}.
\end{align*}
From this, we easily find,
\begin{gather}\label{Pn0Dn}
P_{n}(0;t,\alpha)=(-1)^n\frac{D_n(t,\alpha+1)}{D_n(t,\alpha)},
\end{gather}
so that, we find in view of \eqref{LUEgapHankel},
\begin{gather}\label{Pn0probn}
\frac{P_{n}(0;t,\alpha)}{P_n(0;0,\alpha)}=\frac{\mathbb{P}(t,\alpha+1,n)}{\mathbb{P}(t,\alpha,n)},
\end{gather}
and in view of \eqref{Dn0},
\begin{gather*}
P_{n}(0;0,\alpha)=(-1)^n\frac{\Gamma(n+1+\alpha)}{\Gamma(1+\alpha)}.
\end{gather*}

\subsection{The $\sigma$ from of Painlev\'{e} V for finite $n$ and of Painlev\'{e} III}
Since our Hankel determinant can also be written as
\begin{gather*}
D_n(t,\alpha)=\det\left(\int_0^{\infty}x^{i+j}x^{\alpha}{\rm e}^{-x}\theta(x-t)dx\right)_{i,j=0}^{n-1},
\end{gather*}
we see that this is a special case, of the Hankel determinant generated by the discontinuous Laguerre weight
$x^{\alpha}{\rm e}^{-x}(A+B\theta(x-t)),A\geq0, A+B\geq0$, studied in \cite{BasorChen2009}, by putting $A=0$ and $B=1$. The parameters $A$ and $B$ here should not be confused with integration interval in page 4.
\\
\noindent
{Remark} In \cite{BasorChen2009}, the parameters $A$ and $B$ in the weight did not appear in the $\sigma-$form of the Painlev\'e V.

To study the large $n$ behavior of $D_n(t,\alpha)$, we first recall some results in \cite{BasorChen2009}.
\begin{proposition} Define
\begin{gather*}
R_n(t) := \frac{P_n^2(t;t,\alpha)t^{\alpha}{\rm e}^{-t}}{h_n(t,\alpha)},
\end{gather*}
where $P_n(t;t,\alpha)$ is the evaluation of the orthogonal polynomial $P_n(z;t,\alpha)$ at $z=t$. Then $R_n(t)$ satisfies the following second order differential equation
\begin{equation}\label{t0Rn*}
\begin{aligned}
\left(R_n\right)''=&\frac{1}{2}\left(\frac{1}{R_n-1}+\frac{1}{R_n}\right)(R_n')^2-\frac{R_n'}{t}\\
   &+R_n^3+\left(
   \frac{2 n+\alpha +1}{t}-\frac{3}{2}\right)R_n^2-\left(\frac{2 n+\alpha +1}{t}-\frac{1}{2}\right)R_n-\frac{\alpha ^2 }{2 t^2}\frac{R_n}{R_n-1},
\end{aligned}
\end{equation}
 and can be transformed into a particular Painlev\'{e} V \cite{JimboMiwa1981}, namely, $P_V(0,\;-\alpha^2/2,\:2n+1+\alpha,\;-1/2)$, satisfied by $S_n(t):=1-\frac{1}{R_n(t)}$. The appearance of $n$ in the parameter of the $P_V$ indicates that we are studying the finite $n$ problem.
\end{proposition}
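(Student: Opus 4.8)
The plan is to use the ladder-operator (Chen--Ismail) formalism adapted to the hard edge at $z=t$, as alluded to in the introduction, since $R_n$ is precisely the residue that the moving endpoint forces into the lowering operator. First I would introduce the companion quantity $r_n(t):=P_n(t;t,\alpha)P_{n-1}(t;t,\alpha)\,t^{\alpha}{\rm e}^{-t}/h_{n-1}(t,\alpha)$ and establish the explicit forms of the coefficients $A_n(z),B_n(z)$ in the relation $P_n'(z)=-B_n(z)P_n(z)+\beta_n A_n(z)P_{n-1}(z)$. With $v(x)=x-\alpha\log x$ one has $\frac{v'(z)-v'(y)}{z-y}=\frac{\alpha}{zy}$, so the smooth part of each coefficient is a pure $1/z$ term, while integrating $P_n'P_k$ by parts over $[t,\infty)$ produces a boundary contribution at the endpoint, i.e. a simple pole at $z=t$ with residue $R_n$ (resp. $r_n$). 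The $1/z$ residues are then evaluated by a single integration by parts against $\frac{d}{dy}(y^{\alpha})=\alpha y^{\alpha-1}$ together with orthogonality, which collapses $\frac{\alpha}{h_n}\int_t^\infty P_n^2\,y^{\alpha-1}{\rm e}^{-y}dy$ to $1-R_n$ and its off-diagonal analogue to $-(n+r_n)$. This yields $A_n(z)=\frac{1-R_n}{z}+\frac{R_n}{z-t}$ and $B_n(z)=-\frac{n+r_n}{z}+\frac{r_n}{z-t}$, where throughout $\tilde w(x)=x^{\alpha}{\rm e}^{-x}$ is the smooth Laguerre weight on $[t,\infty)$.

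Next I would impose the supplementary conditions $(S_1),(S_2)$ and the sum-rule $(S_2')$, matching residues at the poles $z=0$ and $z=t$ and the behaviour at $z=\infty$. Condition $(S_1)$ gives at once the algebraic identities $r_{n+1}+r_n=R_n(t-\alpha_n)$ and $\alpha_n=2n+1+\alpha+tR_n$, exhibiting how the recurrence coefficient is deformed from its undisturbed Laguerre value $2n+1+\alpha$. The sum-rule $(S_2')$, namely $B_n^2(z)+v'(z)B_n(z)+\sum_{j=0}^{n-1}A_j(z)=\beta_n A_n(z)A_{n-1}(z)$, contributes the remaining constraints: its residues at $z=t$ and $z=0$ relate $\beta_n$, the partial sum $\sum_{j=0}^{n-1}R_j$, and a quadratic expression in $r_n$, pinning down $\beta_n$ and expressing the nonlocal data through $R_n$ and its neighbours.

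The differential input comes from treating $t$ as a deformation parameter. Differentiating $h_n=\int_t^\infty P_n^2\,\tilde w\,dy$ and using that $\partial_t P_n$ has degree $<n$ gives $\frac{d}{dt}\log h_n=-R_n$, whence $\frac{d}{dt}\log\beta_n=R_{n-1}-R_n$. Taking the logarithmic $t$-derivative of $R_n=[P_n(t;t)]^2\tilde w(t)/h_n$ and evaluating $P_n'(t;t)$ from the lowering relation, whose would-be poles at $z=t$ cancel because $\beta_n R_n P_{n-1}(t)=r_n P_n(t)$, expresses $tR_n'$ through $R_n$ and $r_n$; a companion evolution for $\partial_t P_n$ obtained by differentiating the orthogonality relations closes this first-order system. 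Combining the Toda-type evolution with the algebraic relations of the previous paragraph lets me express $r_n$ purely in terms of $R_n$ and $R_n'$; differentiating once more and re-substituting eliminates $r_n$ entirely and delivers the closed second-order equation \eqref{t0Rn*}. Finally, setting $S_n:=1-1/R_n$, so $R_n=(1-S_n)^{-1}$ and $R_n'=S_n'(1-S_n)^{-2}$, is a routine change of variables that carries the coefficient $\tfrac12(\tfrac{1}{R_n}+\tfrac{1}{R_n-1})$ into the canonical Painlev\'e V combination $\tfrac{1}{2S_n}+\tfrac{1}{S_n-1}$ and lets me read off the parameters $(0,-\alpha^2/2,2n+1+\alpha,-1/2)$.

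I expect the main obstacle to be the elimination step: disentangling the coupled system so that a single unknown $R_n$ remains requires careful use of $(S_2')$ to remove both the nonlocal partial sum $\sum_{j}R_j$ and the off-diagonal $r_n$, and the bookkeeping of the several residue identities must be carried out without sign or factor errors. A secondary subtlety is the rigorous justification of the boundary terms at the moving endpoint $t$ and their interplay with the distributional $\delta(x-t)$ arising from the characteristic function, which is exactly what makes $R_n$ appear as a genuine pole residue of $A_n$ rather than a spurious artefact.
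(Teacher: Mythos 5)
Your proposal is correct and follows essentially the route the paper relies on: the paper gives no proof of this proposition, merely restating it from \cite{BasorChen2009} (the $A=0$, $B=1$ case of the discontinuous Laguerre weight $x^{\alpha}{\rm e}^{-x}(A+B\theta(x-t))$), and your ladder-operator derivation --- the residue structure $A_n(z)=\tfrac{1-R_n}{z}+\tfrac{R_n}{z-t}$, $B_n(z)=-\tfrac{n+r_n}{z}+\tfrac{r_n}{z-t}$, the identity $\alpha_n=2n+1+\alpha+tR_n$ (quoted in the paper as eq.\ (3.9) of that reference), and the elimination of $r_n$ via $(S_1)$, the sum rule and the $t$-evolution $\tfrac{d}{dt}\log h_n=-R_n$ --- is precisely the method of that reference. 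The individual identities you state all check out (e.g.\ the double-pole matching in $(S_2')$ at $z=t$ and $z=0$ gives $r_n^2=\beta_nR_nR_{n-1}$ and $(n+r_n)(n+\alpha+r_n)=\beta_n(1-R_n)(1-R_{n-1})$, which is exactly the data needed to close the system), so the remaining work is only the algebraic bookkeeping you already flag.
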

\begin{proposition} The quantity
\begin{gather*}
H_n (t) := t\frac{d}{dt}\log \mathbb{P}(t,\alpha,n)
\end{gather*}
satisfies the following Jimbo-Miwa-Okamoto $\sigma$ form \cite{JimboMiwa1981} of Painlev\'{e} V equation:
\begin{gather}\label{t0Hn}
(t H_n'')^2=4(H_n')^2\left(H_n-n(n+\alpha)-t H_n'\right)+\left((2n+\alpha-t)H_n'+H_n\right)^2.
\end{gather}
Furthermore, $H_n$ is expressed in terms of $R_n$ by
\begin{gather*}
H_n=\frac{t^2}{4}\frac{(R_n')^2}{R_n(R_n-1)}+\frac{t^2}{4}R_n\left( 1-R_n\right)-\left(n+\frac{\alpha}{2}\right)t R_n+\frac{\alpha^2}{4}\frac{R_n}{1-R_n}.
\end{gather*}
\end{proposition}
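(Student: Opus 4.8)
The plan is to express $H_n$ through the single auxiliary function $R_n(t)$ of the previous Proposition and then eliminate it using its ODE \eqref{t0Rn*}. Since $D_n(0,\alpha)$ in \eqref{LUEgapHankel} is independent of $t$, we have $H_n(t)=t\frac{d}{dt}\log D_n(t,\alpha)$, so everything reduces to the behaviour of the Hankel determinant with a hard edge at $t$.

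First I would record the basic endpoint-differentiation identity. Writing $h_j(t,\alpha)=\int_t^\infty P_j^2(x;t,\alpha)x^\alpha e^{-x}dx$ and differentiating in $t$, the boundary term contributes $-P_j^2(t;t,\alpha)t^\alpha e^{-t}$, whereas $\int_t^\infty 2P_j(\partial_t P_j)x^\alpha e^{-x}dx=0$ because $\partial_t P_j$ has degree at most $j-1$ and $P_j$ is orthogonal to all lower degrees. Hence $\frac{d}{dt}\log h_j(t,\alpha)=-R_j(t)$ and, summing over $j$,
\[
H_n(t)=-t\sum_{j=0}^{n-1}R_j(t).
\]
This displays $H_n$ as a sum of the $R_j$; the real work is to collapse this sum into a closed expression in $R_n$ alone.

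Second I would bring in the ladder-operator apparatus for the weight $x^\alpha e^{-x}\theta(x-t)$ exactly as in \cite{BasorChen2009}. The coefficients $A_n(z),B_n(z)$ of the lowering operator are rational in $z$ with poles only at $z=0$ and $z=t$, and their residues are built from $R_n(t)$ together with a companion quantity $r_n(t)$ (essentially $P_n(t;t,\alpha)P_{n-1}(t;t,\alpha)t^\alpha e^{-t}/h_{n-1}$). The compatibility conditions $(S_1)$, $(S_2)$ and the associated sum rule $(S_2')$ then yield two things: a telescoping identity rewriting $\sum_{j=0}^{n-1}R_j$ in terms of $R_n$ and $r_n$, and a first-order relation of the schematic form $tR_n'=\Phi(R_n,r_n)$ that lets me solve for $r_n$ in terms of $R_n$ and $R_n'$. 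Substituting the latter into the former produces precisely the closed form for $H_n$ stated in the Proposition.

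Finally, with the closed form in hand I would verify \eqref{t0Hn} by direct computation: differentiate $H_n$ to obtain $H_n'$ and $tH_n''$, replace every occurrence of $R_n''$ using \eqref{t0Rn*}, and clear the denominators $R_n$ and $R_n-1$; the result should collapse to the polynomial identity \eqref{t0Hn}. \textbf{The main obstacle} is the middle step, namely decoupling the sum and eliminating $r_n$ through the compatibility conditions so as to land on a formula involving only $R_n$ and $R_n'$; once that closed form is secured, the passage to the $\sigma$-form is a lengthy but mechanical polynomial elimination, most safely checked with computer algebra.
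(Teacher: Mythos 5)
Your outline is correct and follows essentially the same route as the paper's source: the paper states this proposition without proof, recalling it from \cite{BasorChen2009}, and that derivation proceeds exactly as you describe --- the endpoint-differentiation identity $H_n=-t\sum_{j=0}^{n-1}R_j$ (valid because $\partial_t P_j$ has degree at most $j-1$), followed by the ladder-operator compatibility conditions and the sum rule to eliminate the companion residue $r_n$ and collapse the sum into a closed expression in $R_n$ and $R_n'$, and finally a mechanical elimination against the second-order ODE for $R_n$ to reach the $\sigma$-form \eqref{t0Hn}. Your identification of the key obstacle (decoupling the sum and removing $r_n$ via the supplementary conditions) is exactly where the substance of the cited argument lies.
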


It was pointed out in \cite{BasorChen2009}, by changing variable $t\rightarrow \frac{s}{4n}$ and $H_n(t)\rightarrow\sigma(s)$ in \eqref{t0Hn},
the coefficient of the highest order term in $n$ gives rise to the Jimbo-Miwa-Okamoto $\sigma$ form of Painlev\'{e} III.
So we treat $R_n$ and equation \eqref{t0Rn*}, with $n$ large, in a similar way. Here are the results.
\begin{proposition} Write
\begin{gather*}
R(s):=\lim_{n\rightarrow\infty}R_n\left(\frac{s}{4n}\right),
\end{gather*}
then the following second order differential equation holds
\begin{gather}\label{R*ODE}
R''=\left(\frac{1}{R-1}+\frac{1}{R}\right)\frac{(R')^2}{2}- \frac{R'}{s}+\frac{R\left(R-1\right)}{2s}-\frac{\alpha^2}{2s^2}\cdot\frac{R}{R-1}.
\end{gather}
Let
\begin{gather}\label{Psdef}
\mathbb{P}(s,\alpha):=\lim_{n\rightarrow\infty}\mathbb{P}\left(\frac{s}{4n},\alpha,n\right),
\end{gather}
and
\begin{gather*}
\sigma(s):=s\frac{d}{ds}\log \mathbb{P}(s,\alpha).
\end{gather*}
Then $\sigma(s)$ satisfies the Jimbo-Miwa-Okamoto $\sigma$ form of $P_{III}$ (see (3.13) of \cite{Jimbo1982})
\begin{gather}\label{sigmaeq}
\left(s\sigma''\right)^2 +\sigma'\left(4 \sigma'+1\right) \left(s \sigma'-\sigma\right)-\alpha^2  \left(\sigma'\right)^2=0.
\end{gather}
The quantity $\sigma(s)$ when expressed in terms of $R(s)$ reads,
\begin{gather}\label{sigmaR}
\sigma(s)=\frac{s^2(R')^2}{4R(R-1)}
-\frac{s}{4} R+\frac{\alpha ^2 }{4}\cdot\frac{R}{1-R}\:.
\end{gather}
Hence, $\mathbb{P}(s,\alpha)$ has the following integral representation
\begin{equation}\label{DR*}
\begin{aligned}
\log \mathbb{P}(s,\alpha)
&=-\int_s^{\infty} \frac{\sigma(\xi)}{\xi} d\xi\\
&=-\int_s^{\infty} \left(\frac{\xi^2(R')^2}{4R(R-1)}
-\frac{\xi }{4} R+\frac{\alpha ^2 }{4}\cdot\frac{R}{1-R}\right)\frac{d\xi}{\xi}.
\end{aligned}
\end{equation}
\end{proposition}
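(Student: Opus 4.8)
The plan is to obtain all four displayed relations \eqref{R*ODE}, \eqref{sigmaeq}, \eqref{sigmaR} and \eqref{DR*} as the leading-order outcome of the double-scaling substitution $t=\frac{s}{4n}$ applied to the finite-$n$ statements of Propositions 3.1 and 3.2, with $n\to\infty$ and $s$ held fixed. Since the operator $t\,\frac{d}{dt}=s\,\frac{d}{ds}$ is invariant under $t=s/(4n)$, the quantity $H_n(s/(4n))=s\frac{d}{ds}\log\mathbb{P}(s/(4n),\alpha,n)$ converges, by \eqref{Psdef} and interchange of limit and derivative, to $\sigma(s)=s\frac{d}{ds}\log\mathbb{P}(s,\alpha)$; likewise $R_n(s/(4n))\to R(s)$ by definition. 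The basic computational input is the chain rule: writing $R_n(t)=R(s)+o(1)$ with $s=4nt$, one has $R_n'(t)=4n\,R'(s)+o(n)$ and $R_n''(t)=16n^2\,R''(s)+o(n^2)$, and similarly for $H_n$.

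First I would establish \eqref{R*ODE}. Substituting $t=s/(4n)$ into \eqref{t0Rn*} and using these rules, the left-hand side is $16n^2R''$, and each right-hand term is examined for its order in $n$. The terms $\frac12(\frac{1}{R_n-1}+\frac{1}{R_n})(R_n')^2$, $-R_n'/t$ and $-\frac{\alpha^2}{2t^2}\frac{R_n}{R_n-1}$ are all of order $n^2$, while the coefficient $\frac{2n+\alpha+1}{t}=\frac{4n(2n+\alpha+1)}{s}$ contributes a leading $\frac{8n^2}{s}$ to the $R_n^2$ and $R_n$ terms; the pure cubic $R_n^3$ and the constants $-\frac32,\frac12$ are only $O(1)$ and drop out. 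Dividing by $16n^2$ and sending $n\to\infty$ collects the coefficient of $n^2$, and the combination $\frac{R^2}{2s}-\frac{R}{2s}=\frac{R(R-1)}{2s}$ reproduces exactly \eqref{R*ODE}.

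Next I would treat the algebraic relation expressing $H_n$ in terms of $R_n$ from Proposition 3.2, where the substitution is immediate: the factor $\frac{t^2}{4}\frac{(R_n')^2}{R_n(R_n-1)}$ is scale-invariant and tends to $\frac{s^2(R')^2}{4R(R-1)}$, the term $\frac{t^2}{4}R_n(1-R_n)=O(n^{-2})$ vanishes in the limit, $-(n+\frac\alpha2)tR_n\to-\frac{s}{4}R$, and $\frac{\alpha^2}{4}\frac{R_n}{1-R_n}$ is unchanged; this yields \eqref{sigmaR}. The representation \eqref{DR*} then follows by integrating the defining relation $\frac{d}{ds}\log\mathbb{P}(s,\alpha)=\sigma(s)/s$, fixing the integration constant by the boundary value at $s=\infty$ and inserting \eqref{sigmaR}, so no new difficulty arises there.

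The main obstacle is \eqref{sigmaeq}. Substituting $t=s/(4n)$ and $H_n\to\sigma$ into \eqref{t0Hn} gives $(tH_n'')^2=16n^2s^2(\sigma'')^2$, which is only $O(n^2)$, whereas each right-hand term is individually $O(n^4)$: the factor $H_n-n(n+\alpha)-tH_n'$ is dominated by $-n^2$, and $(2n+\alpha-t)H_n'$ is dominated by $8n^2\sigma'$. Hence the proof hinges on showing that the $O(n^4)$ and $O(n^3)$ contributions cancel identically between $4(H_n')^2(H_n-n(n+\alpha)-tH_n')$ and $((2n+\alpha-t)H_n'+H_n)^2$, so that the equation first becomes nontrivial at order $n^2$. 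Carrying the expansion of both terms to order $n^2$ (in particular expanding $A^2(\sigma')^2$ with $A=8n^2+4n\alpha-s$ and retaining the $16\alpha^2n^2$ and $-16sn^2$ pieces together with the cross term $16n^2\sigma\sigma'$, and the pieces $64n^2\sigma(\sigma')^2-64n^2s(\sigma')^3$ from the first term), then dividing by $16n^2$ and letting $n\to\infty$, reproduces \eqref{sigmaeq} once one recognizes $\sigma'(4\sigma'+1)(s\sigma'-\sigma)-\alpha^2(\sigma')^2$ as the assembled right-hand side. The delicate bookkeeping of these subleading terms, together with the tacit assumption that $R_n\to R$ and $H_n\to\sigma$ hold with their derivatives uniformly enough to justify term-by-term passage to the limit, is where the real care is needed; the remaining claims follow by routine substitution and one integration.
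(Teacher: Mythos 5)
Your proposal is correct and follows essentially the same route the paper indicates (and only sketches): substitute $t=\frac{s}{4n}$ into the finite-$n$ equations \eqref{t0Rn*}, \eqref{t0Hn} and the $H_n$--$R_n$ relation, use $t\frac{d}{dt}=s\frac{d}{ds}$, and extract the leading nonvanishing order in $n$ — in particular your observation that the $O(n^4)$ and $O(n^3)$ terms of \eqref{t0Hn} cancel so that \eqref{sigmaeq} emerges at order $n^2$ is exactly the bookkeeping the paper leaves implicit. The term-by-term checks you record all come out right, so nothing further is needed.
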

We shall make use of \eqref{R*ODE} to derive the series expansion of $R(s)$ as $s\rightarrow\infty$, and then apply \eqref{DR*} to obtain $\mathbb{P}(s,\alpha)$ for large $s.$
The lemma below gives the bounds for $R(s)$, which we will see is important for later development.
\begin{lemma} $R_n(t)$ and $R(s)$ are bounded by
\begin{gather*}
0\leq R_n(t)<1,\\
0\leq R(s)<1.
\end{gather*}
\end{lemma}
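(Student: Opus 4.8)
The lower bounds are immediate. By definition $R_n(t)=P_n^2(t;t,\alpha)\,t^{\alpha}{\rm e}^{-t}/h_n(t,\alpha)$ is a perfect square times the positive weight $t^{\alpha}{\rm e}^{-t}$ divided by the positive norm $h_n(t,\alpha)>0$, so $R_n(t)\ge 0$ for $t>0$; since $R(s)=\lim_{n}R_n(s/4n)$, also $R(s)\ge 0$. The whole content is therefore the strict upper bound, and my plan is to first convert it into a monotonicity statement. The key identity I would establish is
\[
R_n(t)=-\frac{d}{dt}\log h_n(t,\alpha).
\]
Writing $h_n=D_{n+1}(t,\alpha)/D_n(t,\alpha)$ and differentiating the moment determinant by Jacobi's formula gives $\frac{d}{dt}\log D_m(t,\alpha)=-t^{\alpha}{\rm e}^{-t}K_m(t,t)$, where $K_m(x,y)=\sum_{k=0}^{m-1}P_k(x)P_k(y)/h_k$ is the Christoffel--Darboux kernel at parameter $t$ (using $\partial_t\mu_k=-t^{k+\alpha}{\rm e}^{-t}$ and $K_m(x,y)=\vec{u}(x)^{\top}M^{-1}\vec{u}(y)$ for the moment matrix $M$). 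Subtracting the cases $m=n+1$ and $m=n$ and using $K_{n+1}-K_n=P_n^2(t;t,\alpha)/h_n(t,\alpha)$ yields the identity, so that $R_n<1$ is equivalent to the statement that $\mathrm{e}^{t}h_n(t,\alpha)$ is strictly increasing in $t$.

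To handle this I would pass to the shifted variable $u=x-t$, under which $\tilde h_n(t):=\mathrm{e}^{t}h_n(t,\alpha)=\int_0^{\infty}Q_n(u)^2(t+u)^{\alpha}{\rm e}^{-u}\,du$, where $Q_n(u)=P_n(t+u;t,\alpha)$ is the monic orthogonal polynomial of degree $n$ for the weight $(t+u)^{\alpha}{\rm e}^{-u}$ on $[0,\infty)$. The quantity $\tilde h_n(t)$ equals the minimum of $\int_0^{\infty}Q(u)^2(t+u)^{\alpha}{\rm e}^{-u}\,du$ over monic $Q$ of degree $n$, so by the envelope (Hellmann--Feynman) theorem
\[
\tilde h_n'(t)=\alpha\int_0^{\infty}Q_n(u)^2(t+u)^{\alpha-1}{\rm e}^{-u}\,du.
\]
For $\alpha>0$ this is strictly positive, giving $R_n(t)<1$; this already covers the factors with weights $x^{1/2}{\rm e}^{-x}$ (and the analogous shifted-Jacobi factor). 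The scaled bound $R(s)<1$ would then follow by combining the finite-$n$ inequality with the limit, strictness being enforced by the fact that $R=1$ is a singular manifold of \eqref{R*ODE} (the terms $1/(R-1)$), so the physical solution fixed by $R(\infty)=0$ cannot reach it; equivalently, finiteness of $\sigma(s)$ in \eqref{sigmaR} forces $R<1$.

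The main obstacle is the case $\alpha\le 0$, in particular the parameter $\alpha=-1/2$ produced by the GUE interval-doubling. There the envelope computation has the opposite sign: already $R_0(t)=t^{\alpha}{\rm e}^{-t}/\Gamma(\alpha+1,t)>1$ when $\alpha<0$ (and $\equiv 1$ when $\alpha=0$), as one checks by noting $\frac{d}{dt}\big(\Gamma(\alpha+1,t)-t^{\alpha}{\rm e}^{-t}\big)=-\alpha t^{\alpha-1}{\rm e}^{-t}$ together with vanishing at $t=\infty$. Hence the finite-$n$ inequality $R_n<1$ genuinely fails for $\alpha<0$, and the strict bound can only be expected for the scaled object $R(s)$, where the boundary layer $t=s/4n\to 0$ interacts with $n\to\infty$. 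For $R(s)$ with $\alpha\le 0$ I would therefore argue directly from \eqref{R*ODE} and \eqref{DR*} rather than passing a false finite-$n$ bound to the limit: establish $R(s)\to 0$ as $s\to\infty$ and run an invariant-region/barrier argument pinning the solution below the singular line $R=1$ of the equation. Making this barrier argument rigorous precisely at $R=1$, where both $(R')^2/(R-1)$ and $R/(R-1)$ blow up, is the delicate point of the whole lemma.
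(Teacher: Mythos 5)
Your argument for $\alpha>0$ is correct, but it is a long detour to the identity that constitutes the paper's entire proof, namely
\[
1-R_n(t)=\frac{1}{h_n(t,\alpha)}\int_t^{\infty}P_n^2(y;t,\alpha)\,\alpha\, y^{\alpha-1}{\rm e}^{-y}\,dy ,
\]
which follows by integrating $\frac{d}{dy}\bigl[P_n^2(y)y^{\alpha}{\rm e}^{-y}\bigr]$ over $[t,\infty)$ and killing the cross term $2\int_t^{\infty}P_nP_n'\,y^{\alpha}{\rm e}^{-y}dy$ by orthogonality (as $P_n'$ has degree $n-1$); the right-hand side is manifestly positive when $\alpha>0$. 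Your chain --- Jacobi's formula giving $R_n=-\frac{d}{dt}\log h_n$, then the envelope theorem applied to $\tilde h_n(t)={\rm e}^{t}h_n(t,\alpha)$ --- is valid and in fact collapses to exactly this identity, since $\tilde h_n'(t)={\rm e}^{t}h_n(t,\alpha)\,(1-R_n(t))$. So for $\alpha>0$ the two proofs are the same computation in different clothing, yours being the heavier derivation.

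Your worry about $\alpha\le 0$ is well founded and identifies a real defect: the displayed identity shows $R_n(t)>1$ for every $n$ when $-1<\alpha<0$ (your $R_0$ computation is the case $n=0$) and $R_n\equiv 1$ when $\alpha=0$, whereas the paper simply asserts strict positivity of an integral carrying the factor $\alpha$; and $\alpha=-\tfrac12$ is exactly the case needed for the GUE application. However, your proposed rescue --- a barrier argument forcing the scaled $R(s)$ below $1$ --- aims at a false target: for $\alpha<0$ the paper's own expansion \eqref{R*t0} gives $R(s)=1-\alpha s^{-1/2}+\cdots>1$ for large $s$, consistent with $R_n>1$ at finite $n$, so no argument can establish $R(s)<1$ there. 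What is actually true for all $\alpha>-1$, and is all that is used downstream (the choice $a_1=-\alpha$ after \eqref{R*ODE}, and the comparison of $b$ with $4n+2\alpha+t$ in Lemma 3.6), is the sign statement ${\rm sgn}\,(1-R_n)={\rm sgn}\,(\alpha)$, which follows from the same one-line identity and selects $a_1=-\alpha$ in every case. The lemma should be restated in that form; as written, both its conclusion and the paper's proof fail for $-1<\alpha\le 0$, and your proposal, while correctly diagnosing the failure, does not supply a working replacement for that range.
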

\begin{proof}
Noting that
\begin{gather*}
R_n(t):=\frac{P_n^2(t;t,\alpha)t^{\alpha}{\rm e}^{-t}}{h_n(t,\alpha)}\geq0,\\
\intertext{and}
0<\frac{1}{h_n(t,\alpha)}\int_t^{\infty}P_n^2(y;t,\alpha)\alpha y^{\alpha-1}{\rm e}^{-y}dy=1-R_n(t),
\end{gather*}
we find
\begin{gather*}
0\leq R_n(t)<1,
\end{gather*}
so that
\begin{gather*}
0\leq R(s)=\lim_{n\rightarrow\infty}R_n\left(\frac{s}{4n}\right)<1.
\end{gather*}
\end{proof}

To continue, we obtain, neglecting the derivatives in \eqref{R*ODE} and replacing $R(s)$ by $\widetilde{R}(s)$, we obtain
\begin{gather*}
\left(\widetilde{R}-1\right)^2=\frac{\alpha^2}{s},
\end{gather*}
which has solutions
\begin{gather*}
\widetilde{R}=1\pm\frac{\alpha }{\sqrt{s}}.
\end{gather*}
So it seems reasonable to assume $R(s)$ has an expansion of the form
\begin{gather*}
R(s)=\sum_{j=0}^{\infty}a_j s^{-\frac{j}{2}},\qquad s\rightarrow\infty.
\end{gather*}
Substituting the above into \eqref{R*ODE}, followed by comparing the corresponding coefficients on both sides, we find $a_0=1$ and $a_1=\pm\alpha$.
Since $R(s)<1$, we choose $a_1=-\alpha$. By direct computations, we eventually arrive at the following expansion formula for $R(s)$.
\begin{theorem} The expansion holds,
\begin{equation}\label{R*t0}
\begin{aligned}
R(s)=&1-\frac{\alpha }{\sqrt{s}}-\frac{\alpha }{8}s^{-\frac{3}{2}}-\frac{\alpha
   ^2}{4}s^{-2}-\left(\frac{3
   \alpha ^3}{8}+\frac{27
   \alpha
   }{128}\right)s^{-\frac{5}{2}}\\
&-\left(\frac{\alpha ^4}{2}+\frac{9\alpha^2}{8}\right)s^{-3}-\left(\frac{5 \alpha
   ^5}{8}+\frac{225 \alpha^3}{64}+\frac{1125 \alpha}{1024}\right)s^{-\frac{7}{2}}\\
&-\left(\frac{3
   \alpha ^6}{4}+\frac{135
   \alpha ^4}{16}+\frac{81
   \alpha
   ^2}{8}\right)s^{-4}+O\left(s^{-\frac{9}{2}}\right),\qquad s\rightarrow\infty.
\end{aligned}
\end{equation}
Thus, from \eqref{sigmaR}, it follows that
\begin{equation}\label{HT0}
\begin{aligned}
\sigma(s)=&-\frac{s}{4}+\frac{\alpha}{2}\sqrt{s}-\frac{\alpha
   ^2}{4}-\frac{\alpha
   }{16\sqrt{s} }-\frac{\alpha ^2}{16}s^{-1}-\left(\frac{\alpha^3}{16}+\frac{9\alpha}{256}\right)s^{-\frac{3}{2}}\\
&-\left(\frac{\alpha^4}{16}+\frac{9\alpha^2}{64}\right)s^{-2}-\left(\frac{\alpha^5}{16}+\frac{45\alpha^3}{128}+\frac{225\alpha}{2048}\right)s^{-\frac{5}{2}}\\
&
-\left(\frac{\alpha^6}{16}+\frac{45\alpha^4}{64}+\frac{27\alpha^2}{32}\right)s^{-3}+O\left(s^{-\frac{7}{2}}\right),\qquad s\rightarrow\infty.
\end{aligned}
\end{equation}
\end{theorem}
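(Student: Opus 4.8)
The plan is to determine the coefficients $a_j$ by inserting the ansatz $R(s)=\sum_{j=0}^{\infty}a_j s^{-j/2}$ into the nonlinear ODE \eqref{R*ODE} and matching powers of $s^{-1/2}$, after which the companion expansion \eqref{HT0} follows by direct substitution into \eqref{sigmaR}. First I would clear denominators in \eqref{R*ODE}. Writing $\frac{1}{R-1}+\frac{1}{R}=\frac{2R-1}{R(R-1)}$ and multiplying through by $2s^2R(R-1)$ turns \eqref{R*ODE} into the polynomial relation
\begin{equation*}
2s^2R(R-1)R''=s^2(2R-1)(R')^2-2sR(R-1)R'+sR^2(R-1)^2-\alpha^2R^2,
\end{equation*}
which is now polynomial in $R,R',R''$ and hence amenable to formal series manipulation.

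Next I would carry out the dominant-balance analysis. Setting $R=1+v$ with $v\to0$, the two terms $sR^2(R-1)^2$ and $-\alpha^2R^2$ both contribute at order $s^0$, giving $a_1^2-\alpha^2=0$, i.e. $a_1=\pm\alpha$, while every remaining term in the cleared equation is of order $s^{-1}$ or smaller. The bound $0\le R(s)<1$ established in the preceding Lemma forces $v<0$, so I select $a_1=-\alpha$, consistent with $a_0=1$. This is precisely the reasoning already indicated just before the theorem, and it also explains why the expansion proceeds in half-integer powers: the leading correction is $O(s^{-1/2})$.

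The core of the argument is the recursion for $j\ge1$. The key observation is that, because $a_1\ne0$, the coefficient $a_{j+1}$ enters the order-$s^{-j/2}$ balance only through the cross term $2a_1a_{j+1}$ coming from $sR^2(R-1)^2$; in each of the remaining terms $2s^2R(R-1)R''$, $s^2(2R-1)(R')^2$, $-2sR(R-1)R'$ and $-\alpha^2R^2$ the highest index appearing at this order is at most $j$. Hence at each order the equation reads $2a_1a_{j+1}+\Phi_j(a_1,\dots,a_j)=0$, where $\Phi_j$ is an explicit polynomial in the previously determined coefficients, so that $a_{j+1}=-\Phi_j/(2a_1)$ is uniquely fixed. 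The expansions of the singular factors $\tfrac{1}{R-1}$ used along the way are legitimate formal series because $R-1=a_1s^{-1/2}\bigl(1+O(s^{-1/2})\bigr)$ with $a_1\ne0$. Iterating the recursion through $j=7$ yields the coefficients displayed in \eqref{R*t0}. Finally, substituting \eqref{R*t0} and the term-by-term derivative $R'$ into \eqref{sigmaR} and expanding both $\tfrac{s^2(R')^2}{4R(R-1)}$ and $\tfrac{\alpha^2}{4}\tfrac{R}{1-R}$ as series in $s^{-1/2}$ (the term $-\tfrac{s}{4}R$ being immediate) produces \eqref{HT0}; as a consistency check, the two half-powers $\tfrac{\alpha}{4}\sqrt{s}$ arising separately from $-\tfrac{s}{4}R$ and from $\tfrac{\alpha^2}{4}\tfrac{R}{1-R}$ combine to the displayed $\tfrac{\alpha}{2}\sqrt{s}$.

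The main obstacle is not the algebra but the justification that this formal series is genuinely the asymptotic expansion of the true bounded solution. I would address this by invoking the bound $0\le R<1$ together with the leading balance $(R-1)^2\sim\alpha^2/s$: these pin down $R\to1^-$ at the correct rate, and one then argues inductively that the remainder after subtracting the first $N$ terms is $O(s^{-(N+1)/2})$ by feeding the ODE the already-determined partial sum and estimating the resulting error equation. A fully rigorous treatment would require a Gevrey-type or fixed-point estimate on the remainder (or an appeal to the known Painlev\'e III / Bessel-kernel connection), and it is this analytic control, rather than the coefficient bookkeeping, where the genuine care is needed.
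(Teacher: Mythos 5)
Your proposal follows the paper's own route exactly: substitute the half-integer-power ansatz into \eqref{R*ODE}, fix $a_0=1$ and $a_1=-\alpha$ from the leading balance together with the bound $R<1$, solve the resulting recursion for the higher coefficients, and finally substitute the series into \eqref{sigmaR} to obtain \eqref{HT0}. Your extra observations --- the polynomial form of the cleared equation, the fact that $a_{j+1}$ enters the order-$s^{-j/2}$ balance only through the cross term $2a_1a_{j+1}$ so that the recursion is uniquely solvable, and the caveat that coefficient matching alone yields only a formal series whose status as a genuine asymptotic expansion requires separate analytic justification --- are all correct and in fact supply detail that the paper omits, since it records the computation only as ``by direct computations, we eventually arrive at'' the stated expansion.
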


\begin{remark}
Our asymptotic expansion \eqref{HT0} coincides with (3.1) in \cite{TracyWidom1994}, since $-\sigma(s)=\sigma(s;1)$. In fact, we observe that
$\mathbb{P}(s,\alpha)=D(J;1),$
where $D(J;1)$ denotes the Fredholm determinant of the operator with Bessel kernel.
As a result, by definition, we find $-\sigma(s)=-s\frac{d}{ds}\log \mathbb{P}(s,\alpha)=-s\frac{d}{ds}\log D(J;1)=\sigma(s;1)$.
See \cite{TracyWidom1994} for more details.
\end{remark}

Finally, according to \eqref{DR*}--\eqref{HT0},
we obtain the the asymptotic expansion for $\mathbb{P}(s,\alpha)$.
\begin{theorem} As $s\rightarrow\infty$, we have
\begin{equation}\label{probs}
\begin{aligned}
\log\mathbb{P}(s,\alpha)=&c_1(\alpha)-\frac{s}{4}+\alpha \sqrt{s}-\frac{\alpha^2}{4}\log s+\frac{\alpha }{8}s^{-\frac{1}{2}}+\frac{\alpha
   ^2}{16}s^{-1}
   +\left(\frac{\alpha^3}{24}+\frac{3\alpha}{128}\right)s^{-\frac{3}{2}}\\
&+\left(\frac{\alpha^4}{32}+\frac{9\alpha^2}{128}\right)s^{-2}
+\left(\frac{\alpha^5}{40}+\frac{9\alpha^3}{64}+\frac{45\alpha}{1024}\right)s^{-\frac{5}{2}}\\
&+\left(\frac{\alpha^6}{48}+\frac{15\alpha^4}{64}+\frac{9\alpha^2}{32}\right)s^{-3}+O\left(s^{-\frac{7}{2}}\right),
\end{aligned}
\end{equation}
where $c_1(\alpha),$ the constant of integration, is shown later to be
\begin{gather*}
c_1(\alpha)=\log \frac{G(\alpha+1)}{(2\pi)^{\alpha/2}}.
\end{gather*}
\end{theorem}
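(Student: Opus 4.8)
The plan is to separate \eqref{probs} into its explicit $s$-dependent part, obtained by quadrature, and the single additive constant $c_1(\alpha)$, which must be fixed by a global argument. For the first part I would start from the identity $\frac{d}{ds}\log\mathbb{P}(s,\alpha)=\sigma(s)/s$ implicit in \eqref{DR*}, insert the large-$s$ expansion \eqref{HT0} of $\sigma(s)$, and integrate term by term. Every monomial $s^{-k/2}$ in $\sigma(s)/s$ integrates to a monomial, the sole exception being the constant term $-\alpha^2/4$ of $\sigma$, which produces the $-\frac{\alpha^2}{4}\log s$ contribution; a quick check confirms, e.g., that $\tfrac{\alpha}{2}s^{-1/2}\mapsto\alpha\sqrt{s}$ and $-\tfrac{\alpha}{16}s^{-3/2}\mapsto+\tfrac{\alpha}{8}s^{-1/2}$, reproducing every displayed coefficient of \eqref{probs}. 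This step is purely mechanical and fixes all terms of the expansion save the constant of integration $c_1(\alpha)$. Note that the local analysis near $s=\infty$ genuinely cannot detect $c_1(\alpha)$: since $\sigma(\xi)/\xi\to-1/4$, the bare integral $-\int_s^{\infty}\sigma(\xi)/\xi\,d\xi$ diverges, so the constant has to be imported from elsewhere.

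To pin down $c_1(\alpha)$ I would return to finite $n$ and exploit the one place where an exact normalization is available. Writing $\log\mathbb{P}(t,\alpha,n)=\log D_n(t,\alpha)-\log D_n(0,\alpha)$ and recalling the closed form \eqref{Dn0}, namely $D_n(0,\alpha)=G(n+1)G(n+\alpha+1)/G(\alpha+1)$, one sees at once that the prefactor $1/G(\alpha+1)$ already carries the Barnes-$G$ value expected in $c_1(\alpha)$. The task is then to produce a matching large-$n$ expansion of the numerator $\log D_n(t,\alpha)$ in the crossover regime $t\to0$, $n\to\infty$ with $s=4nt$ large, and to verify that, after subtracting $\log D_n(0,\alpha)$ and letting $n\to\infty$, everything collapses onto \eqref{probs}.

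For the numerator I would use the Coulomb-fluid/linear-statistics asymptotics of the deformed orthogonal polynomials recorded in \eqref{PnS1S2general}, which govern the large-$n$ behaviour of $D_n(t,\alpha)$ with the hard edge displaced to $t$. The $n$-dependent Barnes-$G$ contributions, expanded through $\log G(n+1)$ and $\log G(n+\alpha+1)$, whose asymptotics contain the $\tfrac{z}{2}\log(2\pi)$ and $\zeta'(-1)$ pieces, must cancel exactly between $\log D_n(t,\alpha)$ and $\log D_n(0,\alpha)$; what is designed to survive is precisely $\log G(\alpha+1)-\tfrac{\alpha}{2}\log(2\pi)=c_1(\alpha)$, the half-integer power $(2\pi)^{\alpha/2}$ being generated by the $\tfrac{z}{2}\log(2\pi)$ term of the Barnes-$G$ asymptotic. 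As an independent check I would also run the difference relation \eqref{Pn0probn}: in the scaled limit it yields the increment $c_1(\alpha+1)-c_1(\alpha)$, which ought to equal $\log\Gamma(\alpha+1)-\tfrac12\log(2\pi)$, exactly the increment of $\log\frac{G(\alpha+1)}{(2\pi)^{\alpha/2}}$ dictated by $G(\alpha+2)=\Gamma(\alpha+1)G(\alpha+1)$.

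The main obstacle is the third step: carrying out the matched asymptotic analysis cleanly in the crossover region and exhibiting the \emph{exact} cancellation of all $n$-dependent terms. Keeping careful track of the subleading Barnes-$G$ contributions, so that the $n$-dependence disappears and only the clean, $\alpha$-dependent constant $\log\frac{G(\alpha+1)}{(2\pi)^{\alpha/2}}$ remains, is the delicate and error-prone part; it is exactly the ``hard-to-come-by constant'' flagged in the introduction, and the reason its determination is deferred to the later sections rather than read off from the local expansion.
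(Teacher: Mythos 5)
Your first step---integrating the expansion \eqref{HT0} of $\sigma(s)/s$ term by term, with the constant term $-\alpha^2/4$ producing the logarithm and the divergence at $\xi=\infty$ explaining why $c_1(\alpha)$ cannot be read off locally---is exactly what the paper does, and your spot checks of the coefficients are right. The issue is with how you prioritize the two routes to $c_1(\alpha)$. What you relegate to an ``independent check'' is in fact the paper's entire argument: from \eqref{Pn0probn} one gets \eqref{c1Pn0}, the Coulomb-fluid evaluation \eqref{PnS1S2general} applied to the single quantity $P_n(0;\tfrac{s}{4n},\alpha)$ yields \eqref{Pn0s}, Stirling plus the exact value $(-1)^nP_n(0;0,\alpha)=\Gamma(n+1+\alpha)/\Gamma(1+\alpha)$ turn the ratio into $\tfrac{\Gamma(1+\alpha)}{\sqrt{2\pi}}s^{-\alpha/2-1/4}e^{\sqrt{s}}$, and comparison gives the difference equation $c_1(\alpha+1)-c_1(\alpha)=\log\bigl(\Gamma(1+\alpha)/\sqrt{2\pi}\bigr)$, solved by $\log\tfrac{G(\alpha+1)}{(2\pi)^{\alpha/2}}$. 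The whole point of passing through $P_n(0;t,\alpha)$ is that the ratio of Hankel determinants $D_n(t,\alpha+1)/D_n(t,\alpha)$ collapses, via \eqref{Pn0Dn}, to a single polynomial evaluated at one point, which the Coulomb-fluid formula can handle to the required accuracy.

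By contrast, your designated main route---a direct matched large-$n$ expansion of $\log D_n(t,\alpha)$ in the crossover regime, with exact cancellation of the Barnes-$G$ pieces against \eqref{Dn0}---has a genuine gap: \eqref{PnS1S2general} approximates the polynomials $P_n(z)$ off the support, not the free energy $\log D_n(t,\alpha)$ itself, and extracting the $O(1)$ term of a Hankel determinant from the Coulomb-fluid ansatz is precisely the ``hard-to-come-by constant'' problem the paper is structured to avoid. You correctly identify this step as delicate and error-prone, but as written it is not a proof strategy so much as a restatement of the difficulty; the difference-equation route is the one that actually closes. One further caveat applies to both your check and the paper: the relation $c_1(\alpha+1)-c_1(\alpha)=\log\Gamma(\alpha+1)-\tfrac12\log(2\pi)$ determines $c_1$ only up to an additive $1$-periodic function of $\alpha$, so an anchor such as $c_1(0)=0$ (or a smoothness/growth argument in $\alpha$) is needed to single out $\log\tfrac{G(\alpha+1)}{(2\pi)^{\alpha/2}}$; the paper passes over this silently, and your write-up should not.
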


We find from \eqref{Pn0probn} and \eqref{probs} that
\begin{equation}\label{c1Pn0}
\begin{aligned}
\lim_{n\rightarrow\infty}&\frac{P_{n}(0;\frac{s}{4n},\alpha)}{P_n(0;0,\alpha)}=\frac{\mathbb{P}(s,\alpha+1)}{\mathbb{P}(s,\alpha)}\\
&\sim\exp\left[c_1(\alpha+1)-c_1(\alpha)+\sqrt{s}-\left(\frac{\alpha}{2}+\frac{1}{4}\right)\log s+O\left(s^{-\frac{1}{2}}\right)\right],
\qquad s\rightarrow\infty.
\end{aligned}
\end{equation}
We shall apply Dyson's Coulomb fluid \cite{Dyson1962JMP} to derive the asymptotic formula for $P_n(0;t,\alpha)$.
The result combined with \eqref{c1Pn0} enables us to find the constant $c_1(\alpha)$.

\subsection{The evaluation of $P_n(0;t,\alpha)$ via Dyson's Coulomb fluid}
Recall that the equilibrium density $\rho(x)$ is given by \eqref{densitycon.} with $a=t$:
\begin{gather*}
\rho(x)=\frac{1}{2\pi^2}\sqrt{\frac{b-x}{x-t}}\,P\int_t^b\frac{dy}{y-x}\sqrt{\frac{y-t}{b-y}}v'(y),\qquad x\in(t,b).
\end{gather*}
Here
\begin{gather*}
v(x)=-\log w(x,\alpha)=-\alpha\log x+x.
\end{gather*}
We find with the integral identities in the Apoendix,
\begin{gather}\label{LUEsmallestdensity}
\rho(x)=\frac{1}{2\pi}\sqrt{\frac{b-x}{x-t}}\left(1-\frac{\alpha}{x}\sqrt{\frac{t}{b}}\right).
\end{gather}
Note that condition $\sqrt{tb}>\alpha$ and that $$\frac{d}{dx}\left(\rho(x)\sqrt{\frac{x-t}{b-x}}\right)>0,$$ show that $\rho
(x)>0$ for $t<x<b$.
Hence, from the normalization condition $\int_t^b\rho(x)dx=n$, it follows that
\begin{gather}\label{LUEbcon.}
n=\frac{b-t}{4}+\frac{\alpha}{2}\left(\sqrt{\frac{t}{b}}-1\right),
\end{gather}
from which we obtain a cubic equation satisfied by $b$
\begin{gather}\label{LUEbeq}
b\big(b-2\left(2n+\alpha\right)-t\big)^2=4\alpha^2t.
\end{gather}
Equations \eqref{LUEsmallestdensity}-\eqref{LUEbeq} were derived in \cite{ChenManning1996} (equations (15)-(17) with $\beta=2$).



Since the recurrence coefficient $\alpha_n$, is asymptotic to the centre of mass of the support $[t,b]$, namely $\frac{b+t}{2}$ for large $n$, see for
example \cite{ChenIsmail1997JPA:MG},
we find by using the equality $\alpha_n=2n+1+\alpha+tR_n$ (see (3.9), \cite{BasorChen2009}) that
\begin{align*}
b\sim&2\alpha_n-t=2(2n+1+\alpha+tR_n)-t\\
\sim&4n+2\alpha+t+2t(R_n-1)\\
<&4n+2\alpha+t,
\end{align*}
where the last inequality is due to $R_n<1$.

Now we are in a position to derive the large $n$ expansion for $b$.
\begin{lemma}\label{bt0-expansion}
Provided $t>0$, we have for large $n$
\begin{equation}\label{bt}
\begin{aligned}
b=&4 n+2 \alpha+t-\alpha  t^{\frac{1}{2}}n^{-\frac{1}{2}}+\frac{1}{8}\left(\alpha  t^{\frac{3}{2}}+2 \alpha
   ^2 t^{\frac{1}{2}}\right)n^{-\frac{3}{2}}\\
&-\frac{1}{8}\alpha ^2 t n^{-2}-\frac{3}{128}\left(\alpha  t^{\frac{5}{2}}+4 \alpha ^2 t^{\frac{3}{2}}+4 \alpha ^3t^{\frac{1}{2}}\right)n^{-\frac{5}{2}}\\
&+\frac{1}{16}\left(\alpha ^2 t^2+2 \alpha ^3 t\right)n^{-3}+O\left(n^{-\frac{7}{2}}\right).
\end{aligned}
\end{equation}
Moreover, putting $t=\frac{s}{4n}$ which tends to $0$ as $n\rightarrow\infty$, we find
\begin{equation}\label{bs}
\begin{aligned}
b=&4n+2\alpha+\left(\frac{s}{4}-\frac{\alpha
}{2}\sqrt{s}\right)n^{-1}+\frac{\alpha ^2}{8}\sqrt{s}n^{-2}\\
&+\left(\frac{\alpha}{64} s^{\frac{3}{2}}-\frac{\alpha ^2}{32} s-\frac{3\alpha ^3}{64} \sqrt{s}\right)n^{-3}+O\left(n^{-4}\right).
\end{aligned}
\end{equation}
\end{lemma}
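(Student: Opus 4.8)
The plan is to solve the relation defining $b$ perturbatively in $1/n$. I would start from the normalization condition \eqref{LUEbcon.},
\[
n=\frac{b-t}{4}+\frac{\alpha}{2}\left(\sqrt{\frac{t}{b}}-1\right),
\]
and rearrange it into the fixed-point form
\[
b=4n+2\alpha+t-2\alpha\sqrt{\frac{t}{b}}.
\]
Squaring the last term recovers the cubic \eqref{LUEbeq}, but working with this unsquared form is preferable since it carries the correct branch automatically and avoids the spurious root that squaring introduces. Because $b\sim 4n$ to leading order, the correction $-2\alpha\sqrt{t/b}$ is $O(n^{-1/2})$, so I would write $b=4n+2\alpha+t+\delta$ with $\delta=O(n^{-1/2})$ and determine $\delta$ by iteration.

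To obtain \eqref{bt} with $t>0$ fixed, I would expand $b^{-1/2}=(4n+2\alpha+t+\delta)^{-1/2}$ by the binomial series about $4n$, feeding the previous-order value of $\delta$ back into the right-hand side at each stage. The leading balance gives $\delta\sim-\alpha\sqrt{t}\,n^{-1/2}$, and successive iterations generate exactly the half-integer powers $n^{-1/2},n^{-3/2},n^{-2},n^{-5/2},n^{-3},\dots$ displayed in \eqref{bt}. The absence of an $n^{-1}$ term is automatic: the pure $B^{-1/2}$ expansion with $B=4n+2\alpha+t$ produces only the odd half-integer powers $n^{-1/2},n^{-3/2},\dots$, while the self-consistent correction $\delta$ first re-enters through $\sqrt{t/b}$ at order $n^{-2}$ (yielding $-\tfrac18\alpha^2 t\,n^{-2}$), so no $n^{-1}$ contribution can appear. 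Once the recursion is set up this step is pure binomial bookkeeping.

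For \eqref{bs} I would set $t=s/(4n)$. The cleanest route is to repeat the iteration with this substitution made from the outset; then $2\alpha\sqrt{t/b}=O(n^{-1})$, so $\delta=O(n^{-1})$ and the expansion proceeds in integer powers of $n^{-1}$. Equivalently, one substitutes $t=s/(4n)$ into \eqref{bt} and recollects: each fixed-$t$ term $c\,t^{p}n^{-q}$ becomes $c\,(s/4)^{p}\,n^{-(p+q)}$, so a single power of $n$ in \eqref{bs} is assembled from several terms of \eqref{bt} (for instance the $n^{-3}$ coefficient draws on the $t^{3/2}n^{-3/2}$, $t\,n^{-2}$ and $t^{1/2}n^{-5/2}$ terms). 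A short check then shows that \eqref{bt} carried to $O(n^{-7/2})$ supplies every contribution through order $n^{-3}$, with the remainder genuinely $O(n^{-4})$.

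The algebra is routine; the two points demanding care are the choice of branch---resolved here by starting from \eqref{LUEbcon.} rather than the squared equation---and the legitimacy of the termwise substitution $t=s/(4n)$ in a regime where $t$ is no longer fixed. The latter is the main obstacle: I would either establish that \eqref{bt} holds uniformly for $t$ in a suitable shrinking window, or, to bypass the uniformity question altogether, derive \eqref{bs} directly from the fixed-point relation with $t=s/(4n)$ inserted before expanding, which makes the $O(n^{-1})$ scaling of $\delta$ manifest.
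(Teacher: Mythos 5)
Your proposal is correct and is essentially the paper's argument: posit $b=4n+2\alpha+t+\sum_{j\ge1}a_jn^{-j/2}$, determine the coefficients recursively, and then obtain \eqref{bs} by substituting $t=s/(4n)$ into \eqref{bt} and recollecting powers of $n$. The one substantive difference is the starting equation. The paper substitutes the ansatz into the squared cubic \eqref{LUEbeq}, which produces the sign ambiguity $a_1=\pm\alpha\sqrt{t}$, and then resolves it by invoking the separately derived bound $b<4n+2\alpha+t$ (obtained from $\alpha_n\sim(b+t)/2$, the identity $\alpha_n=2n+1+\alpha+tR_n$ and $R_n<1$). You instead iterate the unsquared fixed-point form $b=4n+2\alpha+t-2\alpha\sqrt{t/b}$ coming directly from \eqref{LUEbcon.}, so the branch is selected automatically and no auxiliary bound is needed; this is a mild but genuine simplification. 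Your closing remark about the legitimacy of the termwise substitution $t=s/(4n)$ identifies a point the paper passes over silently --- it simply sets $t=s/(4n)$ in \eqref{bt} and ``sends $n\to\infty$'' --- and your suggested remedy of rerunning the iteration with $t=s/(4n)$ inserted from the outset (so that the correction is manifestly $O(n^{-1})$ and the expansion proceeds in integer powers) is a clean way to make that step rigorous.
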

\begin{proof}
Given $t>0$, we find from \eqref{LUEbeq} that
\begin{gather*}
b\sim 4n+2\alpha+t+O(n^{-\tfrac{1}{2}}),\qquad n\rightarrow\infty.
\end{gather*}
Hence, we assume the following expansion,
\begin{gather*}
b=4n+2\alpha+t+\sum_{j=1}^{\infty}a_j n^{-\tfrac{j}{2}}.
\end{gather*}
Substituting this into \eqref{LUEbeq}, by comparing the corresponding coefficients on both sides, we get $a_1=\pm\alpha\sqrt{t}$.
Since $b<4n+2\alpha+t$, we have to choose $a_1=-\alpha\sqrt{t}$.
By simple calculations, we obtain \eqref{bt}, from which, by putting $t=\frac{s}{4n}$ and sending $n$ to $\infty$, \eqref{bs} follows.
\end{proof}

According to \eqref{PnS1S2general}, for large $n$, our monic orthogonal polynomials are approximated by
\begin{align*}
P_n(0;t,\alpha)&\sim {\rm e}^{-S_1(0;t,\alpha)-S_2(0;t,\alpha)},
\end{align*}
where
\begin{align*}
{\rm e}^{-S_1(0;t,\alpha)}=&\frac{1}{2}\left[\left(\frac{b}{t}\right)^{\frac{1}{4}}
+\left(\frac{t}{b}\right)^{\frac{1}{4}}\right],
\end{align*}
and
\begin{equation*}
\begin{aligned}
S_2(0;t,\alpha)=&-n\log\left(\frac{\sqrt{-t}+\sqrt{-b}}{2}\right)^2\\
&+\frac{1}{2\pi}\int_t^b \frac{-\alpha\log x+x}{\sqrt{(b-x)(x-t)}}\left[-\frac{\sqrt{tb}}{x}+1\right]dx.
\end{aligned}
\end{equation*}
With the aid of the integral identities listed in the Appendix, and choosing by the branch $-t=t{\rm e}^{\pi i}$ and $-b=b {\rm e}^{\pi i}$, we obtain
\begin{equation*}
\begin{aligned}
{\rm e}^{-S_2(0;t,\alpha)}=&(-1)^n 4^{-n-\alpha}\left(\sqrt{b}+\sqrt{t}\right)^{2n}\left[\left(\tfrac{b}{t}\right)^{\frac{1}{4}}
+\left(\tfrac{t}{b}\right)^{\frac{1}{4}}\right]^{2\alpha}{\rm e}^{-\left(\frac{\sqrt{b}-\sqrt{t}}{2}\right)^2}.
\end{aligned}
\end{equation*}

Finally, by using \eqref{bt}, we give an evaluation of $P_n(0;t,\alpha)$ for large $n$ and thus determine the constant $c_1(\alpha)$.
\begin{theorem} The monic polynomial $P_n(x;t,\alpha)$ orthogonal with respect to $x^{\alpha}{\rm e}^{-x},\alpha>-1$, over $[t,\infty)$ is approximated at $x=0$ by
\begin{align*}
(-1)^nP_n(0;t,\alpha)\sim&\left(4t\right)^{-\frac{\alpha}{2}-\frac{1}{4}}n^{n+\frac{\alpha}{2}+\frac{1}{4}}
{\rm e}^{-n+\sqrt{4nt}},\qquad n\rightarrow\infty.
\end{align*}
Hence, under the assumption that $t\rightarrow0$ and $n\rightarrow\infty$ such that $s=4nt$ is fixed, we find
\begin{gather}\label{Pn0s}
(-1)^nP_n(0;\tfrac{s}{4n},\alpha)\sim s^{-\frac{\alpha}{2}-\frac{1}{4}}n^{n+\alpha+\frac{1}{2}}
{\rm e}^{-n+\sqrt{s}},\qquad n\rightarrow\infty.
\end{gather}
\end{theorem}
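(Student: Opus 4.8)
The plan is to feed the two explicit Coulomb-fluid pieces established just above the statement, namely the values of ${\rm e}^{-S_1(0;t,\alpha)}$ and ${\rm e}^{-S_2(0;t,\alpha)}$, into the approximation $P_n(0;t,\alpha)\sim {\rm e}^{-S_1(0;t,\alpha)-S_2(0;t,\alpha)}$ from \eqref{PnS1S2general}, and then to insert the large-$n$ expansion of $b$ from Lemma \ref{bt0-expansion}. Multiplying the two factors, the sign $(-1)^n$ sitting in ${\rm e}^{-S_2}$ cancels against the $(-1)^n$ we place on the left, and one copy of the bracket $[(b/t)^{1/4}+(t/b)^{1/4}]$ from ${\rm e}^{-S_1}$ merges with the $2\alpha$-th power in ${\rm e}^{-S_2}$, so that
\begin{equation*}
(-1)^n P_n(0;t,\alpha)\sim \frac{1}{2}\,4^{-n-\alpha}\left[\left(\frac{b}{t}\right)^{\frac14}+\left(\frac{t}{b}\right)^{\frac14}\right]^{2\alpha+1}\left(\sqrt{b}+\sqrt{t}\right)^{2n}{\rm e}^{-\left(\frac{\sqrt{b}-\sqrt{t}}{2}\right)^2}.
\end{equation*}
Everything then reduces to an asymptotic analysis of this single closed expression using $b=4n+2\alpha+t-\alpha t^{1/2}n^{-1/2}+\cdots$ from \eqref{bt}.

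First I would isolate the exponentially large factors. Writing $(\sqrt{b}+\sqrt{t})^{2n}=b^{n}(1+\sqrt{t/b})^{2n}$ and using $b\sim 4n$, the factor $b^{n}=(4n)^{n}(1+\tfrac{2\alpha+t}{4n}+\cdots)^{n}$ produces $4^{n}n^{n}$ together with a bounded factor; the $4^{n}$ here cancels the $4^{-n}$ in the prefactor, leaving the desired $n^{n}$. The Gaussian factor contributes, through $-(\sqrt{b}-\sqrt{t})^2/4=-b/4+\sqrt{bt}/2-t/4$, the exponential ${\rm e}^{-b/4}\sim {\rm e}^{-n}$. The crucial point is the half-integer power of $n$: expanding $2n\log(1+\sqrt{t/b})=2n\sqrt{t/b}-n(t/b)+\cdots$ produces a $\sqrt{n}$-order term $\sqrt{nt}$, while the cross term $\sqrt{bt}/2$ in the Gaussian factor produces a second $\sqrt{nt}$; added together these give exactly ${\rm e}^{2\sqrt{nt}}={\rm e}^{\sqrt{4nt}}$.

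It remains to collect the algebraic prefactor. Since $b/t\to\infty$ for fixed $t$, the bracket behaves like $(b/t)^{(2\alpha+1)/4}\sim (4n)^{\alpha/2+1/4}t^{-\alpha/2-1/4}$, which supplies the extra power $n^{\alpha/2+1/4}$ and the factor $t^{-\alpha/2-1/4}$. Combining the residual power of $4$ from this bracket, namely $4^{\alpha/2+1/4}$, with the prefactor $\tfrac12\,4^{-\alpha}$ gives $\tfrac12\,4^{-\alpha/2+1/4}=4^{-\alpha/2-1/4}$ (the factor $\tfrac12$ absorbing the extra $4^{1/2}=2$), and together with $t^{-\alpha/2-1/4}$ this is precisely $(4t)^{-\alpha/2-1/4}$, yielding the first displayed asymptotic. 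The second formula then follows by setting $t=\tfrac{s}{4n}$: one checks $(4t)^{-\alpha/2-1/4}=s^{-\alpha/2-1/4}n^{\alpha/2+1/4}$, so the power of $n$ upgrades to $n^{n+\alpha+1/2}$, while $\sqrt{4nt}=\sqrt{s}$.

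The main obstacle is the simultaneous bookkeeping of three different scales inside one expansion: the $\sqrt{n}$-order contributions to the exponent arise from two distinct factors, the logarithm of $1+\sqrt{t/b}$ and the cross term $\sqrt{bt}/2$, and each must be expanded to one subleading order in $b$ to confirm that they combine into $\sqrt{4nt}$ with no spurious surviving term; meanwhile the powers of $4$, the factor $\tfrac12$, and the powers of $t$ must be tracked together so that they coalesce exactly into $(4t)^{-\alpha/2-1/4}$. The remaining multiplicative corrections are bounded functions of $t$ that are $n$-independent to leading order; since the formula is ultimately applied in the double-scaling regime $t=\tfrac{s}{4n}\to 0$, every such bounded $t$-dependent factor tends to $1$, so they do not affect the sharp statement \eqref{Pn0s} that is needed to pin down $c_1(\alpha)$.
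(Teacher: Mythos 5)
Your proposal is correct and follows essentially the same route the paper intends: the theorem is obtained by multiplying the displayed Coulomb-fluid factors ${\rm e}^{-S_1(0;t,\alpha)}$ and ${\rm e}^{-S_2(0;t,\alpha)}$, substituting the expansion of $b$ from Lemma \ref{bt0-expansion}, and tracking the two $\sqrt{n}$-order contributions that combine into ${\rm e}^{\sqrt{4nt}}$ together with the powers of $4$ and $t$ that assemble into $(4t)^{-\alpha/2-1/4}$. Your closing remark that the residual bounded $t$-dependent factors (such as the leftover ${\rm e}^{-t/2}$ in the exponent) tend to $1$ in the double-scaling regime $t=\tfrac{s}{4n}\to 0$ is exactly the point needed for \eqref{Pn0s}, and is if anything more careful than the paper's own statement of the fixed-$t$ asymptotic.
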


\begin{corollary} The constant $c_1(\alpha)$, appearing in \eqref{probs}, is  identified to be
\begin{gather*}
c_1(\alpha)=\log \frac{G(\alpha+1)}{(2\pi)^{\alpha/2}}
\end{gather*}
where $G(\cdot)$ is the Barns-G function.
\end{corollary}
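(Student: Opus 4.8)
The plan is to use the ratio identity \eqref{Pn0probn}, which expresses $\mathbb{P}(s,\alpha+1)/\mathbb{P}(s,\alpha)$ through the evaluation $P_n(0;t,\alpha)$, in order to convert the determination of the large-$s$ constant $c_1(\alpha)$ into a recursion in the parameter $\alpha$, and then to match that recursion against the functional equation of the Barnes-$G$ function.

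First I would assemble the two asymptotic ingredients that are already in hand. On the one hand \eqref{Pn0s} gives the numerator
\begin{gather*}
(-1)^n P_n(0;\tfrac{s}{4n},\alpha)\sim s^{-\frac{\alpha}{2}-\frac14}\,n^{\,n+\alpha+\frac12}\,{\rm e}^{-n+\sqrt s},\qquad n\to\infty,
\end{gather*}
while the closed form $P_n(0;0,\alpha)=(-1)^n\Gamma(n+1+\alpha)/\Gamma(1+\alpha)$ together with Stirling's formula controls the denominator: writing $\Gamma(n+1+\alpha)\sim\sqrt{2\pi}\,(n+\alpha)^{n+\alpha+\frac12}{\rm e}^{-(n+\alpha)}$ and using $(1+\alpha/n)^{n+\alpha+\frac12}\to{\rm e}^{\alpha}$, the factor ${\rm e}^{\alpha}$ cancels the ${\rm e}^{-\alpha}$ and yields $\Gamma(n+1+\alpha)\sim\sqrt{2\pi}\,n^{\,n+\alpha+\frac12}{\rm e}^{-n}$. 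Dividing, the $n^{\,n+\alpha+\frac12}{\rm e}^{-n}$ factors cancel exactly, leaving
\begin{gather*}
\lim_{n\to\infty}\frac{P_n(0;\tfrac{s}{4n},\alpha)}{P_n(0;0,\alpha)}\sim\frac{\Gamma(1+\alpha)}{\sqrt{2\pi}}\,s^{-\frac{\alpha}{2}-\frac14}\,{\rm e}^{\sqrt s}.
\end{gather*}

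Next I would feed this into \eqref{c1Pn0}, whose left-hand side is precisely this limit and whose right-hand side carries the asymptotics $\exp\!\big[c_1(\alpha+1)-c_1(\alpha)+\sqrt s-(\tfrac\alpha2+\tfrac14)\log s+O(s^{-1/2})\big]$. The $\sqrt s$ term and the $(\tfrac\alpha2+\tfrac14)\log s$ term match automatically, so comparing the $s$-independent (constant) terms isolates the recursion
\begin{gather*}
c_1(\alpha+1)-c_1(\alpha)=\log\Gamma(\alpha+1)-\tfrac12\log(2\pi).
\end{gather*}
It then remains to verify that the proposed closed form solves this equation: with $c_1(\alpha)=\log G(\alpha+1)-\tfrac\alpha2\log(2\pi)$ one has $c_1(\alpha+1)-c_1(\alpha)=\log\frac{G(\alpha+2)}{G(\alpha+1)}-\tfrac12\log(2\pi)$, and the Barnes functional equation $G(\alpha+2)=\Gamma(\alpha+1)G(\alpha+1)$ reduces the quotient to $\Gamma(\alpha+1)$, reproducing the right-hand side exactly.

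I expect the genuine obstacle to lie not in the recursion but in its uniqueness. A first-order difference equation in $\alpha$ fixes $c_1$ only up to an arbitrary $1$-periodic additive function, so the matching above cannot by itself rule out such a term, and an independent anchoring value is required. The natural anchor is $\alpha=0$, where the proposed formula predicts $c_1(0)=\log G(1)=0$; tying this down can be achieved through the explicit finite-$n$ normalization constants, namely the Barnes-$G$ evaluations of $D_n(0,\alpha)$ in \eqref{Dn0} and of $D_n(0)$ for the GUE, via the interval-doubling of Section~2 that expresses the Gaussian gap probability through the $\alpha=\pm\tfrac12$ Laguerre ensembles. Establishing rigorously that the $1$-periodic ambiguity vanishes—that $c_1$ is exactly the smooth branch pinned by these explicit constants—is the delicate point, and is precisely the ``constant problem'' familiar from large-gap asymptotics in random matrix theory.
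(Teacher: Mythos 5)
Your proposal follows essentially the same route as the paper: combine the Coulomb-fluid evaluation \eqref{Pn0s} with Stirling's formula and the explicit value $P_n(0;0,\alpha)=(-1)^n\Gamma(n+1+\alpha)/\Gamma(1+\alpha)$ to obtain the difference equation $c_1(\alpha+1)-c_1(\alpha)=\log\left(\Gamma(1+\alpha)/\sqrt{2\pi}\right)$, then solve it via the Barnes-$G$ functional equation. Your closing remark correctly identifies the one point the paper passes over in silence --- the difference equation fixes $c_1$ only up to a $1$-periodic function, so an anchor such as $c_1(0)=\log G(1)=0$ is needed to pin down the smooth branch --- which is a genuine (and standard) gap in the paper's own argument rather than in yours.
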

\begin{proof} In what follows, the symbol $\sim$ refers to 'asymptotic to' for large $n$.

Combining Stirling's formula \cite{Lebedev}
\begin{gather*}
n!\sim\sqrt{2\pi n}\left(\frac{n}{{\rm e}}\right)^n
\end{gather*}
and the standard asymptotic approximation for Gamma function
\begin{gather*}
\Gamma(n+\alpha)\sim\Gamma(n)n^{\alpha},\qquad \alpha\in\mathbb{C},
\end{gather*}
we get
\begin{gather*}
\left(\frac{n}{{\rm e}}\right)^n\sim\frac{\Gamma(n+1+\alpha)}{\sqrt{2\pi}}n^{-\alpha-\frac{1}{2}}.
\end{gather*}
or equivalently,
\begin{gather*}
n^{n+\alpha+1/2}\sim \frac{\Gamma(n+1+\alpha)}{\sqrt{2\pi}}\:{\rm e}^{n}.
\end{gather*}
Hence, it follows from \eqref{Pn0s} that
\begin{align*}
(-1)^nP_n(0;\tfrac{s}{4n},\alpha)\sim&
\frac{\Gamma(n+1+\alpha)}{\sqrt{2\pi}}s^{-\frac{\alpha}{2}-\frac{1}{4}}
{\rm e}^{\sqrt{s}}.
\end{align*}
Therefore, noting the explicit evaluation of the monic Laguerre polynomials at $0,$
\begin{align*}
(-1)^nP_{n}(0;0,\alpha)=\frac{\Gamma(n+1+\alpha)}{\Gamma(1+\alpha)},
\end{align*}
we finally obtain
\begin{align*}
\frac{P_n(0;\tfrac{s}{4n},\alpha)}{P_n(0;0,\alpha)}\sim&\frac{\Gamma(1+\alpha)}{\sqrt{2\pi}}s^{-\frac{\alpha}{2}-\frac{1}{4}}
{\rm e}^{\sqrt{s}}\\
=&\exp\left[\log\left(\frac{\Gamma(1+\alpha)}{\sqrt{2\pi}}\right)+\sqrt{s}-\left(\frac{\alpha}{2}+\frac{1}{4}\right)\log s\right].
\end{align*}
Comparing this with \eqref{c1Pn0} yields
\begin{gather*}
c_1(\alpha+1)-c_1(\alpha)=\log\left(\frac{\Gamma(1+\alpha)}{\sqrt{2\pi}}\right),
\end{gather*}
which, according to the properties of the Barnes-G function, leads to our conclusion.
\end{proof}

\section{The Asymptotics of the Gap Probability Distribution of the Gaussian Unitary Ensembles}
The gap probability distribution of GUE on $(-\tfrac{b}{\sqrt{2n}},\tfrac{b}{\sqrt{2n}})$ with $n$ large enough is described by Ehrhardt \cite{Ehrhardt2006}:
\begin{align*}
\mathbb{P}(b)=\det (I-K_{2b}),
\end{align*}
where $\det (I-K_{2b})$ is the Fredholm determinant with $K_{2b}$ having the kernel $\frac{\sin(x-y)}{\pi (x-y)}\chi_{(-b,b)}(y)$. For this reason, we shall make use of the asymptotic expression \eqref{probs} for large $n$ Hankel determinant to deal with our problem
under the double scaling
\begin{gather*}
b:=\sqrt{2n}\,a=\frac{\tau}{2}.
\end{gather*}
Note that $\tau=2\sqrt{2n}\,a$.
\begin{theorem} The probability that the interval $(-\tfrac{b}{\sqrt{2n}},\tfrac{b}{\sqrt{2n}}), n\rightarrow\infty$, is free of eigenvalues in the Gaussian unitary ensembles is approximated by
\begin{equation}\label{limitprobG}
\begin{aligned}
\log\mathbb{P}(b)=&-\frac{b^2}{2}-\frac{\log b}{4}+\frac{\log 2}{12}+3\zeta'(-1)\\
&+\frac{1}{32\,b^{2}}+\frac{5}{128\,b^{4}}+\frac{131}{768\,b^{6}}+O\left(b^{-8}\right), \qquad b\rightarrow\infty.
\end{aligned}
\end{equation}
\end{theorem}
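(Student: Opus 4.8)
The plan is to reduce the symmetric GUE gap probability to a product of two Laguerre smallest-eigenvalue distributions through the interval-doubling identity of Section 2, feed in the large-$s$ expansion \eqref{probs}, and then identify the constant via special values of the Barnes $G$-function. First I would observe that $\widetilde{D}_m(a)$ is precisely the LUE Hankel determinant with hard edge $t=a^2$ and parameter $\alpha=-\tfrac12$, while $\widehat{D}_m(a)$ is the same object with $\alpha=+\tfrac12$. Hence the doubling relation $D_n(a)=\widetilde{D}_k(a)\widehat{D}_k(a)$ (for $n=2k$), after dividing by its value at $a=0$, collapses to
\[
\mathbb{P}(a,2k)=\mathbb{P}\big(a^2,-\tfrac12,k\big)\,\mathbb{P}\big(a^2,\tfrac12,k\big).
\]
The odd case $n=2k+1$ produces $\mathbb{P}(a^2,-\tfrac12,k+1)\,\mathbb{P}(a^2,\tfrac12,k)$, which has the same limit as $n\to\infty$.

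Next I would fix the scaling. With $b=\sqrt{2n}\,a$ and $k=n/2$, the Laguerre double-scaling variable $s=4kt$, evaluated at the hard edge $t=a^2$, equals $4\cdot\tfrac n2\cdot\tfrac{b^2}{2n}=b^2$. Therefore, letting $n\to\infty$ with $b$ fixed and invoking \eqref{Psdef},
\[
\log\mathbb{P}(b)=\log\mathbb{P}(b^2,-\tfrac12)+\log\mathbb{P}(b^2,\tfrac12),
\]
so the large-$b$ behaviour follows by adding two copies of \eqref{probs}, taken at $\alpha=\pm\tfrac12$ with $s=b^2$.

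Upon adding, every half-integer power of $s$ in \eqref{probs} carries a coefficient odd in $\alpha$ and hence cancels between $\alpha=+\tfrac12$ and $\alpha=-\tfrac12$ (in particular the $\alpha\sqrt s$ term vanishes), while the integer powers and the $\log s$ term have even coefficients and double. Since $s=b^2$, only even powers of $b$ survive. Putting $\alpha^2=\tfrac14$ one reads off $-\tfrac s2=-\tfrac{b^2}2$, $-\tfrac18\log s=-\tfrac14\log b$, and the coefficients $\tfrac1{32}$, $\tfrac5{128}$, $\tfrac{131}{768}$ of $b^{-2}$, $b^{-4}$, $b^{-6}$, reproducing the series in \eqref{limitprobG}. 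This bookkeeping is routine once the factorization and scaling are in place.

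The delicate step, which I expect to be the main obstacle, is the constant. From the corollary giving $c_1(\alpha)=\log\frac{G(\alpha+1)}{(2\pi)^{\alpha/2}}$, the two integration constants sum to
\[
c_1(\tfrac12)+c_1(-\tfrac12)=\log\frac{G(3/2)}{(2\pi)^{1/4}}+\log\!\big(G(1/2)(2\pi)^{1/4}\big)=\log\big(G(1/2)\,G(3/2)\big).
\]
Using $G(3/2)=\Gamma(1/2)\,G(1/2)=\sqrt\pi\,G(1/2)$ reduces this to $\log\big(\sqrt\pi\,G(1/2)^2\big)$, and the crux is the closed form $G(1/2)=2^{1/24}\pi^{-1/4}{\rm e}^{1/8}A^{-3/2}$, where $A$ is the Glaisher--Kinkelin constant satisfying $\log A=\tfrac1{12}-\zeta'(-1)$. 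Substituting collapses the constant to $\tfrac1{12}\log 2+\tfrac14-3\log A=\tfrac{\log 2}{12}+3\zeta'(-1)$, exactly the constant in \eqref{limitprobG}. The entire difficulty is thus concentrated in this Barnes $G$ special-value computation rather than in the asymptotic expansion itself.
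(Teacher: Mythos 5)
Your proposal is correct and follows essentially the same route as the paper: the interval-doubling factorization $\mathbb{P}(b)=\mathbb{P}(b^2,-\tfrac12)\,\mathbb{P}(b^2,\tfrac12)$, summation of two copies of \eqref{probs} at $\alpha=\pm\tfrac12$ with $s=b^2$, and evaluation of $c_1(\tfrac12)+c_1(-\tfrac12)=\log\bigl[G(\tfrac12)^2\Gamma(\tfrac12)\bigr]$. The only cosmetic difference is that you express $G(\tfrac12)$ via the Glaisher--Kinkelin constant, whereas the paper cites Voros's formula $G(\tfrac12)={\rm e}^{3\zeta'(-1)/2}\pi^{-1/4}2^{1/24}$ directly; the two are equivalent.
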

\begin{proof}
Recall that the gap probability of GUE on $(-a,a)$ is given by
\begin{gather*}
\mathbb{P}(a,n)=\frac{D_n(a)}{D_n(0)},
\end{gather*}
and the Hankel determinant $D_n(a)$ is connected with the smallest eigenvalue distribution of LUE by the relation
\begin{gather*}
D_n(a)=
\begin{cases}
\widetilde{D}_{k+1}\,\widehat{D}_{k}, & n=2k+1,\\
\widetilde{D}_{k}\,\widehat{D}_{k}, & n=2k.
\end{cases}
\end{gather*}
Here
\begin{align*}
\widetilde{D}_m(a):=&\det \left(\int_{a^2}^{\infty}t^{i+j}t^{-\frac{1}{2}}{\rm e}^{-t}dt\right)_{i,j=0}^{m-1},\\
\widehat{D}_m(a):=&\det \left(\int_{a^2}^{\infty}t^{i+j}t^{\frac{1}{2}}{\rm e}^{-t}dt\right)_{i,j=0}^{m-1}.
\end{align*}
From this, we find
\begin{align*}
\mathbb{P}(b):=&\lim_{n\rightarrow\infty}\mathbb{P}\left(a,n\right)=\lim_{n\rightarrow\infty}\frac{D_n\left(a\right)}{D_n(0)}\\
=&\lim_{k\rightarrow\infty}\frac{\widetilde{D}_{k}\left(a\right)}{\widetilde{D}_{k}(0)}
\cdot\lim_{k\rightarrow\infty}\frac{\widehat{D}_{k}\left(a\right)}{\widehat{D}_{k}(0)},
\end{align*}
so that, in view of \eqref{LUEgapHankel} and noting that $b^2\sim 4k\,a^2$ as $k\rightarrow\infty$,
\begin{align*}
\mathbb{P}(b)=&\lim_{k\rightarrow\infty}\mathbb{P}(a^2,-\tfrac{1}{2},k)\cdot\lim_{k\rightarrow\infty}\mathbb{P}(a^2,\tfrac{1}{2},k)\\
=&\mathbb{P}\left(b^2,-\tfrac{1}{2}\right)\cdot\mathbb{P}\left(b^2,\tfrac{1}{2}\right).
\end{align*}
Here $\mathbb{P}(a^2,\alpha,k)$ is the probability that all the eigenvalues of $k\times k$ Hermitian matrices with weight $x^{\alpha}{\rm e}^{-x}$ are greater than $a^2$ and $\mathbb{P}\left(b^2,\alpha\right)$ defined by \eqref{Psdef} is the scaled limiting probability of $\mathbb{P}(a^2,\alpha,k)$, i.e. $\mathbb{P}\left(b^2,\alpha\right)=\lim_{k\rightarrow\infty}\mathbb{P}\left(\tfrac{b^2}{4k},\alpha,k\right)$.

Therefore, according to \eqref{probs}, we see that
\begin{align*}
\log\mathbb{P}(b)=&\log\mathbb{P}\left(b^2,-\tfrac{1}{2}\right)+\log\mathbb{P}\left(b^2,\tfrac{1}{2}\right)\\
=&c_1\left(-\tfrac{1}{2}\right)+c_1(\tfrac{1}{2})-\frac{b^2}{2}-\frac{\log b }{4}+\frac{1}{32\,b^2}\\
&+\frac{5}{128\,b^4}
+\frac{131}{768\,b^6}+O\left(b^{-8}\right),\qquad b\rightarrow\infty,
\end{align*}
where the constant term reads
\begin{gather*}
c_1\left(-\tfrac{1}{2}\right)+c_1(\tfrac{1}{2})=\log \left[G(\tfrac{1}{2})G(\tfrac{3}{2})\right]=\log\left[G(\tfrac{1}{2})^2\Gamma(\tfrac{1}{2})\right].
\end{gather*}
taking note that $G(\tfrac{3}{2})=\Gamma(\tfrac{1}{2})G(\tfrac{1}{2}).$
\\
\noindent
It follows from eq.(6.39) of Voros \cite{Voros1987}, that,
\begin{gather*}
G(\tfrac{1}{2})={\rm e}^{3\zeta'(-1)/2}\pi^{-1/4}2^{1/24},
\end{gather*}
where $\zeta(\cdot)$ is the Riemann zeta function, and lead to
\begin{gather*}
c_1\left(-\tfrac{1}{2}\right)+c_1(\tfrac{1}{2})=\frac{\log 2}{12}+3\zeta'(-1).
\end{gather*}
The proof is completed.
\end{proof}
\begin{remark}
Our asymptotic formula \eqref{limitprobG} coincides with the one found by Ehrhardt \cite{Ehrhardt2006}.
\end{remark}
\section{The Smallest Eigenvalue Distribution of the Jacobi Unitary Ensembles}

Let $\mathbb{P}(t,\alpha,\beta,n)$ denote the probability that all the eigenvalues of the Jacobi unitary ensembles with the weight $x^{\alpha}(1-x)^{\beta},~x\in[0,1]$, are between $t$ and 1, where $t>0.$
It was shown in \cite{ChenZhang2010} that
\begin{gather*}
\sigma_n(t):=t(t-1)\frac{d}{dt}\log \mathbb{P}(t,\alpha,\beta,n)+d_1t+d_2
\end{gather*}
satisfies the Jimbo-Miwa-Okamoto $\sigma$ form of Painlev\'{e} VI. Here $d_1$ and $d_2$ are constants depending on $n,\alpha$ and $\beta$.
Under the assumption that $t\rightarrow0$ and $n\rightarrow\infty$ such that $s=4n^2t$ is fixed,
we prove that the $\sigma$ form of Painlev\'{e} VI is reduced to the $\sigma$ form of Painlev\'{e} of a particular III satisfied by $s\frac{d}{ds}\log\mathbb{P}(\frac{s}{4n^2},\alpha,\beta,n)$.
Thus, once again we can derive the asymptotic expansion for the double-scaled probability.
\\
\noindent
{\bf Remark} It will be shown later that the parameter $\beta$  does not appear in this Painleve III.
\\
\noindent
{\bf Remark} We shall see later that,
 \begin{align*}
\mathbb{P}\left(\frac{s}{4n^2},\alpha,\beta,n\right)\sim&\exp\left[c_2(\alpha,\beta)-\frac{s}{4}+\alpha \sqrt{s}-\frac{\alpha^2}{4}
\log s+O\left(s^{-\frac{1}{2}}\right)\right].
\end{align*}
\noindent
{\bf Remark} The constant reads.
\begin{gather*}
c_2(\alpha,\beta)=\log \left[\frac{G(\alpha+1)G^2(\beta+1)}{(2\pi)^{(\alpha+\beta)/2}}\right]
+\frac{\beta(\beta-1)}{2}-(\beta+\tfrac{1}{2})\log\Gamma(\beta),\;\beta>0
\end{gather*}
whose determination is based on the evaluation, at $z=0$ and $z=1$, of the monic polynomials $P_n(z;t,\alpha,\beta)$ orthogonal with respect to $x^{\alpha}(1-x)^{\beta}$ over the interval $[t,1]$. \\
\noindent
The constant $c_2(\alpha,\beta)$ appears to be new.

Consider the Jacobi weight
\begin{gather*}
w(x,\alpha,\beta):=x^{\alpha}(1-x)^{\beta},\qquad x\in[0,1],\quad\alpha,\beta>0.
\end{gather*}

As usual, the probability that all the eigenvalues are between $t$ and 1,  in the Jacobi unitary ensembles reads
\begin{align}
\mathbb{P}(t,\alpha,\beta,n)&=\frac{\frac{1}{n!}\int_{(t,1)^n}{\prod\limits_{1\leq i<j\leq n}(x_{j}-x_{i})}^{2}\prod\limits^{n}_{k=1}w(x_k,\alpha,\beta)dx_k}{\frac{1}{n!}\int_{(0,1)^n}{\prod\limits_{1\leq i<j\leq n}(x_{j}-x_{i})}^{2}\prod\limits^{n}_{k=1}w(x_k,\alpha,\beta)dx_k}\non\\
&=\frac{\det\left(\int_t^{1}x^{i+j}w(x,\alpha,\beta)dx\right)_{i,j=0}^{n-1}
}{\det\left(\int_0^{1}x^{i+j}w(x,\alpha,\beta)dx\right)_{i,j=0}^{n-1}}=:\frac{D_n(t,\alpha,\beta)}{D_n(0,\alpha,\beta)},\label{PnDnJ}
\end{align}
where $D_n(0,\alpha,\beta)$ is given by (\cite{Mehta2006}, p. 310)
\begin{align}
D_n(0,\alpha,\beta)=&\prod_{j=0}^{n-1}\frac{j!\Gamma(j+\alpha+1)\Gamma(j+\beta+1)}{\Gamma(n+j+\alpha+\beta+1)}\label{Dn0J}\\
=&G(n+1)\frac{G(n+\alpha+1)G(n+\beta+1)G(n+\alpha+\beta+1)}{G(\alpha+1)G(\beta+1)G(2n+\alpha+\beta+1)}\non\\
=&D_n(0,\beta,\alpha).\non
\end{align}

Let $h_j(t,\alpha,\beta)$ be the square of the $L^2$ norm of the monic polynomials orthogonal with respect to $w(x,\alpha,\beta)$ over $[t,1]$, i.e.
\begin{gather*}
h_j(t,\alpha,\beta)\delta_{jk}:=\int_t^{1}P_j(x;t,\alpha,\beta)P_k(x;t,\alpha,\beta)w(x,\alpha,\beta)dx.
\end{gather*}
Note the monomial expansion for $P_j$:
\begin{gather*}
P_j(x;t,\alpha,\beta)=x^j+p(j,t,\alpha,\beta) x^{j-1}+\cdots+P_j(0;t,\alpha,\beta),
\end{gather*}
and
\begin{gather*}
D_n(t,\alpha,\beta)=\prod_{j=0}^{n-1}h_j(t,\alpha,\beta).
\end{gather*}
Again the following integral and determinant representations for the orthogonal polynomials hold:
\begin{align*}
P_{n}(z;t,\alpha,\beta)=&\frac{1}{D_n(t,\alpha,\beta)}
\frac{1}{n!}\int_{(t,1)^{n}}\prod\limits_{1\leq i<j\leq n}(x_{j}-x_{i})^2\prod\limits^{n}_{k=1}(z-x_k)w(x_k,\alpha,\beta)dx_k\\
=&\frac{\det\left(\int_t^{1}x^{i+j}(z-x)x^{\alpha}(1-x)^{\beta}dx\right)_{i,j=0}^{n-1}}{\det\left(\int_t^{1}x^{i+j}x^{\alpha}(1-x)^{\beta}dx\right)_{i,j=0}^{n-1}},
\end{align*}
from which we find
\begin{gather*}
P_{n}(0;t,\alpha,\beta)=(-1)^n\frac{D_n(t,\alpha+1,\beta)}{D_n(t,\alpha,\beta)},
\end{gather*}
and
\begin{gather*}
P_{n}(1;t,\alpha,\beta)=\frac{D_n(t,\alpha,\beta+1)}{D_n(t,\alpha,\beta)}.
\end{gather*}
\noindent
{\bf Remark} The equation above will be instrumental for the determination of $c_2(\alpha,\beta).$
As a consequence, we find, according to \eqref{PnDnJ}
\begin{gather}\label{Pn0DnJ}
\frac{P_{n}(0;t,\alpha,\beta)}{P_n(0;0,\alpha,\beta)}=\frac{\mathbb{P}(t,\alpha+1,\beta,n)}{\mathbb{P}(t,\alpha,\beta,n)},
\end{gather}
and
\begin{gather}\label{Pn1DnJ}
\frac{P_{n}(1;t,\alpha,\beta)}{P_n(1;0,\alpha,\beta)}=\frac{\mathbb{P}(t,\alpha,\beta+1,n)}{\mathbb{P}(t,\alpha,\beta,n)}.
\end{gather}
Moreover, by \eqref{Dn0J}, we obtain the explicit formulas
\begin{align}
(-1)^nP_{n}(0;0,\alpha,\beta)=&\frac{D_n(0,\alpha+1,\beta)}{D_n(0,\alpha,\beta)}\nonumber\\
=&\frac{\Gamma(n+\alpha+\beta+1)\Gamma(n+\alpha+1)}{\Gamma(2n+\alpha+\beta+1)\Gamma(\alpha+1)},\label{Pn00J}
\end{align}
and
\begin{align}
P_n(1;0,\alpha,\beta)=&\frac{D_n(0,\alpha,\beta+1)}{D_n(0,\alpha,\beta)}\nonumber\\
=&\frac{\Gamma(n+\alpha+\beta+1)\Gamma(n+\beta+1)}{\Gamma(2n+\alpha+\beta+1)\Gamma(\beta+1)}\nonumber\\
=&(-1)^nP_{n}(0;0,\beta,\alpha)\label{Pn10J}.
\end{align}

\subsection{The $\sigma$ form of Painlev\'{e} VI for finite $n$ and of Panilev\'{e} III \\
 under double scaling}
The following Hankel determinant was studied in \cite{ChenZhang2010}
\begin{gather*}
D_n(t,\alpha,\beta)=\det\left(\int_0^{1}x^{i+j}x^{\alpha}(1-x)^{\beta}\{A+B\,\theta(x-t)\}dx\right)_{i,j=0}^{n-1},
\end{gather*}
where $A\geq0, A+B\geq0$. The special case $A=0, B=1$ reads
\begin{align*}
D_n(t,\alpha,\beta)=&\det\left(\int_0^{1}x^{i+j}x^{\alpha}(1-x)^{\beta}\theta(x-t)dx\right)_{i,j=0}^{n-1}\\
=&\det\left(\int_t^{1}x^{i+j}x^{\alpha}(1-x)^{\beta}dx\right)_{i,j=0}^{n-1},
\end{align*}
the Hankel determinant that concerns us.

Hence, in order to carry out large $n$ analysis of $\mathbb{P}(t,\alpha,\beta,n)$ , we first restate the main result in \cite{ChenZhang2010}.
Here we use the symbol $\sigma_n(t)$ instead of $\sigma(t)$ for the convenience of later discussion.

\begin{proposition}\label{sigmaneqJ}
The quantity
\begin{gather}\label{sigmandefJ}
\sigma_n(t):=t(t-1)\frac{d}{dt}\log D_n(t,\alpha,\beta)+d_1t+d_2,
\end{gather}
with
\begin{align*}
d_1=&-\frac{(2n+\alpha+\beta)^2}{4},\\
d_2=&\frac{1}{4}\left(2n(n+\alpha+\beta)+\beta(\alpha+\beta)\right),
\end{align*}
satisfies the following Jimbo-Miwa-Okamoto \cite{JimboMiwa1981} $\sigma$ form of Painlev\'{e} VI
\begin{equation}\label{sigmanPviJ}
\begin{aligned}
\sigma_n'\{t(t-1)\sigma_n''\}^2+\{2\sigma_n'(t\sigma_n'-\sigma_n)-(\sigma_n')^2-\nu_1\nu_2\nu_3\nu_4\}^2\\
=(\sigma_n'+\nu_1^2)(\sigma_n'+\nu_2^2)(\sigma_n'+\nu_3^2)(\sigma_n'+\nu_4^2),
\end{aligned}
\end{equation}
with parameters
\begin{gather*}
\nu_1=\frac{\alpha+\beta}{2},\qquad\nu_2=\frac{\beta-\alpha}{2},\qquad\nu_3=\nu_4=\frac{2n+\alpha+\beta}{2},
\end{gather*}
and the initial conditions
\begin{gather*}
\sigma_n(0)=d_2,\qquad \sigma_n'(0)=d_1.
\end{gather*}
\end{proposition}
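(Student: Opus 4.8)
The plan is to analyze $D_n(t,\alpha,\beta)$ through the monic polynomials $P_n(x;t,\alpha,\beta)$ orthogonal on $[t,1]$ with respect to $x^\alpha(1-x)^\beta$, viewed as the Jacobi weight on $[0,1]$ multiplied by the Heaviside factor $\theta(x-t)$, exactly the $A=0,\,B=1$ reduction recorded just before the statement. Writing $\mathrm{w}(x,t):=x^\alpha(1-x)^\beta\theta(x-t)$, the logarithmic derivative
\[
-\frac{d}{dx}\log\mathrm{w}(x,t)=-\frac{\alpha}{x}+\frac{\beta}{1-x}-\frac{\delta(x-t)}{\theta(x-t)}
\]
carries, besides the fixed poles at $x=0$ and $x=1$ from the background weight, a movable simple pole at $x=t$ produced by the jump. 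First I would establish the ladder (lowering) relation $P_n'(z)=\beta_n(t)A_n(z)P_{n-1}(z)-B_n(z)P_n(z)$, in which $A_n$ and $B_n$ are rational in $z$ with poles only at $0,1,t$; their residues at $z=t$ are controlled by the two boundary quantities
\[
R_n(t):=\frac{t^\alpha(1-t)^\beta}{h_n(t,\alpha,\beta)}\,P_n^2(t;t,\alpha,\beta),\qquad r_n(t):=\frac{t^\alpha(1-t)^\beta}{h_{n-1}(t,\alpha,\beta)}\,P_n(t;t,\alpha,\beta)\,P_{n-1}(t;t,\alpha,\beta).
\]

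Next I would impose the two compatibility conditions together with the associated sum rule satisfied by $A_n$ and $B_n$, and equate the partial-fraction data at the three poles $0$, $1$, $t$. This produces a closed system of algebraic and difference equations that tie $R_n$ and $r_n$ to the recurrence coefficients $\alpha_n(t),\beta_n(t)$ and to the subleading coefficient $p(n,t,\alpha,\beta)$ of $P_n$. In parallel I would compute the $t$-evolution: differentiating the moments $\mu_k(t)=\int_t^1 x^{k+\alpha}(1-x)^\beta\,dx$ and using the Christoffel--Darboux kernel gives a boundary formula for $\frac{d}{dt}\log D_n(t,\alpha,\beta)$, which the algebraic relations then let me rewrite as a rational expression in $R_n$ (and $r_n$). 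Setting $H_n(t):=t(t-1)\frac{d}{dt}\log D_n(t,\alpha,\beta)$ and $\sigma_n=H_n+d_1t+d_2$ with the stated $d_1,d_2$, the remaining task is to eliminate all auxiliary quantities between these relations and $\sigma_n'$; matching the resulting second-order, second-degree equation against the canonical Jimbo--Miwa--Okamoto form identifies the parameters as $\nu_1=\tfrac{\alpha+\beta}{2}$, $\nu_2=\tfrac{\beta-\alpha}{2}$, $\nu_3=\nu_4=\tfrac{2n+\alpha+\beta}{2}$.

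For the initial data, note that the factor $t(t-1)$ in $H_n$ vanishes at $t=0$, so $\sigma_n(0)=d_2$ as soon as $\frac{d}{dt}\log D_n$ is no worse than $O(t^{-1})$ there, while $\sigma_n'(0)=d_1$ follows by reading off the leading small-$t$ behaviour of $H_n$, for which the explicit value $D_n(0,\alpha,\beta)$ in \eqref{Dn0J} serves as the anchor. The genuinely laborious step, and the main obstacle, is the final elimination: collapsing the coupled system for $(R_n,r_n,\alpha_n,\beta_n,p(n,t,\alpha,\beta))$ into a single equation in $\sigma_n$ alone, while choosing the affine shift $d_1t+d_2$ precisely so that the lower-order terms are absorbed into the clean $\sigma$-form of $P_{VI}$ and the degenerate pairing $\nu_3=\nu_4$ emerges correctly. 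Since the weight here is exactly the $A=0,\,B=1$ specialization of the discontinuous Jacobi weight $x^\alpha(1-x)^\beta(A+B\,\theta(x-t))$ treated in \cite{ChenZhang2010}, I would either carry out this elimination afresh or, as the paper does, invoke that reference directly.
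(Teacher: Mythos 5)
Your proposal is correct and matches the paper's treatment: the paper gives no proof of this proposition but simply restates the main result of \cite{ChenZhang2010} for the discontinuous Jacobi weight $x^{\alpha}(1-x)^{\beta}(A+B\,\theta(x-t))$ specialized to $A=0$, $B=1$, and the ladder-operator derivation you sketch (the auxiliary quantities $R_n,r_n$, the compatibility conditions and sum rule, the boundary formula for $\frac{d}{dt}\log D_n$, and the final elimination yielding the $\sigma$-form of $P_{VI}$) is precisely the argument carried out in that reference. Your closing remark correctly identifies that invoking \cite{ChenZhang2010} directly is what the paper does.
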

\noindent
{\bf Remark} The parameters $A$ and $B$ did not appear in the equation satisfied by the $\sigma-$function, \cite{ChenZhang2010}.
\\
\noindent
We see that the weight $x^{\alpha}(1-x)^{\beta},~x\in[0,1]$, can be converted to the standard Jacobi weight $(1-x)^{\alpha}(1+x)^{\beta},~x\in[-1,1]$,  by a change of variable $x\rightarrow\frac{1-x}{2}$. The Jacobi polynomials $P_n^{(\alpha,\beta)}(x)$ orthogonal with respect to $(1-x)^{\alpha}(1+x)^{\beta},~x\in[-1,1]$. Furthermore, the following property (see \cite{Szego1939}, Theorem 8.21.12) holds
\begin{gather*}
\lim_{n\rightarrow\infty} n^{-\alpha}P_n^{(\alpha,\beta)}\left(1-\frac{z^2}{2n^2}\right)=(z/2)^{-\alpha}J_{\alpha}(z),
\end{gather*}
where $J_{\alpha}(\cdot)$ is the Bessel function of the first kind of order $\alpha$. This motivates us to execute the double scaling,
namely, $t\rightarrow 0$ and $n\rightarrow\infty$, such that $s=4n^2t$ is fixed.
With $t=\frac{s}{4n^2}$, we obtain the following result from Proposition \ref{sigmaneqJ}.

\begin{theorem}
Let
\begin{gather}\label{ProblimJ}
\mathbb{P}(s,\alpha,\beta):=\lim_{n\rightarrow\infty}\mathbb{P}\left(\frac{s}{4n^2},\alpha,\beta,n\right),
\end{gather}
and
\begin{gather*}
\sigma(s):=s\frac{d}{ds}\log \mathbb{P}(s,\alpha,\beta).
\end{gather*}
Then $\sigma(s)$ satisfies the Jimbo-Miwa-Okamoto $\sigma$ form of $P_{III}$ (see (3.13) of \cite{Jimbo1982})
\begin{gather}\label{probeqJ}
\left(s \sigma''\right)^2 +\sigma'\left(4 \sigma'+1\right) \left(s \sigma'-\sigma\right)-\alpha^2  \left(\sigma'\right)^2=0,
\end{gather}
with the initial conditions
\begin{gather*}
\sigma(0)=\sigma'(0)=0.
\end{gather*}
\end{theorem}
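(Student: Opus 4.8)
The plan is to push the finite-$n$ Jimbo--Miwa--Okamoto $\sigma$ form of $P_{VI}$ in Proposition~\ref{sigmaneqJ} through the double scaling $t=\frac{s}{4n^2}$ and to read off the limiting equation as the first surviving coefficient in the $n$-expansion. First I would strip off the affine shift in \eqref{sigmandefJ}: setting $H_n(t):=t(t-1)\frac{d}{dt}\log\mathbb{P}(t,\alpha,\beta,n)$, relation \eqref{PnDnJ} gives $\sigma_n(t)=H_n(t)+d_1t+d_2$, since $d_1,d_2$ are $t$-independent and $\log D_n(0,\alpha,\beta)$ drops under differentiation. Because $s\frac{d}{ds}=t\frac{d}{dt}$ when $t=\frac{s}{4n^2}$, and $t-1\to-1$ as $n\to\infty$, the function $G_n(s):=H_n\!\left(\frac{s}{4n^2}\right)$ satisfies $G_n(s)\to-\sigma(s)$ with $\sigma$ as in the statement. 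I will take as the analytic input that the scaled limit exists together with its first two derivatives, i.e. $G_n,G_n',G_n''\to-\sigma,-\sigma',-\sigma''$; this is exactly where the hard-edge (Bessel) limit of the Jacobi ensemble, signalled by the Bessel asymptotics quoted before \eqref{ProblimJ}, is used.

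Next I would substitute into \eqref{sigmanPviJ}. Writing $a=G_n'(s)$, $b=sG_n'(s)-G_n(s)$, $c=G_n''(s)$, the chain rule gives the exact identities
\[
\sigma_n'=4n^2a+d_1,\qquad t\sigma_n'-\sigma_n=b-d_2,\qquad t(t-1)\sigma_n''=(s^2-4n^2s)\,c .
\]
The structural simplification that makes the whole computation tractable is that $\nu_3^2=\nu_4^2=-d_1$, so the last two factors on the right of \eqref{sigmanPviJ} collapse to $\sigma_n'+\nu_3^2=\sigma_n'+\nu_4^2=4n^2a$, while $\sigma_n'+\nu_1^2=n^2(4a-1)-n(\alpha+\beta)$ and $\sigma_n'+\nu_2^2=n^2(4a-1)-n(\alpha+\beta)-\alpha\beta$. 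With these, both sides of \eqref{sigmanPviJ} become explicit polynomials in $n$ whose coefficients are polynomials in $a,b,c,s,\alpha,\beta$.

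Then I would expand $(\text{LHS})-(\text{RHS})$ in powers of $n$. The key point, which must be verified as an identity in $a,b,c$ and not merely in the limit, is that the coefficients of $n^{8}$ and $n^{7}$ vanish term by term once the collapse above is used (they come entirely from the squared bracket on the left and the product on the right). The first nonzero coefficient is that of $n^{6}$, where the derivative term $\sigma_n'\{t(t-1)\sigma_n''\}^2$ finally contributes $16(4a-1)s^2c^2$. Collecting all $n^6$ contributions, I expect the coefficient to factor (up to a constant) as $(4a-1)\bigl[\,s^2c^2-\alpha^2a^2-ab(4a-1)\,\bigr]$. Dividing $(\text{LHS})-(\text{RHS})=n^6E_6+n^5E_5+\cdots=0$ by $n^6$, letting $n\to\infty$ so that $a\to-\sigma'$, $b\to\sigma-s\sigma'$, $c\to-\sigma''$ and the lower-order remainder drops out, and cancelling the generically nonzero factor $4a-1$, yields
\[
s^2(\sigma'')^2+\sigma'(4\sigma'+1)(s\sigma'-\sigma)-\alpha^2(\sigma')^2=0 ,
\]
which is precisely \eqref{probeqJ}; note that $\beta$ has disappeared, as anticipated in the Remark.

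I expect the main obstacle to be analytic rather than algebraic: establishing that the scaled limit exists with its first two derivatives, uniformly on $s$-intervals, so that after dividing by $n^6$ the remainder $\frac1n E_5+\frac1{n^2}E_4+\cdots$ is genuinely negligible in the limit. The structural $n^8,n^7$ cancellation is the part of the algebra I would double-check most carefully, since it is what makes the $n^6$ balance the leading one. Finally, the initial conditions $\sigma(0)=\sigma'(0)=0$ I would obtain from the empty-gap limit $\mathbb{P}(s,\alpha,\beta)\to1$ as $s\to0^+$: writing $\log\mathbb{P}(s,\alpha,\beta)=O(s^{\alpha+1})$ (inherited from $\frac{d}{dt}\log\mathbb{P}(t,\alpha,\beta,n)\big|_{t=0}=0$, valid since the weight $x^{\alpha}(1-x)^{\beta}$ vanishes at the moving endpoint for $\alpha>0$), one gets both $\sigma(0)=0$ and, because $\alpha+1>1$, also $\sigma'(0)=0$.
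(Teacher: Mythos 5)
Your proposal is correct and follows essentially the same route as the paper: substitute the finite-$n$ quantities into the $\sigma$ form of $P_{VI}$ from Proposition 5.1 under $t=\frac{s}{4n^2}$ and extract the leading surviving coefficient in $n$ (the paper phrases this as "the coefficient of the highest order term in $n$", which after the $n^8$ and $n^7$ cancellations you describe is exactly the $n^6$ coefficient $16(4a-1)\bigl[s^2c^2-ab(4a-1)-\alpha^2a^2\bigr]$ you obtain). Your algebraic identities, including $\nu_3^2=\nu_4^2=-d_1$ and the final factorization, check out, and your treatment of the initial conditions is a harmless variant of the paper's.
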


\begin{proof}
From \eqref{PnDnJ} and \eqref{sigmandefJ}, we find
\begin{align}
\sigma_n(t)-d_2=&t(t-1)\frac{d}{dt}\log \mathbb{P}(t,\alpha,\beta,n)+d_1t\non\\
=&s\left(\frac{s}{4n^2}-1\right)\frac{d}{ds}\log \mathbb{P}\left(\frac{s}{4n^2},\alpha,\beta,n\right)-\frac{(2n+\alpha+\beta)^2}{4}\cdot\frac{s}{4n^2}\non\\
\rightarrow&-\sigma(s)-\frac{s}{4},\qquad n\rightarrow\infty.\label{sigmanlimJ}
\end{align}
Upon replacing $\sigma_n(t)$ by $-\sigma(s)-\frac{s}{4}+d_2$ and $t$ by $\frac{s}{4n^2}$ in \eqref{sigmanPviJ}, the coefficient of the highest order term in $n$ leads to the desired equation.

The initial condition $\sigma(0)=0$ follows from the fact $\sigma_n(0)=d_2$, and $\sigma'(0)=0$ is an immediate consequence of setting $s=0$.
\end{proof}
{\bf Remark} The non-appearance of the $\beta$ parameter in \eqref{probeqJ}, does not necessary imply the the non-appearance of $\beta$ in the asymptotic
expansion of $\mathbb{P}(s,\al,\beta)$

Before proceeding to the derivation of the expansion $\sigma(s)$, for large $s$,  we first point out the following important fact about $\sigma(s)$.
\begin{lemma}
$\sigma(s)$ has a lower bound
\begin{gather*}
\sigma(s)>-\frac{s}{4}.
\end{gather*}
\end{lemma}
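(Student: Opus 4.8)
The plan is to descend to finite $n$ through the scaling relation \eqref{sigmanlimJ} and reduce the lemma to a one--sided bound on the endpoint value of the Christoffel--Darboux kernel. Equation \eqref{sigmanlimJ} reads
\[
-\sigma(s)-\frac{s}{4}=\lim_{n\to\infty}\bigl[\sigma_n(t)-d_2\bigr],\qquad t=\frac{s}{4n^2},
\]
so the desired inequality $\sigma(s)>-\tfrac{s}{4}$ is exactly the statement $\lim_{n\to\infty}[\sigma_n(t)-d_2]<0$. Since $D_n(0,\alpha,\beta)$ is $t$--independent we may write $\sigma_n(t)-d_2=t(t-1)\frac{d}{dt}\log D_n(t,\alpha,\beta)+d_1t$, so everything hinges on the sign and the size of $\frac{d}{dt}\log D_n$.

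First I would record the orthogonal--polynomial identity for this logarithmic derivative. Differentiating $D_n(t,\alpha,\beta)=\prod_{j=0}^{n-1}h_j(t,\alpha,\beta)$ and using that $\partial_tP_j(\cdot;t,\alpha,\beta)$ has degree at most $j-1$, hence is orthogonal to $P_j$, gives $\frac{d}{dt}h_j=-P_j^2(t;t,\alpha,\beta)\,w(t,\alpha,\beta)$ and therefore
\[
\frac{d}{dt}\log D_n(t,\alpha,\beta)=-\,w(t,\alpha,\beta)\,K_n(t,t),\qquad K_n(t,t):=\sum_{j=0}^{n-1}\frac{P_j^2(t;t,\alpha,\beta)}{h_j(t,\alpha,\beta)}\ge 0 .
\]
Because $t(t-1)<0$ on $(0,1)$, this already yields the easy half $\sigma_n(t)-d_2\ge d_1t$, i.e. $\sigma(s)\le 0$ in the limit. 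Substituting into the target and dividing by $t>0$, the hard half collapses to the single finite--$n$ inequality
\[
(1-t)\,w(t,\alpha,\beta)\,K_n(t,t)<\frac{(2n+\alpha+\beta)^2}{4},\qquad t\in(0,1).
\]
Passing to the hard--edge scaling $t=s/4n^2$, where $w(t,\alpha,\beta)K_n(t,t)\to 4n^2\,\mathbb K_{\mathrm{Be}}(s,s)$ with $\mathbb K_{\mathrm{Be}}(s,s)=\tfrac14\bigl[J_\alpha(\sqrt s)^2-J_{\alpha+1}(\sqrt s)J_{\alpha-1}(\sqrt s)\bigr]$, one finds $-\sigma(s)=s\,\mathbb K_{\mathrm{Be}}(s,s)$, so the lemma is equivalent to the bound $\mathbb K_{\mathrm{Be}}(s,s)<\tfrac14$ on the diagonal Bessel kernel for $s>0$.

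The main obstacle is precisely this upper bound on the endpoint one--point function. I would prove the finite--$n$ inequality through the Christoffel function $\lambda_n(t)=1/K_n(t,t)=\min\{\int_t^1|\pi|^2w : \deg\pi\le n-1,\ \pi(t)=1\}$: a polynomial of degree at most $n-1$ pinned to $1$ at the hard edge cannot decay faster than the Markov/Bernstein resolution $\sim(1-t)/n^2$ near $t$, which forces $\int_t^1|\pi|^2w$ below a multiple of $(1-t)w(t,\alpha,\beta)/n^2$; the delicate point is tracking the constant so as to land strictly below $\frac{(2n+\alpha+\beta)^2}{4}$ uniformly in $t$. Equivalently, the limiting form $\mathbb K_{\mathrm{Be}}(s,s)<\tfrac14$ can be extracted from Bessel--function bounds together with the endpoint behaviour $\mathbb K_{\mathrm{Be}}(0,0)=0$ and $\mathbb K_{\mathrm{Be}}(s,s)=O(s^{-1/2})$ as $s\to\infty$. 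Granting the inequality, letting $n\to\infty$ with $t=s/4n^2$ sends the left side to $-\sigma(s)$ and the term $d_1t\to-\tfrac{s}{4}$, giving $-\sigma(s)<\tfrac{s}{4}$, i.e. $\sigma(s)>-\tfrac{s}{4}$, the case $s=0$ being trivial since $\sigma(0)=0$. As a cleaner alternative, contingent on a companion bound $0\le R(s)<1$, I note that the JUE limit obeys the same $\sigma$--form \eqref{probeqJ} as the LUE case, so $\sigma$ admits the representation \eqref{sigmaR}; then $\sigma+\tfrac{s}{4}=-\frac{s^2(R')^2}{4R(1-R)}+\frac{s(1-R)}{4}+\frac{\alpha^2R}{4(1-R)}$, and positivity reduces to the first--integral estimate $s(R')^2<R(1-R)^2+\alpha^2R^2/s$ for \eqref{R*ODE}.
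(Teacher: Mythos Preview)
Your reduction through the Christoffel--Darboux kernel is correct up to the point where you pass to the hard--edge limit, but the identification of the limit is wrong, and this is where the argument breaks down. The quantity $w(t,\alpha,\beta)K_n(t,t)$ you compute is built from the polynomials orthogonal on $[t,1]$, not on $[0,1]$; equivalently, it equals $-\tfrac{d}{dt}\log\mathbb P(t,\alpha,\beta,n)$, which in the Fredholm picture is the diagonal of the \emph{resolvent} of the free kernel on $[0,t]$, not the free kernel itself. Under $t=s/4n^2$ this scales to the resolvent $\rho_{\mathrm{Be}}(s,s)$ of the Bessel operator on $[0,s]$, so that $-\sigma(s)=s\,\rho_{\mathrm{Be}}(s,s)$, not $s\,\mathbb K_{\mathrm{Be}}(s,s)$. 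Since $\rho_{\mathrm{Be}}\ge\mathbb K_{\mathrm{Be}}$ pointwise, proving $\mathbb K_{\mathrm{Be}}(s,s)<\tfrac14$ would not suffice; and the correct target $\rho_{\mathrm{Be}}(s,s)<\tfrac14$ is literally the statement $-\sigma(s)<s/4$ again, so the approach is circular. The same circularity infects the finite--$n$ inequality $(1-t)\,w\,K_n(t,t)<(2n+\alpha+\beta)^2/4$: unwinding the definitions, it is exactly $\sigma_n(t)-d_2<0$, which is what you set out to show. Your alternative via the representation \eqref{sigmaR} is also incomplete: in the Jacobi section no $R$ has been introduced, and the bound $0\le R<1$ you would need is only proved in the Laguerre section.

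The paper bypasses all of this with a one--line observation you overlooked: the finite--$n$ initial data $\sigma_n(0)=d_2$ and $\sigma_n'(0)=d_1<0$ force $\sigma_n(t)-d_2<0$ for $t$ in a right neighbourhood of $0$, and the double scaling always sends $t=s/4n^2\to 0$, so the limit $-\sigma(s)-s/4=\lim_{n\to\infty}[\sigma_n(t)-d_2]$ inherits the sign. No kernel estimates are required.
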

\begin{proof}
From the initial conditions satisfies by $\sigma_n(t)$:
\begin{gather*}
\sigma_n(0)-d_2=0, \qquad\sigma_n'(0)=d_1<0.
\end{gather*}
Assuming the continuity of $\sigma_n(t)$ in $t$, for small $t$, we see that $\sigma_n(t)-d_2$ is negative when $t$ is sufficiently enough to $0$.
Hence, in view of \eqref{sigmanlimJ}, we get as $t\rightarrow0$,
\begin{gather*}
-\sigma(s)-\frac{s}{4}=\lim_{n\rightarrow\infty}(\sigma_n(t)-d_2)<0,
\end{gather*}
which completes the proof.
\end{proof}

{\bf Remark} We observe that \eqref{probeqJ} is the same as the differential equation \eqref{sigmaeq} of the Laguerre case.
For this reason, in view of the expansion formula for $\sigma(s)$ given by \eqref{HT0}, we assume in the Jacobi case
\begin{gather*}
\sigma(s)=\sum_{j=0}^{\infty}b_j s^{1-\frac{j}{2}},\qquad s\rightarrow\infty.
\end{gather*}
Inserting it into \eqref{probeqJ}, by comparing the corresponding coefficients on both sides, we find $b_0=-\frac{1}{4}$ and $b_1=\pm\frac{\alpha}{2}$.
Since $\sigma(s)>-\frac{s}{4}$, we choose $b_1=\frac{\alpha}{2}$.
Finally, we find \eqref{HT0} is also valid for $\sigma(s)=s\frac{d}{ds}\log \mathbb{P}(s,\alpha,\beta)$.
Therefore, the asymptotic expansion \eqref{probs} holds for $\mathbb{P}(s,\alpha,\beta)$ up to a constant.

\begin{theorem} As $s\rightarrow\infty$, the probability that all the eigenvalues in JUE are greater than $\frac{s}{4n^2}$ with $n$ large enough, has the following asymptotic expression:
\begin{equation}\label{probsJ}
\begin{aligned}
\log\mathbb{P}(s,\alpha,\beta)=&c_2(\alpha,\beta)-\frac{s}{4}+\alpha \sqrt{s}-\frac{\alpha^2}{4}\log s+\frac{\alpha }{8}s^{-\frac{1}{2}}+\frac{\alpha
   ^2}{16}s^{-1}
   +\left(\frac{\alpha^3}{24}+\frac{3\alpha}{128}\right)s^{-\frac{3}{2}}\\
&+\left(\frac{\alpha^4}{32}+\frac{9\alpha^2}{128}\right)s^{-2}
+\left(\frac{\alpha^5}{40}+\frac{9\alpha^3}{64}+\frac{45\alpha}{1024}\right)s^{-\frac{5}{2}}\\
&+\left(\frac{\alpha^6}{48}+\frac{15\alpha^4}{64}+\frac{9\alpha^2}{32}\right)s^{-3}+O\left(s^{-\frac{7}{2}}\right),
\end{aligned}
\end{equation}
where $c_2(\alpha,\beta)$ is shown later to be
\begin{gather*}
c_2(\alpha,\beta)=\log \left[\frac{G(\alpha+1)G^2(\beta+1)}{(2\pi)^{(\alpha+\beta)/2}}\right]+\frac{\beta(\beta-1)}{2}-(\beta+\tfrac{1}{2})\log\Gamma(\beta).
\end{gather*}
\end{theorem}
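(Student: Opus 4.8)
The plan is to reuse the Laguerre template of Section 3 as much as possible. The discussion preceding the theorem already shows that $\sigma(s)=s\frac{d}{ds}\log\mathbb{P}(s,\alpha,\beta)$ obeys exactly the same equation \eqref{probeqJ} as the Laguerre $\sigma$ of \eqref{sigmaeq}, and is singled out by the same lower bound $\sigma(s)>-\frac{s}{4}$; hence its large-$s$ series is the common expansion \eqref{HT0}. Integrating $\sigma(s)/s$ as in \eqref{DR*} then forces the $s$-dependent part of $\log\mathbb{P}(s,\alpha,\beta)$ to coincide termwise with the Laguerre expansion \eqref{probs}, so that \eqref{probsJ} holds and the only unknown is the integration constant $c_2(\alpha,\beta)$. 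Everything below is devoted to computing this constant, and my strategy is to pin it down through its two partial differences, in $\alpha$ and in $\beta$, read off from the shift relations \eqref{Pn0DnJ} and \eqref{Pn1DnJ} by evaluating the deformed orthogonal polynomial at $z=0$ and at $z=1$.

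For the $\alpha$-shift, combining \eqref{Pn0DnJ} with the now-established expansion \eqref{probsJ} gives, exactly as in \eqref{c1Pn0},
\[
\lim_{n\to\infty}\frac{P_n(0;\frac{s}{4n^2},\alpha,\beta)}{P_n(0;0,\alpha,\beta)}=\frac{\mathbb{P}(s,\alpha+1,\beta)}{\mathbb{P}(s,\alpha,\beta)}\sim\exp\Big[c_2(\alpha+1,\beta)-c_2(\alpha,\beta)+\sqrt{s}-(\tfrac{\alpha}{2}+\tfrac14)\log s+O(s^{-1/2})\Big].
\]
I would then evaluate the left-hand side independently by Dyson's Coulomb fluid, applying \eqref{PnS1S2general} with support $[t,1]$ (both endpoints being hard edges of the Jacobi equilibrium measure, so that $a=t$, $b=1$ directly, with no cubic like \eqref{LUEbeq}), potential $v(x)=-\alpha\log x-\beta\log(1-x)$, and $z=0$. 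This parallels the Laguerre evaluation of $P_n(0;t,\alpha)$: the factor ${\rm e}^{-S_1(0)}\sim\frac12 t^{-1/4}$, while ${\rm e}^{-S_2(0)}$ is computed from the Appendix integral identities, the $-\alpha\log x$ piece reproducing the Laguerre contribution and the $-\beta\log(1-x)$ piece supplying the $\beta$-corrections. After the substitution $t=\frac{s}{4n^2}$ and division by the explicit $P_n(0;0,\alpha,\beta)$ of \eqref{Pn00J}, matching with the display above should yield $c_2(\alpha+1,\beta)-c_2(\alpha,\beta)=\log\frac{\Gamma(\alpha+1)}{\sqrt{2\pi}}$, i.e. precisely the Laguerre difference, so that the $\alpha$-dependence of $c_2$ is the Barnes-G expression $\log\frac{G(\alpha+1)}{(2\pi)^{\alpha/2}}$ up to a $\beta$-only function.

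The $\beta$-shift is the genuinely new and hardest ingredient. Starting from \eqref{Pn1DnJ} and using that the $s$-dependent part of \eqref{probsJ} is independent of $\beta$, every power of $s$ cancels in the ratio, leaving the clean statement $\lim_{s\to\infty}\lim_{n\to\infty}\frac{P_n(1;\frac{s}{4n^2},\alpha,\beta)}{P_n(1;0,\alpha,\beta)}={\rm e}^{\,c_2(\alpha,\beta+1)-c_2(\alpha,\beta)}$, so I must evaluate the deformed polynomial at the \emph{other} endpoint $z=1$. The difficulty is that $z=1$ is the edge of the support $[t,1]$, where \eqref{PnS1S2general} degenerates (the factor ${\rm e}^{-S_1}$ is singular as $z\to1$), so the $\beta$-difference cannot be read off as directly as the $\alpha$-difference. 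The plan is to remove this obstruction via the involution $x\mapsto 1-x$: this sends $P_n(z;t,\alpha,\beta)$ to $(-1)^n$ times the monic polynomials orthogonal with respect to $y^\beta(1-y)^\alpha$ on $[0,1-t]$, so $P_n(1;t,\alpha,\beta)$ becomes a value of the swapped $(\beta,\alpha)$ problem at $y=0$, with the gap now sitting at the far right end $(1-t,1)$. Because the gap is then remote from the evaluation point, the required large-$n$ hard-edge asymptotics can be extracted from the classical Jacobi endpoint behaviour (with \eqref{Pn10J} supplying the normalization $P_n(1;0,\alpha,\beta)$). Carrying this through should give $c_2(\alpha,\beta+1)-c_2(\alpha,\beta)=\log\Gamma(\beta)-\tfrac12\log(2\pi)+\beta+(\tfrac12-\beta)\log\beta$, which is exactly the first difference of the stated $\beta$-dependent part $2\log G(\beta+1)-\tfrac{\beta}{2}\log(2\pi)+\tfrac{\beta(\beta-1)}{2}-(\beta+\tfrac12)\log\Gamma(\beta)$.

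Finally I would assemble the two partial differences into $c_2(\alpha,\beta)$ using the Barnes-G functional equation $G(z+1)=\Gamma(z)G(z)$ together with $\Gamma(z+1)=z\Gamma(z)$, exactly as in the Corollary of Section 3; this reconstructs both the $\alpha$- and the $\beta$-dependence, and the remaining pure additive constant is fixed by the absolute normalization produced by the $z=0$ Coulomb-fluid computation (which, as in the Laguerre case, leaves no free constant). The main obstacle is unquestionably the $z=1$ hard-edge evaluation: since the Coulomb-fluid approximation is valid only off the support, isolating the correct finite constant there — in particular the non-elementary terms $\tfrac{\beta(\beta-1)}{2}$ and $(\beta+\tfrac12)\log\Gamma(\beta)$ — is where the real work lies. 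A secondary subtlety is that each partial-difference equation determines $c_2$ only up to a $1$-periodic ambiguity, so one must take care to select the analytic solution and to pin the overall additive constant unambiguously.
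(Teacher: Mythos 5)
Your overall architecture coincides with the paper's: the $s$-dependent part of \eqref{probsJ} is obtained exactly as you describe (the $\sigma$-equation \eqref{probeqJ} is identical to the Laguerre one, the lower bound $\sigma>-\tfrac{s}{4}$ selects the sign of the $\sqrt{s}$ coefficient, and integration reproduces \eqref{probs} termwise up to a constant), and the constant is then pinned down through the two partial differences in $\alpha$ and $\beta$ extracted from \eqref{Pn0DnJ} and \eqref{Pn1DnJ}; your $\alpha$-difference computation via the Coulomb fluid at $z=0$ is also the paper's.

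The genuine gap is in the $\beta$-difference, which you yourself flag as the hard step. Your premise that the equilibrium measure occupies all of $[t,1]$ (``$a=t$, $b=1$ directly, with no cubic like \eqref{LUEbeq}'') is false for $\beta>0$: since $v(x)=-\alpha\log x-\beta\log(1-x)\to+\infty$ as $x\to1$, the support is $[t,b]$ with $b$ strictly less than $1$, determined by a quartic analogue of \eqref{LUEbeq} whose solution is $b=1-\frac{\beta^2(1-t)}{4n^2}+O(n^{-3})$. This is the missing idea: $z=1$ then lies \emph{off} the support, so \eqref{PnS1S2general} applies there directly, and the factor $\left(\frac{1-t}{1-b}\right)^{1/4}\sim\sqrt{2n/\beta}$ (together with the corresponding pieces of ${\rm e}^{-S_2(1)}$) is precisely what generates $n^{\beta+\frac12}\beta^{-\beta-\frac12}{\rm e}^{\beta}$ in \eqref{Pn1sJ}, hence the non-elementary terms $\frac{\beta(\beta-1)}{2}$ and $(\beta+\tfrac12)\log\Gamma(\beta)$ in $c_2$. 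Your proposed workaround --- reflect $x\mapsto1-x$ and invoke ``classical Jacobi endpoint behaviour'' --- does not avoid this: the reflected problem is just the mirror image of the original (weight $y^{\beta}(1-y)^{\alpha}$ on $[0,1-t]$ with equilibrium support $[a',b']$, $a'=O(n^{-2})>0$ because $\beta>0$), so the evaluation point $y=0$ sits again at hard-edge distance $O(n^{-2})$ from the support and the same quartic analysis is required; moreover, undeformed classical endpoint asymptotics would give a ratio tending to $1$ and miss the constant $\frac{\Gamma(\beta+1)}{\sqrt{2\pi}}\,{\rm e}^{\beta}\beta^{-\beta-\frac12}$ altogether. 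You have correctly guessed the target difference \eqref{c2betadiffJ}, but the method you sketch would not reach it without the strict inequality $b<1$ and its precise $O(n^{-2})$ rate.
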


In view of \eqref{Pn0DnJ} and \eqref{Pn1DnJ}, it follows from \eqref{ProblimJ} and \eqref{probsJ} that
\begin{equation}\label{Pn0probJ}
\begin{aligned}
\lim_{n\rightarrow\infty}&\frac{P_{n}(0;\frac{s}{4n^2},\alpha,\beta)}{P_n(0;0,\alpha,\beta)}=\frac{\mathbb{P}(s,\alpha+1,\beta)}{\mathbb{P}(s,\alpha,\beta)}\\
&\sim \exp\left[c_2(\alpha+1,\beta)-c_2(\alpha,\beta)+\sqrt{s}-\left(\frac{\alpha}{2}+\frac{1}{4}\right)\log s+O\left(s^{-\frac{1}{2}}\right)\right],
\qquad s\rightarrow\infty,
\end{aligned}
\end{equation}
and
\begin{gather}\label{Pn1probJ}
\lim_{n\rightarrow\infty}\frac{P_{n}(1;\frac{s}{4n^2},\alpha,\beta)}{P_n(1;0,\alpha,\beta)}=\frac{\mathbb{P}(s,\alpha,\beta+1)}{\mathbb{P}(s,\alpha,\beta)}\sim {\rm e}^{c_2(\alpha,\beta+1)-c_2(\alpha,\beta)},\qquad s\rightarrow\infty.
\end{gather}

We now apply Dyson's Coulomb fluid method to derive the asymptotic formulas for $P_n(0;t,\alpha,\beta)$ and $P_n(1;t,\alpha,\beta)$.
The results combined with \eqref{Pn0probJ} and \eqref{Pn1probJ} will help us find $c_2(\alpha,\beta)$.

\subsection{The evaluation of $P_n(0;t,\alpha,\beta)$ and $P_n(1;t,\alpha,\beta)$ via Dyson's Coulomb fluid}
Putting
\begin{gather*}
v(x)=-\log w(x,\alpha,\beta)=-\alpha\log x-\beta\log(1-x),\quad\alpha>0,\quad\beta>0,
\end{gather*}
in \eqref{densitycon.} with $a=t$, i.e.
\begin{gather*}
\rho(x)=\frac{1}{2\pi^2}\sqrt{\frac{b-x}{x-t}}\,P\int_t^b \frac{ v'(y)}{y-x}\sqrt{\frac{y-t}{b-y}}dy,\qquad x\in(t,b),\quad b<1,
\end{gather*}
gives us the equilibrium density of the fluid,
\begin{gather*}
\rho(x)=\frac{1}{2\pi}\sqrt{\frac{b-x}{x-t}}\left[-\frac{\alpha}{x}\sqrt{\frac{t}{b}}+\frac{\beta}{1-x}\sqrt{\frac{1-t}{1-b}}\right],\;\;0<t<x<b<1.
\end{gather*}
An easy computation shows that
$$
\frac{d}{dx}\rho(x)\sqrt{\frac{x-t}{b-x}}>0,\;\; t<x<b,
$$
and $\beta\sqrt{tb}>\alpha\sqrt{(1-t)(1-b)}$ which implies $\rho(x)>0$ for $t<x<b$. By means of the integral identities listed in the Appendix, the normalization condition $\int_t^b\rho(x)dx=n$ reads
\begin{gather*}
n=-\frac{\alpha+\beta}{2}+\frac{\alpha}{2}\sqrt{\frac{t}{b}}+\frac{\beta}{2}\sqrt{\frac{1-t}{1-b}}.
\end{gather*}
This gives rise to an algebraic equation of degree four in $b$,
\begin{gather*}
\left[(2n+\alpha+\beta)^2b(1-b)-\alpha^2t(1-b)-\beta^2(1-t)b\right]^2=4\alpha^2\beta^2t(1-t)b(1-b),
\end{gather*}
from which it follows that
\begin{gather*}
b=1-\frac{\beta^2(1-t)}{4n^2}+O(n^{-3}),\qquad n\rightarrow\infty.
\end{gather*}
We do not display the $O(n^{-3})$ term as it would not affect the outcome.

We follow the methodology in the previous section for LUE. According to \eqref{PnS1S2general}, the monic orthogonal polynomials $P_n(z;t,\alpha,\beta)$ associated with $x^{\alpha}(1-x)^{\beta},x\in[t,1]$, evaluated at $z=0$, is given by
\begin{align*}
P_n(0;t,\alpha,\beta)&\sim {\rm e}^{-S_1(0;t,\alpha,\beta)-S_2(0;t,\alpha,\beta)},\qquad n\rightarrow\infty,
\end{align*}
where
\begin{align*}
{\rm e}^{-S_1(0;t,\alpha,\beta)}=&\frac{1}{2}\left[\left(\frac{b}{t}\right)^{\frac{1}{4}}
+\left(\frac{t}{b}\right)^{\frac{1}{4}}\right].
\end{align*}
With the aid of the integral identities in the Appendix, and by taking the branch $-t=t{\rm e}^{\pi i}$ and $-b=b {\rm e}^{\pi i}$, we obtain
\begin{equation*}
\begin{aligned}
{\rm e}^{-S_2(0;t,\alpha,\beta)}=&(-1)^n 2^{-2n-2\alpha-\beta}\left(\sqrt{t}+\sqrt{b}\right)^{2n+2\alpha+\beta}(tb)^{-\frac{\alpha}{2}}\\
&\cdot\left(1-\left(\sqrt{tb}-\sqrt{(1-t)(1-b)}\right)^2\right)^{-\frac{\beta}{2}}\left(\sqrt{1-t}+\sqrt{1-b}\right)^{\beta}.
\end{aligned}
\end{equation*}

Now we discuss the approximation of $P_n(z;t,\alpha,\beta)$ at $z=1$. Once again from
\eqref{PnS1S2general}, we obtain
\begin{align*}
P_n(1;t,\alpha,\beta)&\sim {\rm e}^{-S_1(1;t,\alpha,\beta)-S_2(1;t,\alpha,\beta)}, \qquad n\rightarrow\infty,
\end{align*}
where
\begin{align*}
{\rm e}^{-S_1(1;t,\alpha,\beta)}=&\frac{1}{2}\left[\left(\frac{1-b}{1-t}\right)^{\frac{1}{4}}
+\left(\frac{1-t}{1-b}\right)^{\frac{1}{4}}\right],
\end{align*}
and, by applying the integral identities in the Appendix,
\begin{equation*}
\begin{aligned}
{\rm e}^{-S_2(1;t,\alpha,\beta)}=&2^{-2n-\alpha-2\beta}\left(\sqrt{1-t}+\sqrt{1-b}\right)^{2n+\alpha+2\beta}\left(\sqrt{t}+\sqrt{b}\right)^{\alpha}\\
&\cdot\left(1-\left(\sqrt{tb}-\sqrt{(1-t)(1-b)}\right)^2\right)^{-\frac{\alpha}{2}}\left((1-t)(1-b)\right)^{-\frac{\beta}{2}}.
\end{aligned}
\end{equation*}

Finally, pooling together the above results, we give an evaluation of $P_n(0;t,\alpha,\beta)$ and $P_n(1;t,\alpha,\beta)$ as $n\rightarrow\infty$.
\begin{theorem} The monic polynomials $P_n(x;t,\alpha,\beta)$ orthogonal with respect to $x^{\alpha}(1-x)^{\beta},~\alpha>-1,\beta>0$, over $[t,1]$ are approximated at $x=0$ and $x=1$ as $n\rightarrow\infty$ ,by
\begin{align}
(-1)^nP_n(0;t,\alpha,\beta)\sim&2^{-2n-2\alpha-\beta-1}\left[t^{-\frac{1}{4}}
+t^{\frac{1}{4}}\right]\left[1+\sqrt{t}\right]^{2n+2\alpha+\beta}t^{-\frac{\alpha}{2}}\nonumber\\
\sim&2^{-2n-\alpha-\beta-\frac{1}{2}}{\rm e}^{\sqrt{s}}s^{-\frac{\alpha}{2}-\frac{1}{4}}n^{\alpha+\frac{1}{2}},\label{Pn0sJ}
\end{align}
and
\begin{align}
P_n(1;t,\alpha,\beta)\sim&2^{-2n-\alpha-\beta-\frac{1}{2}}\,n^{\beta+\frac{1}{2}}\,\beta^{-\beta-\frac{1}{2}}\,{\rm e}^{\beta}\,(1-t)^{n}(1+\sqrt{t})^{\alpha}\nonumber\\
\sim&2^{-2n-\alpha-\beta-\frac{1}{2}}\,n^{\beta+\frac{1}{2}}\,\beta^{-\beta-\frac{1}{2}}\,{\rm e}^{\beta},\label{Pn1sJ}
\end{align}
where $s=4n^2\,t$.
\end{theorem}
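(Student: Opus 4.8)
The plan is to assemble the two Coulomb-fluid approximations already set up just above the statement, namely $P_n(0;t,\alpha,\beta)\sim{\rm e}^{-S_1(0;t,\alpha,\beta)-S_2(0;t,\alpha,\beta)}$ and $P_n(1;t,\alpha,\beta)\sim{\rm e}^{-S_1(1;t,\alpha,\beta)-S_2(1;t,\alpha,\beta)}$, to insert the large-$n$ expansion $b=1-\frac{\beta^2(1-t)}{4n^2}+O(n^{-3})$, and to simplify. Each endpoint is treated in two stages: first one reduces the product of the $S_1$ and $S_2$ factors to the intermediate (first) line with $t$ still free, keeping only factors that are either large in $n$ or survive the scaling; then one sets $t=\frac{s}{4n^2}$ and passes to the limit to reach the scaled (second) line.

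For the evaluation at $x=0$ I would multiply ${\rm e}^{-S_1(0)}$ and ${\rm e}^{-S_2(0)}$ and use $b\to1$ to collapse every factor in which $b$ appears to a \emph{bounded} power: $(tb)^{-\alpha/2}\to t^{-\alpha/2}$, $(b/t)^{1/4}+(t/b)^{1/4}\to t^{-1/4}+t^{1/4}$, the cross term $1-(\sqrt{tb}-\sqrt{(1-t)(1-b)})^2\to1$, and $(\sqrt{1-t}+\sqrt{1-b})^{\beta}\to(1-t)^{\beta/2}$ (which becomes $1$ in the eventual $t\to0$ limit). The only delicate factor is $(\sqrt t+\sqrt b)^{2n+2\alpha+\beta}$, which I replace by $(1+\sqrt t)^{2n+2\alpha+\beta}$; this yields the first line. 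Substituting $t=\frac{s}{4n^2}$ then gives $t^{-1/4}+t^{1/4}\sim 2^{1/2}s^{-1/4}n^{1/2}$ and $t^{-\alpha/2}\sim 2^{\alpha}s^{-\alpha/2}n^{\alpha}$, and, since $\sqrt t=\frac{\sqrt s}{2n}$, $(1+\sqrt t)^{2n}\to{\rm e}^{\sqrt s}$ while $(1+\sqrt t)^{2\alpha+\beta}\to1$; collecting the powers of $2$, $n$ and $s$ produces exactly the second line.

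The evaluation at $x=1$ is where the genuinely new features—the factors ${\rm e}^{\beta}$ and $\beta^{-\beta-1/2}$—appear, and it is the part I expect to be most delicate. The clean device is to note that the leading term of the expansion gives $\sqrt{1-b}=\frac{\beta}{2n}\sqrt{1-t}$, hence the exact factorization $\sqrt{1-t}+\sqrt{1-b}=\sqrt{1-t}\,(1+\frac{\beta}{2n})$. Consequently $(\sqrt{1-t}+\sqrt{1-b})^{2n+\alpha+2\beta}=(1-t)^{n+\alpha/2+\beta}(1+\frac{\beta}{2n})^{2n+\alpha+2\beta}$, and it is precisely $(1+\frac{\beta}{2n})^{2n}\to{\rm e}^{\beta}$ that produces the exponential. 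Likewise $((1-t)(1-b))^{-\beta/2}$ contributes $\beta^{-\beta}2^{\beta}n^{\beta}(1-t)^{-\beta}$, while from $1-b=\frac{\beta^2(1-t)}{4n^2}$ one gets $\frac{1-t}{1-b}=\frac{4n^2}{\beta^2}$, so the dominant branch of ${\rm e}^{-S_1(1)}$ behaves like $\frac12(\frac{1-t}{1-b})^{1/4}\sim \frac{n^{1/2}}{2^{1/2}\beta^{1/2}}$. Multiplying these, together with $(\sqrt t+\sqrt b)^{\alpha}\to(1+\sqrt t)^{\alpha}$ and the cross term $\to1$, yields the first line; sending $t\to0$ (so $(1-t)^{n}\to1$ and $(1+\sqrt t)^{\alpha}\to1$) gives the second line.

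The main obstacle is the bookkeeping of the factors raised to the power $\sim 2n$, since these are the only places where the $O(n^{-1})$ and $O(n^{-2})$ pieces of $b$ survive the limit. One must check that the $O(n^{-2})$ correction to $\sqrt b$ contributes $2n\cdot O(n^{-2})=O(n^{-1})\to0$ to $\log(\sqrt t+\sqrt b)^{2n}$—so that it does not disturb the ${\rm e}^{\sqrt s}$ at $x=0$—and, symmetrically, that the unrecorded $O(n^{-3})$ term in $b$ enters $(\sqrt{1-t}+\sqrt{1-b})^{2n}$ only as $(1+O(n^{-2}))^{2n}\to1$, so that the ${\rm e}^{\beta}$ at $x=1$ comes entirely from the leading $n^{-2}$ coefficient of $1-b$. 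This justifies the earlier remark that the $O(n^{-3})$ term in $b$ may be discarded, and it is the crux of the whole computation; once it is in place the remaining steps are routine algebraic collection of the powers of $2$, $n$, $s$ and $\beta$.
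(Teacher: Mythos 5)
Your proposal follows essentially the same route as the paper: the theorem is obtained by multiplying the ${\rm e}^{-S_1}$ and ${\rm e}^{-S_2}$ factors already displayed, inserting $b=1-\frac{\beta^2(1-t)}{4n^2}+O(n^{-3})$, and isolating the factors raised to powers of order $n$ (which is where $(1+\sqrt t)^{2n}\to{\rm e}^{\sqrt s}$ and $(1+\frac{\beta}{2n})^{2n}\to{\rm e}^{\beta}$ arise); your check that the $O(n^{-2})$ and $O(n^{-3})$ corrections to $b$ wash out of these exponentiated factors is exactly the point the paper leaves implicit. One small bookkeeping slip: as $b\to1$ the cross term $1-\bigl(\sqrt{tb}-\sqrt{(1-t)(1-b)}\bigr)^2$ tends to $1-t$, not to $1$, so at $x=0$ it contributes $(1-t)^{-\beta/2}$ which cancels $(\sqrt{1-t}+\sqrt{1-b})^{\beta}\to(1-t)^{\beta/2}$, and at $x=1$ it contributes $(1-t)^{-\alpha/2}$ which is needed to reduce the power of $(1-t)$ to exactly $(1-t)^n$; this is what makes the stated intermediate (first) lines come out exactly, though it is immaterial for the final scaled asymptotics since $t\to0$.
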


\begin{corollary} The constant $c_2(\alpha,\beta)$, appearing in \eqref{probsJ}, turns out to be
\begin{gather*}
c_2(\alpha,\beta)=\log \left[\frac{G(\alpha+1)G^2(\beta+1)}{(2\pi)^{(\alpha+\beta)/2}}\right]+\frac{\beta(\beta-1)}{2}-(\beta+\tfrac{1}{2})\log\Gamma(\beta),
\end{gather*}
where $G(\cdot)$ is the Barnes-G function.
\end{corollary}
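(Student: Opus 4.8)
The plan is to reproduce, in the two-parameter setting, the mechanism that fixed the Laguerre constant $c_1(\alpha)$: convert the Coulomb-fluid evaluations \eqref{Pn0sJ}--\eqref{Pn1sJ} into a pair of first-order difference equations for $c_2$, one in $\alpha$ and one in $\beta$, and then integrate them with the help of the functional equation $G(z+1)=\Gamma(z)G(z)$. First I would estimate the exact endpoint values \eqref{Pn00J} and \eqref{Pn10J} by Stirling's formula $\Gamma(n+c)\sim\sqrt{2\pi}\,n^{n+c-\frac12}{\rm e}^{-n}$, used also at the doubled argument $2n$. For \eqref{Pn00J} this gives
\[ (-1)^nP_n(0;0,\alpha,\beta)\sim\frac{\sqrt{2\pi}}{\Gamma(\alpha+1)}\,2^{-2n-\alpha-\beta-\frac12}\,n^{\alpha+\frac12}, \]
and \eqref{Pn10J} produces the same expression with $\alpha$ and $\beta$ interchanged. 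Dividing the double-scaled asymptotics \eqref{Pn0sJ} and \eqref{Pn1sJ} by these two quantities, the $2^{-2n-\cdots}$ and $n^{\cdots}$ factors (and the signs) cancel and leave
\[ \frac{P_n(0;\frac{s}{4n^2},\alpha,\beta)}{P_n(0;0,\alpha,\beta)}\sim\frac{\Gamma(\alpha+1)}{\sqrt{2\pi}}\,{\rm e}^{\sqrt s}\,s^{-\frac{\alpha}{2}-\frac14},\qquad \frac{P_n(1;\frac{s}{4n^2},\alpha,\beta)}{P_n(1;0,\alpha,\beta)}\sim\frac{\Gamma(\beta+1)}{\sqrt{2\pi}}\,\beta^{-\beta-\frac12}\,{\rm e}^{\beta}. \]

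Next I would match these limits against the right-hand sides of \eqref{Pn0probJ} and \eqref{Pn1probJ}, whose leading factors are $\exp[c_2(\alpha+1,\beta)-c_2(\alpha,\beta)]\,{\rm e}^{\sqrt s}s^{-\alpha/2-1/4}$ and $\exp[c_2(\alpha,\beta+1)-c_2(\alpha,\beta)]$. Reading off the coefficients yields the two recurrences
\[ c_2(\alpha+1,\beta)-c_2(\alpha,\beta)=\log\frac{\Gamma(\alpha+1)}{\sqrt{2\pi}}, \]
\[ c_2(\alpha,\beta+1)-c_2(\alpha,\beta)=\log\frac{\Gamma(\beta+1)}{\sqrt{2\pi}}-\Bigl(\beta+\tfrac12\Bigr)\log\beta+\beta. \]
The first coincides exactly with the Laguerre relation and is independent of $\beta$, in agreement with the earlier remark that $\beta$ is absent from the $\sigma$-equation \eqref{probeqJ}; the anomalous terms $-(\beta+\frac12)\log\beta+\beta$ in the second are precisely the trace of the $\beta^{-\beta-1/2}{\rm e}^{\beta}$ factor coming from the evaluation at the soft endpoint $z=1$ in \eqref{Pn1sJ}.

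To integrate, I would first solve the $\alpha$-recurrence: since $G(z+1)=\Gamma(z)G(z)$, a particular solution is $\log\frac{G(\alpha+1)}{(2\pi)^{\alpha/2}}$, so $c_2(\alpha,\beta)=\log\frac{G(\alpha+1)}{(2\pi)^{\alpha/2}}+g(\beta)$ for some $g$ depending on $\beta$ alone. Inserting this into the $\beta$-recurrence reduces it to $g(\beta+1)-g(\beta)=\log\frac{\Gamma(\beta+1)}{\sqrt{2\pi}}-(\beta+\frac12)\log\beta+\beta$, and I would check by a short computation that
\[ g(\beta)=2\log G(\beta+1)-\frac{\beta}{2}\log(2\pi)+\frac{\beta(\beta-1)}{2}-\Bigl(\beta+\tfrac12\Bigr)\log\Gamma(\beta) \]
solves it: the term $2\log G$ contributes $2\log\Gamma(\beta+1)$, the quadratic contributes $\beta$, and the last two terms, after writing $\log\Gamma(\beta+1)=\log\beta+\log\Gamma(\beta)$, account for the remaining right-hand side. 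Combining with the $\alpha$-part reproduces the claimed expression for $c_2(\alpha,\beta)$.

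The main obstacle is not computational but a question of uniqueness: the two difference equations pin down $c_2$ only modulo an additive term that is periodic with period one in each variable, so one must argue that no such non-constant period-one contribution is present and fix the remaining additive constant. As in the Laguerre corollary, I would dispose of this by invoking the smooth dependence of $c_2(\alpha,\beta)$ on the parameters together with the growth/uniqueness characterization of the Barnes-$G$ solution of $G(z+1)=\Gamma(z)G(z)$ \cite{Voros1987}, anchored at a single explicit parameter value; this soft step is the only place where the argument goes beyond the Stirling bookkeeping above.
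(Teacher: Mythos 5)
Your proposal is correct and follows essentially the same route as the paper: Stirling/duplication asymptotics for the exact endpoint values \eqref{Pn00J}--\eqref{Pn10J}, division into the Coulomb-fluid evaluations \eqref{Pn0sJ}--\eqref{Pn1sJ}, matching against \eqref{Pn0probJ}--\eqref{Pn1probJ} to obtain exactly the two difference equations \eqref{c2alphadiffJ}--\eqref{c2betadiffJ}, and then solving them via $G(z+1)=\Gamma(z)G(z)$, with your $g(\beta)$ identical to the paper's $f(\beta)$. The only difference is that you explicitly flag the period-one ambiguity in solving the difference equations, a point the paper passes over silently; your verification of the $\beta$-recurrence is correct.
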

\begin{proof} Again, in this proof, the symbol $\sim$ refers to 'asymptotic to' for large $n$.

From asymptotic approximation for Gamma function
\begin{gather*}
\Gamma(n+\alpha)\sim\Gamma(n)n^{\alpha}, \qquad\alpha\in\mathbb{\mathbb{C}},
\end{gather*}
and the Legendre duplication formula (formula (1.2.3), \cite{Lebedev})
\begin{gather*}
\sqrt{\pi}\Gamma(2n)=2^{2n-1}\Gamma(n)\Gamma(n+\tfrac{1}{2})\sim 2^{2n-1}\Gamma^2(n)\sqrt{n},
\end{gather*}
we obtain, by recalling \eqref{Pn00J},
\begin{align}
(-1)^nP_{n}(0;0,\alpha,\beta)=&\frac{\Gamma(n+\alpha+\beta+1)\Gamma(n+\alpha+1)}{\Gamma(2n+\alpha+\beta+1)\Gamma(\alpha+1)}\nonumber\\
=&\frac{\Gamma(n)\,n^{\alpha+\beta+1}\,\Gamma(n)\,n^{\alpha+1}}{\Gamma(2n)\,(2n)^{\alpha+\beta+1}\Gamma(\alpha+1)}\nonumber\\
\sim&\sqrt{\pi}2^{-2n-\alpha-\beta}\frac{n^{\alpha+\frac{1}{2}}}{\Gamma(\alpha+1)},\label{Pn00asy.J}
\end{align}
so that, from \eqref{Pn10J},
\begin{align}
P_n(1;0,\alpha,\beta)=&(-1)^nP_{n}(0;0,\beta,\alpha)\nonumber\\
\sim&\sqrt{\pi}2^{-2n-\alpha-\beta}\frac{n^{\beta+\frac{1}{2}}}{\Gamma(\beta+1)}.\label{Pn10asy.J}
\end{align}
Combining \eqref{Pn00asy.J} with \eqref{Pn0sJ}, and \eqref{Pn10asy.J} with \eqref{Pn1sJ}, we find
\begin{align}
\frac{P_n(0;\frac{s}{4n^2},\alpha,\beta)}{P_{n}(0;0,\alpha,\beta)}
\sim&\frac{\Gamma(\alpha+1)}{\sqrt{2\pi}}{\rm e}^{\sqrt{s}}s^{-\frac{\alpha}{2}-\frac{1}{4}}\nonumber\\
\sim&\exp\left[\log\left(\frac{\Gamma(\alpha+1)}{\sqrt{2\pi}}\right)+\sqrt{s}-\left(\frac{\alpha}{2}+\frac{1}{4}\right)\log s\right],\label{Pn0asy.J}
\end{align}
and, assuming $\beta>0$,
\begin{align}
\frac{P_n(1;\frac{s}{4n^2},\alpha,\beta)}{P_n(1;0,\alpha,\beta)}
\sim&\frac{\Gamma(\beta+1)}{\sqrt{2\pi}}{\rm e}^{\beta}\beta^{-\beta-\frac{1}{2}}\nonumber\\
=&\exp\left[\log\left(\frac{\Gamma(\beta+1)}{\sqrt{2\pi}}\right)+\beta-(\beta+\tfrac{1}{2})\log\beta\right].\label{Pn1asy.J}
\end{align}
Comparing \eqref{Pn0asy.J} with \eqref{Pn0probJ}, and \eqref{Pn1asy.J} with \eqref{Pn1probJ}, we arrive at
\begin{gather}\label{c2alphadiffJ}
c_2(\alpha+1,\beta)-c_2(\alpha,\beta)=\log\left(\tfrac{\Gamma(\alpha+1)}{\sqrt{2\pi}}\right),
\end{gather}
and
\begin{gather}\label{c2betadiffJ}
c_2(\alpha,\beta+1)-c_2(\alpha,\beta)=\log\left(\frac{\Gamma(\beta+1)}{\sqrt{2\pi}}\right)+\beta-(\beta+\tfrac{1}{2})\log\beta,
\end{gather}
respectively. Hence, it follows from \eqref{c2alphadiffJ}
\begin{gather*}
c_2(\alpha,\beta)=\log \frac{G(\alpha+1)}{(2\pi)^{\alpha/2}}+f(\beta),
\end{gather*}
and from \eqref{c2betadiffJ} we find a difference equation satisfied by $f(\beta)$:
\begin{gather*}
f(\beta+1)-f(\beta)=\log\left(\frac{\Gamma(\beta+1)}{\sqrt{2\pi}}\right)+\beta-(\beta+\tfrac{1}{2})\log\beta,
\end{gather*}
and consequently,
\begin{gather*}
f(\beta)=\log \frac{G^2(\beta+1)}{(2\pi)^{\beta/2}}+\frac{\beta(\beta-1)}{2}-(\beta+\tfrac{1}{2})\log\Gamma(\beta).
\end{gather*}
Therefore, we finally obtain
\begin{gather*}
c_2(\alpha,\beta)=\log \left[\frac{G(\alpha+1)G^2(\beta+1)}{(2\pi)^{(\alpha+\beta)/2}}\right]+\frac{\beta(\beta-1)}{2}-(\beta+\tfrac{1}{2})\log\Gamma(\beta).
\end{gather*}
\end{proof}
\begin{remark}
We note an integral representation for $f(\beta+1)-f(\beta)$:
\begin{align*}
f(\beta+1)-f(\beta)=&\log\Gamma(\beta)+\beta-(\beta-\tfrac{1}{2})\log\beta-\tfrac{1}{2}\log(2\pi)\\
=&\int_0^{\infty}\left(\frac{1}{2}-\frac{1}{t}+\frac{1}{e^t-1}\right)\frac{e^{-\beta t}}{t}dt,
\end{align*}
which results from Binet's formula (\cite{WhittakerWatson}, p.249). Hence, the constant $c_2(\gamma)$ that appears in \cite{ChenChenFanJMP2016} has an explicit evaluation:
\begin{gather*}
c_2(\gamma)=\log\frac{G(1+\gamma-\lambda-\beta)\,G(1+\gamma)}{(2\pi)^{\gamma/2}}+\frac{\gamma(\gamma-1)}{2}-(\gamma+\tfrac{1}{2})\log\Gamma(\gamma).
\end{gather*}
\end{remark}

\section{The Asymptotics of the Gap Probability of the Jacobi Unitary Ensembles}
The probability that the interval $(-a,a)$ has no eigenvalues in the (symmetric) Jacobi unitary ensembles with the weight
\begin{gather*}
w_0(x,\beta)=(1-x^2)^{\beta},\qquad x\in[-1,1],\quad\beta>0,
\end{gather*}
is given by
\begin{gather}\label{probnDna0J}
\mathbb{P}(a,\beta,n)=\frac{\det\left(\int_{-1}^1x^{i+j}w(x,a,\beta)dx\right)_{i,j=0}^{n-1}}{\det\left(\int_{-1}^1x^{i+j}w_0(x,\beta)dx\right)_{i,j=0}^{n-1}}
=:\frac{D_n(a,\beta)}{D_n(0,\beta)}.
\end{gather}
Here $w(x,a,\beta)$ is the discontinuous Jacobi weight with two jumps
\begin{align*}
w(x,a,\beta)=&(1-x^2)^{\beta}\theta(x^2-a^2),\qquad x\in[-1,1],\quad\beta>0,
\end{align*}
where $\theta(x)$ is 1 if $x\geq0$ and 0 otherwise.
We shall be concerned with the behavior of $\mathbb{P}(a,\beta,n)$ under double scaling.

We begin with the normalization relation of the monic polynomials orthogonal with respect to $w(x,a,\beta)$ over $[-1,1]$:
\begin{align}
h_j(a,\beta):=&\int_{-1}^{1}P_j^2(x;a,\beta)(1-x^2)^{\beta}\theta(x^2-a^2)dx\non\\
=&2\int_{0}^{1}P_j^2(x;a,\beta)(1-x^2)^{\beta}\theta(x^2-a^2)dx\label{hjJacobi}.
\end{align}
Note that $\{P_j\}$ can be normalized as \cite{Chihara}
\begin{gather*}
P_{2n}(x;a,\beta)=x^{2n}+p(2n,a,\beta)x^{2n-2}+\cdots+P_{2n}(0;a,\beta),
\end{gather*}
and
\begin{align*}
P_{2n+1}(x;a,\beta)=&x^{2n+1}+p(2n+1,a,\beta)x^{2n-1}+\cdots+{\rm const.}\,x\\
=&x\left(x^{2n}+p(2n+1,a,\beta)x^{2n-2}+\cdots+{\rm const.}\right).
\end{align*}

By introducing into \eqref{hjJacobi} the change of variable $x^2=t$, we find
\begin{align*}
h_{2n}(a,\beta)=&2\int_{0}^{1}P_{2n}^2(x;a,\beta)(1-x^2)^{\beta}\theta (x^2-a^2)dx\\
=&2\int_{0}^{1}P_{2n}^2(\sqrt{t};a,\beta)(1-t)^{\beta}\theta (t-a^2)\frac{dt}{2\sqrt{t}}\\
=&\int_{a^2}^{1}\widetilde{P}_n^2(t;a,\beta)t^{-\frac{1}{2}}(1-t)^{\beta}dt=:\widetilde{h}_n(a,\beta),\\
\intertext{and}
h_{2n+1}(a,\beta)=&2\int_{0}^{1}P_{2n+1}^2(x;a,\beta)(1-x^2)^{\beta}\theta (x^2-a^2)dx\\
=&2\int_{0}^{1}\left\{\sqrt{t}\,\widehat{P}_n(t;a,\beta)\right\}^2(1-t)^{\beta}\theta (t-a^2)\frac{dt}{2\sqrt{t}}\\
=&\int_{a^2}^{1}\widehat{P}_n^2(t;a,\beta)t^{\frac{1}{2}}(1-t)^{\beta}dt=:\widehat{h}_n(a,\beta).
\end{align*}
Here $\widetilde{P}_n(t;a,\beta)$ and $\widehat{P}_n(t;a,\beta)$ are monic polynomials of degree $n$ in the variable $t$, orthogonal with respect to $t^{-\frac{1}{2}}(1-t)^{\beta}$ and $t^{\frac{1}{2}}(1-t)^{\beta}$ over $[a^2,1]$ respectively.

If we define the Hankel determinants generated by $t^{-\frac{1}{2}}(1-t)^{\beta}$ and $t^{\frac{1}{2}}(1-t)^{\beta},t\in[a^2,1]$, by
\begin{align*}
\widetilde{D}_m(a,\beta):=&\det \left(\int_{a^2}^{1}t^{i+j}t^{-\frac{1}{2}}(1-t)^{\beta}dt\right)_{i,j=0}^{m-1}=\prod_{l=0}^{m-1}\widetilde{h}_l(a,\beta),\\
\widehat{D}_m(a,\beta):=&\det \left(\int_{a^2}^{1}t^{i+j}t^{\frac{1}{2}}(1-t)^{\beta}\right)_{i,j=0}^{m-1}=\prod_{l=0}^{m-1}\widehat{h}_l(a,\beta),
\end{align*}
then it follows that
\begin{align}
D_n(a,\beta):=&\det\left(\int_{-1}^1x^{i+j}w(x,a,\beta)dx\right)_{i,j=0}^{n-1}=\prod_{j=0}^{n-1}h_j(a,\beta)\nonumber\\
=&\begin{cases}
\widetilde{D}_{k+1}\,\widehat{D}_{k}, & n=2k+1,\\
\widetilde{D}_{k}\,\widehat{D}_{k}, & n=2k.\label{HankelapartJ}
\end{cases}
\end{align}

Based on \eqref{probsJ}, we establish in the next theorem the asymptotic expression for
\begin{gather*}
\mathbb{P}(b,\beta):=\lim_{n\rightarrow\infty}\mathbb{P}\left(a,\beta,n\right),\qquad b:=na,
\end{gather*}
under the assumption that $a\rightarrow0$ and $n\rightarrow\infty$ such that $b$ is fixed.
Note the double scaling is unlike the GUE case, where $\tau=2{\sqrt{2n}}\:a$
\begin{theorem} The probability that the interval $(-\tfrac{b}{n},\tfrac{b}{n}), n\rightarrow\infty$, is free of eigenvalues in the Jacobi unitary ensembles with weight $(1-x^2)^{\beta}, x\in[-1,1],\;\beta>0$ is approximated by
\begin{equation*}
\begin{aligned}
\log\mathbb{P}(b,\beta)=&-\frac{b^2}{2}-\frac{\log b}{4}+\log\left[G(\tfrac{1}{2})^2\sqrt{\pi}\right]+\log \left[\frac{G^4(\beta+1)}{(2\pi)^{\beta}}\right]\\
&+\beta(\beta-1)-(2\beta+1)\log\Gamma(\beta)\\
&+\frac{1}{32\,b^{2}}+\frac{5}{128\,b^{4}}+\frac{131}{768\,b^{6}}+O\left(b^{-8}\right), \qquad b\rightarrow\infty.
\end{aligned}
\end{equation*}
\end{theorem}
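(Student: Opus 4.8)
The plan is to mirror the argument that produced the Gaussian formula \eqref{limitprobG}, replacing the Laguerre input by the shifted-Jacobi input of the preceding section. The starting point is the interval-doubling factorization \eqref{HankelapartJ}, which writes $D_n(a,\beta)$ as $\widetilde D_{k}\widehat D_{k}$ for $n=2k$ (and $\widetilde D_{k+1}\widehat D_k$ for $n=2k+1$), where $\widetilde D_m(a,\beta)$ and $\widehat D_m(a,\beta)$ are the Hankel determinants generated by $t^{-1/2}(1-t)^\beta$ and $t^{1/2}(1-t)^\beta$ on $[a^2,1]$. Dividing by $D_n(0,\beta)$, which factors in the same way at $a=0$, I would write
\begin{gather*}
\mathbb{P}(b,\beta)=\lim_{k\to\infty}\frac{\widetilde D_{k}(a,\beta)}{\widetilde D_{k}(0,\beta)}\cdot\lim_{k\to\infty}\frac{\widehat D_{k}(a,\beta)}{\widehat D_{k}(0,\beta)}.
\end{gather*}
Because $t^{\pm1/2}(1-t)^\beta$ are exactly the shifted Jacobi weights $x^\alpha(1-x)^\beta$ with $\alpha=\pm\tfrac12$, each ratio is literally the smallest-eigenvalue distribution $\mathbb{P}(a^2,\pm\tfrac12,\beta,k)$ studied above.

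Next I carry out the double scaling. The two subensembles have size $k$, so their natural scaled variable is $s=4k^2 t=4k^2a^2$; since $b=na$ with $n=2k+O(1)$, one has $b^2\sim 4k^2a^2=s$, and both surviving limits are controlled by $s=b^2$. Invoking \eqref{ProblimJ} gives the clean identity
\begin{gather*}
\mathbb{P}(b,\beta)=\mathbb{P}\!\left(b^2,-\tfrac12,\beta\right)\cdot\mathbb{P}\!\left(b^2,\tfrac12,\beta\right).
\end{gather*}
Taking logarithms and inserting the expansion \eqref{probsJ} with $\alpha=-\tfrac12$ and $\alpha=\tfrac12$ reduces everything to adding two explicit series. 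The $\alpha\sqrt s$ terms cancel (being odd in $\alpha$); the two $-\tfrac s4$ terms add to $-\tfrac{b^2}{2}$; the two $-\tfrac{\alpha^2}{4}\log s$ terms give $-\tfrac18\log s=-\tfrac14\log b$; and since the $1/s$-corrections in \eqref{probsJ} are $\beta$-independent and only even powers of $\alpha$ survive the sum, they reproduce verbatim the coefficients $\tfrac{1}{32b^2}+\tfrac{5}{128b^4}+\tfrac{131}{768b^6}$ already obtained in the Gaussian case.

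What remains is to assemble the constant $c_2(-\tfrac12,\beta)+c_2(\tfrac12,\beta)$, and this is the one place needing care. Using the closed form of $c_2(\alpha,\beta)$ established above, the two copies of $\tfrac{\beta(\beta-1)}{2}-(\beta+\tfrac12)\log\Gamma(\beta)$ double to $\beta(\beta-1)-(2\beta+1)\log\Gamma(\beta)$; the powers of $2\pi$ combine to $(2\pi)^{-\beta}$; and the Barnes-$G$ factors produce $G(\tfrac12)\,G(\tfrac32)\,G^4(\beta+1)$. The sole genuine simplification is $G(\tfrac32)=\Gamma(\tfrac12)G(\tfrac12)=\sqrt{\pi}\,G(\tfrac12)$, which collapses $G(\tfrac12)G(\tfrac32)$ into $G(\tfrac12)^2\sqrt{\pi}$ and yields the stated constant $\log[G(\tfrac12)^2\sqrt{\pi}]+\log[G^4(\beta+1)/(2\pi)^\beta]$. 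The main obstacle, modest as it is, lies in the legitimacy of the limit interchange: one must check that the size-$(k+1)$ subensemble arising in the odd case, and the $k$-dependent scalings $s=4k^2a^2$ and $s=4(k+1)^2a^2$, all converge to the single regime $s=b^2$, so that the finite-$n$ factorization passes cleanly to the scaled product — precisely the uniformity underlying \eqref{ProblimJ}.
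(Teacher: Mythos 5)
Your proposal is correct and follows essentially the same route as the paper: the interval-doubling factorization \eqref{HankelapartJ}, the identification of the two factors with $\mathbb{P}(a^2,\pm\tfrac12,\beta,k)$, the double scaling $s=4k^2a^2\sim b^2$ leading to $\mathbb{P}(b,\beta)=\mathbb{P}(b^2,-\tfrac12,\beta)\,\mathbb{P}(b^2,\tfrac12,\beta)$, and the summation of \eqref{probsJ} at $\alpha=\pm\tfrac12$ together with $G(\tfrac32)=\Gamma(\tfrac12)G(\tfrac12)$ for the constant. Your closing remark on the uniformity of the scaled limits in the odd-$n$ case is a point the paper passes over silently, but it does not change the argument.
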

\begin{proof}
From \eqref{probnDna0J} and \eqref{HankelapartJ}, we find
\begin{align*}
\mathbb{P}(b,\beta)=&\lim_{n\rightarrow\infty}\mathbb{P}\left(a,\beta,n\right)=\lim_{n\rightarrow\infty}\frac{D_n\left(a,\beta\right)}{D_n(0,\beta)}\\
=&\lim_{k\rightarrow\infty}\frac{\widetilde{D}_{k}\left(a,\beta\right)}{\widetilde{D}_{k}(0,\beta)}
\cdot\lim_{k\rightarrow\infty}\frac{\widehat{D}_{k}\left(a,\beta\right)}{\widehat{D}_{k}(0,\beta)},
\end{align*}
so that, in view of \eqref{PnDnJ} and noting that $b^2=n^2a^2\sim4k^2a^2$ as $k\rightarrow\infty$,
\begin{align*}
\mathbb{P}(b,\beta)=&\lim_{k\rightarrow\infty}\mathbb{P}(a^2,-\tfrac{1}{2},\beta,k)\cdot\lim_{k\rightarrow\infty}\mathbb{P}(a^2,\tfrac{1}{2},\beta,k)\\
=&\mathbb{P}(b^2,-\tfrac{1}{2},\beta)\cdot\mathbb{P}(b^2,\tfrac{1}{2},\beta),
\end{align*}
Here $\mathbb{P}(a^2,\alpha,\beta, k)$ is the probability that all the eigenvalues of $k\times k$ Hermitian matrices with weight $x^{\alpha}(1-x)^{\beta}$ are greater than $a^2$ and $\mathbb{P}\left(b^2,\alpha,\beta\right)$ defined by \eqref{ProblimJ} is the scaled limiting probability of $\mathbb{P}(a^2,\alpha,\beta, k)$, i.e. $\mathbb{P}\left(b^2,\alpha,\beta\right)=\lim_{k\rightarrow\infty}\mathbb{P}\left(\frac{b^2}{4k^2},\alpha,\beta,k\right)$.

Therefore, according to \eqref{probsJ}, we establish
\begin{align*}
\log\mathbb{P}(b,\beta)=&\log\mathbb{P}(b^2,-\tfrac{1}{2},\beta)+\log\mathbb{P}(b^2,\tfrac{1}{2},\beta)\\
=&c_2\left(-\tfrac{1}{2},\beta\right)+c_2(\tfrac{1}{2},\beta)-\frac{b^2}{2}-\alpha^2\log b+\frac{\alpha
   ^2}{8}b^{-2}\\
&+\left(\frac{\alpha^4}{16}+\frac{9\alpha^2}{64}\right)b^{-4}
+\left(\frac{\alpha^6}{24}+\frac{15\alpha^4}{32}+\frac{9\alpha^2}{16}\right)b^{-6}\\
&+O\left(b^{-8}\right),\qquad b\rightarrow\infty.
\end{align*}
Here the constant term reads
\begin{align*}
c_2\left(-\tfrac{1}{2},\beta\right)&+c_2(\tfrac{1}{2},\beta)\\
=&\log\left[G(\tfrac{1}{2})^2\sqrt{\pi}\right]+\log \left[\frac{G^4(\beta+1)}{(2\pi)^{\beta}}\right]+\beta(\beta-1)-(2\beta+1)\log\Gamma(\beta),
\end{align*}
where the properties $G(\tfrac{3}{2})=G(\tfrac{1}{2})\Gamma(\tfrac{1}{2})$ and $\Gamma(\frac{1}{2})=\sqrt{\pi}$ are used.
The proof is completed.
\end{proof}
{\section {Conclusion}}
We obtained in this paper the constant term in the asymptotic expansion of the large gap probability of the Gaussian unitary ensembles;
reproducing the Widom-Dyson constant. This is done through the study of the smallest eigenvalue distribution of the Laguerre
unitary ensembles, and specializing $\alpha=\pm 1/2, $ in the constant obtained. Finally, we derive the asymptotic expansion of the smallest eigenvalue
distribution of the Jacobi unitary ensembles $(x^{\alpha}(1-x)^{\beta}$,\;\;$x\in(0,1),\;\;\alpha>-1,\;\beta>0)$. In this situation, although the double
scaled $\sigma$ equation is identical with the LUE $\sigma-$ equation, however the constant in the $JUB$ problem depends on $\beta$.

\section*{Appendix: Some Relevant Integral Identities}
We list here some integrals, which are relevant to our derivation and can be found in \cite{ChenHaqMckay2013}, \cite{ChenMcKay2012} and \cite{GradshteynRyzhik2007}. The basic assumption is that $0<a<b$.
\begin{align*}
\int_{a}^b\frac{dx}{\sqrt{(b-x)(x-a)}}=&\pi,\\
\int_{a}^b\frac{x\; dx}{\sqrt{(b-x)(x-a)}}=&\frac{(a+b)\pi}{2},\\
\int_{a}^b\frac{dx}{x\sqrt{(b-x)(x-a)}}=&\frac{\pi}{\sqrt{ab}},\\
\int_{a}^b\frac{dx}{x^2\sqrt{(b-x)(x-a)}}=&\frac{(a+b)\pi}{2(ab)^{\frac{3}{2}}},\\
\int_{a}^b\frac{dx}{(1-x)\sqrt{(b-x)(x-a)}}=&\frac{\pi}{\sqrt{(1-a)(1-b)}},\;\;(b<1).
\end{align*}
\begin{align*}
\int_{a}^b\frac{\log (1-x)\;dx}{\sqrt{(b-x)(x-a)}}=&2\pi\log\left[\frac{\sqrt{1-a}+\sqrt{1-b}}{2}\right],\;\;(b<1)\\
\int_{a}^b\frac{\log (1-x)\;dx}{x\sqrt{(b-x)(x-a)}}=&\frac{\pi}{\sqrt{ab}}\log\left[\tfrac{1-\left(\sqrt{ab}-\sqrt{(1-a)(1-b)}\right)^2}{(\sqrt{a}+\sqrt{b})^2}\right],
\;\;b<1\\
\int_{a}^b\frac{\log x\;dx}{\sqrt{(b-x)(x-a)}}=&2\pi\log\left[\frac{\sqrt{a}+\sqrt{b}}{2}\right],\\
\int_{a}^b\frac{\log x\;dx}{x\sqrt{(b-x)(x-a)}}=&\frac{2\pi}{\sqrt{ab}}
\log\left[\frac{2\sqrt{ab}}{\sqrt{a}+\sqrt{b}}\right],
\end{align*}
\begin{align*}
\int_{a}^b\frac{\log x\;dx}{(x-1)\sqrt{(b-x)(x-a)}}=&\pi\frac{\log\left[\tfrac{(\sqrt{1-a}+\sqrt{1-b})^2}{1-\left(\sqrt{ab}-\sqrt{(1-a)(1-b)}\right)^2}\right]}{\sqrt{(1-a)(1-b)}},\qquad (b<1),\\
\int_{a}^b\frac{\log (1-x)\;dx}{(x-1)\sqrt{(b-x)(x-a)}}=&2\pi\frac{\log \left[\frac{1}{2\sqrt{1-a}}+\frac{1}{2\sqrt{1-b}}\right]}{\sqrt{(1-a)(1-b)}},\qquad (b<1).
\end{align*}

\section*{Acknowledgments}
The authors gratefully acknowledge the generous support of
Macau Science and Technology Development Fund under the grant numbers FDCT 130/2014/A3, and FDCT 023/2017/A1,
and the University of Macau through
MYRG2014-00011-FST, MYRG2014-00004-FST  and  the National Science Foundation of China under Project No.11671095.

\end{document}